\documentclass{CSML}
\pdfoutput=1
\synctex=1

\usepackage{lastpage}

\lmcsheading{}{1--\pageref{LastPage}}{}{}%
{Mar.~16,~2017}{Jan.~16, 2018}{}

\keywords{Negotiations, workflows, soundness, verification, concurrency.}

\usepackage{hyperref}
\hypersetup{hidelinks}
\usepackage{negotiations}
\usepackage{amssymb}
\usepackage{eucal}
\usepackage{graphicx}
\usepackage{amsmath}
\usepackage{xcolor}
\usepackage{stmaryrd}
\usepackage{rotating}
\usepackage{xspace}
\usepackage{eepic}
\usepackage{epsfig}
\usepackage{latexsym} 
\usepackage{exscale}
\usepackage{verbatim}


\AtBeginDocument{}


\newcommand{\es}{\emptyset}
\newcommand{\incl}{\subseteq}

\newcommand{\fleq}{\preccurlyeq}

\newcommand{\all}{\forall}

\newcommand{\fin}{\operatorname{\mathrm{fin}}}


\newcommand{\sq}[1]{[#1]}
\newcommand{\di}[1]{\langle #1 \rangle}
\newcommand{\sqq}[1]{\sq{\cdot }}
\newcommand{\ddi}[1]{\di{\cdot }}
\newcommand{\set}[1]{\{#1\}}

\def\sub[#1/#2]{[#1/#2]}

\newcommand{\act}[1]{\xrightarrow{#1}}       


\renewcommand{\d}{\delta}

\newcommand{\n}{\nu}

\newcommand{\s}{\sigma}


\renewcommand{\S}{\Sigma}


\newcommand{\Bb}{\mathcal{B}}
\newcommand{\Cc}{\mathcal{C}}

\newcommand{\Ff}{\mathcal{F}}

\newcommand{\Nn}{\mathcal{N}}
\newcommand{\Oo}{\mathcal{O}}
\newcommand{\Pp}{\mathcal{P}}




\newcommand{\NLOGSPACE}{\text{\sc Nlogspace}}
\newcommand{\PSPACE}{\text{\sc Pspace}}
\newcommand{\PTIME}{\text{\sc Ptime}}
\newcommand{\NP}{\text{\sc NP}}
\newcommand{\coNP}{\text{\sc coNP}}

\newcommand{\struct}[1]{\langle #1 \rangle}

\usepackage[disable,textsize=small]{todonotes}

\newcommand{\igw}[1]{\todo[color=green!50]{\small #1}}
\newcommand{\igwin}[1]{\todo[color=green!30,inline]{#1}}
\newcommand{\anca}[2][]{\todo[color=blue!30,#1]{anca: #2}}


\newcommand{\SKIP}[1]{#1}

\newcommand{\Nnd}{\Nn_D}
\newcommand{\GNB}{G(\Nn,B)}

\newcommand{\smax}{\s_{max}}
\newcommand{\leqN}{\fleq_{\Nn}}
\newcommand{\leN}{\prec_{\Nn}}

\newcommand{\leNd}{\prec_{\Nn_D}}
\newcommand{\init}{\mathit{init}}
\renewcommand{\fin}{\mathit{fin}}
\newcommand{\Cinit}{C_{\init}}
\newcommand{\Cfin}{C_{\fin}}

\newcommand{\N}{\mathcal N}

\newcommand{\dom}{\mathit{dom}}
\newcommand{\out}{\mathit{out}}
\newcommand{\Proc}{\mathit{Proc}}

\newcommand{\rd}{\mathit{read}}
\newcommand{\wrt}{\mathit{write}}
\renewcommand{\all}{\mathit{alloc}}
\newcommand{\deall}{\mathit{dealloc}}

\newcommand{\para}{\parallel}

\newcommand{\tr}{\mathit{tr}}

\begin{document}

\title[Soundness in negotiations]{Soundness in negotiations$^*$}
\titlecomment{{\lsuper*}A preliminary version of this paper appeared
  in \cite{ekmw16}. \\
This work was carried out while Anca Muscholl 
was on leave at the Institute of Advanced Studies of the 
Technical University of Munich, supported by a Hans-Fischer Fellowship, 
and Denis Kuperberg had a postdoctoral position at the same institute.\\
The work was also supported by the 
DFG Project ``Negotiations: A Model for Tractable Concurrency''.}

\author[J.~Esparza]{Javier Esparza\rsuper{a}}
\address{\lsuper{a}Technical University of Munich}
\email{esparza@in.tum.de}
\author[D.~Kuperberg]{Denis Kuperberg\rsuper{b}}
\address{\lsuper{b}\'Ecole Normale Sup\'erieure de Lyon}
\email{denis.kuperberg@ens-lyon.fr}
\author[A.~Muscholl]{Anca Muscholl\rsuper{c}}
\address{\lsuper{c}LaBRI, University of Bordeaux}
\email{\{anca,igw\}@labri.fr}
\author[I.~Walukiewicz]{Igor Walukiewicz\rsuper{c}}

\begin{abstract}
Negotiations are a formalism for describing multiparty distributed
cooperation. Alternatively, they can be seen as a model of
concurrency with synchronized choice as communication
primitive. 

Well-designed negotiations must be sound, meaning that,
whatever its current state, the negotiation can still be
completed.  In earlier work, Esparza and Desel have shown that deciding
soundness of a negotiation is \PSPACE-complete, and in \PTIME\ if the 
negotiation is deterministic. They have also extended their 
polynomial soundness algorithm to an intermediate class of acyclic, 
non-deterministic negotiations. However, they did not analyze the 
runtime of the extended algorithm, and also left open the complexity of the soundness problem for the intermediate class.

In the first part of this paper we revisit the soundness problem for
deterministic negotiations, and show that it is \NLOGSPACE-complete,
improving on the earlier algorithm, which requires linear space.  

In the second part we answer the question left
open by Esparza and Desel. We prove that the soundness problem can be
solved in polynomial time for acyclic, weakly non-deterministic
negotiations, a more general class than the one considered by them. 

In the third and final part, we show that the techniques developed in
the first two parts of the paper can be applied to analysis problems 
other than soundness, including the problem of detecting race conditions, and 
several classical static analysis problems. More specifically, we show that, 
while these problems are intractable for arbitrary acyclic 
deterministic negotiations, they become tractable in the sound case. 
So soundness is not only a desirable behavioral property in itself, but also helps to analyze other properties. 
\end{abstract}

\maketitle

\section{Introduction}
\label{sec:intro}

A multiparty atomic negotiation is an event in which several
processes (agents) synchronize in order to select one out of a number of
possible results. In \cite{negI} Esparza and Desel introduced
\emph{negotiation diagrams}, or just \emph{negotiations}, a model of concurrency with multiparty atomic negotiation as interaction primitive. 
A  negotiation diagram describes a workflow of ``atomic''
negotiations. After an atomic negotiation concludes with the selection
of a result, the workflow determines the set of atomic negotiations
each agent is ready to participate in next. 

Negotiation diagrams are closely related to workflow Petri nets, a very successful formalism for the description of business processes, 
and a back-end for graphical notations like BPMN (Business Process Modeling Notation), 
EPC (Event-driven Process Chain), or UML Activity Diagrams (see e.g.\ \cite{aalst,van2004workflow}).  In a nutshell, negotiation diagrams are workflow Petri nets that can be decomposed into communicating 
sequential Petri nets, a feature that makes them more 
amenable to theoretical study, while the translation into workflow nets 
(described in \cite{DBLP:journals/topnoc/DeselE16}) allows to transfer results and algorithms to business process applications. 

The most prominent analysis problem for the negotiation model is
checking \emph{soundness}, a notion originally introduced for workflow
Petri nets. Loosely speaking, a negotiation is sound if from every reachable configuration 
there is an execution leading to proper termination of the negotiation. 
In \cite{negI} it is shown that the
soundness problem is \PSPACE-complete for non-deterministic
negotiations and \coNP-complete for acyclic non-deterministic
negotiations.  For this reason, and in search
of a tractable class, \cite{negI} introduces the class of
\emph{deterministic negotiations}. In deterministic negotiations all
agents are deterministic, meaning that they are never ready to engage
in more than one atomic negotiation per result (similarly to 
a deterministic automaton, that for each action can 
move to at most one state). In \cite{negI} the soundness problem is investigated
for acyclic negotiations. The main results  are a
polynomial time algorithm for checking soundness of deterministic negotiations, 
and an extension of the algorithm to the more expressive class of weakly
deterministic negotiations. However, whether the extended
algorithm is polynomial or not was left open.  In \cite{negII} the polynomial result 
for acyclic deterministic negotiations is extended to the cyclic case.

In this paper we continue the line of research initiated in \cite{negI,negII},
and present three contributions. 
  
In the first contribution we revisit the soundness problem for
deterministic negotiations.  It should be noted that the notion of soundness in \cite{negI} has one
  more requirement (which makes the soundness problem for acyclic
  negotiations \coNP-hard and in DP). We show here that for
  deterministic, possibly cyclic, negotiations this second
  requirement is unnecessary, modulo a weak assumption saying that
  every atomic negotiation is reachable from the initial negotiation
  by a local path in the graph on the negotiation.
\igwin{added explanation}
We then identify \emph{anti-patterns}, i.e., structures of the graph of a negotiation, 
and show that a deterministic negotiation is unsound if{}f it exhibits at least one of them. 
As an easy consequence of this theorem, we obtain an \NLOGSPACE\ algorithm for checking 
soundness, whereas the algorithm of \cite{negII} requires linear space. Since soundness of deterministic negotiations is easily shown to be \NLOGSPACE-hard, our result settles the 
complexity of the soundness problem for deterministic negotiations.

In the second contribution we answer the question left
open in \cite{negI}. We prove that the soundness problem can be
solved in polynomial time for acyclic, weakly non-deterministic
negotiations, a class that is more general than the one considered in
\cite{negI}\footnote{The weakly
deterministic negotiations of \cite{negI} are called \emph{very
 weakly non-deterministic negotiations} in this paper. As the name indicates,
every negotiation in this class is also weakly
non-deterministic. \anca[inline]{add the assumption?}}. The result
is based on a game-theoretic solution to the \emph{omitting problem}, an 
analysis problem of independent interest.
Further, we show that if we leave out one of the two
assumptions, acyclicity or weak non-determinism, then the problem
becomes \coNP-complete\footnote{We show that \coNP-hardness holds even for a very mild 
relaxation of acyclicity.}. These results set a limit to the class of 
negotiations with a polynomial soundness problems, but also admit a positive interpretation. 
Indeed, the soundness problem for arbitrary negotiations is \PSPACE-complete \cite{negI}, 
and so of higher complexity (under the usual assumption 
$\PTIME \subset \NP \cup \coNP \subset \PSPACE$).
  
In the third and final contribution, we show that the techniques developed in
the first two parts of the paper, namely anti-patterns and our game-theoretic
solution to the omitting problem, can be applied to analysis problems other than 
soundness. More specifically, we show that, while these problems are intractable 
for arbitrary deterministic negotiations, they become tractable in the sound case. 
So soundness is not only a desirable behavioral property in itself, but also helps to analyze other properties. The first problem we consider is the 
existence of \emph{races}, i.e., executions in which two
given atomic negotiations are concurrently enabled. We show that for acyclic deterministic negotiations the problem is in \NLOGSPACE. Then we analyze several classical 
program analysis problems for negotiations that manipulate data, 
for example whether every value written into a variable is guaranteed
to be read.
Such problems have been studied for 
workflow nets in \cite{DBLP:conf/caise/TrckaAS09,DBLP:journals/is/SidorovaST11},
and exponential algorithms have been proposed. We show that for acyclic sound deterministic 
negotiations the problems can be solved in polynomial time. 

\subsubsection*{Related formalisms and related work.} 
The connection between negotiations and Petri nets is
studied in detail in \cite{DBLP:journals/topnoc/DeselE16}.  The
connection is particularly close between deterministic negotiations
and free-choice workflow nets. The complexity of the soundness problem
for workflow nets has been studied in several papers \cite{DBLP:conf/bpm/Aalst00,DBLP:journals/fac/AalstHHSVVW11,DBLP:journals/tsc/Liu14a,DBLP:journals/tsmc/TipleaBC15},
and in particular in \cite{DBLP:conf/tacas/FavreVM16} soundness 
of free-choice workflow nets is also characterized in terms of
anti-patterns, which can be used to explain why a given 
workflow net is unsound. In the conclusions we discuss the consequences of our results for 
the analysis of soundness in workflow Petri nets in detail.

As a process-based concurrent model, negotiations can be compared
with another well-studied model for distributed computation,
namely Zielonka automata~\cite{zie87,DR95,mus15}.
Such an automaton is a parallel composition of finite transition
systems with synchronization on common actions.  
The important point is that a synchronization in Zielonka automata involves exchange of information
between states of agents: the result of the synchronization depends on
the states of all the components taking part in it. 
Zielonka automata have the same expressive power as arbitrary,
possibly nondeterministic negotiations. Deterministic negotiations 
correspond to a subclass that does not seem to have been studied yet, and
for which verification becomes considerably easier. For example, the question
whether some local state occurs in some execution is \PSPACE-complete
for ``sound'' Zielonka automata, while it can be answered in polynomial
time for sound deterministic negotiations.

A somewhat similar graphical formalism are message sequence
charts/graphs, used to describe asynchronous communication. Questions like
non-emptiness of intersection are in general undecidable for this model, even
assuming that communication buffers are bounded. Subclasses of message
sequence graphs with decidable model-checking problem were proposed, but 
the complexity is \PSPACE-complete~\cite{GKM06}.

\subsubsection*{Overview.} Section~\ref{sec:def} introduces definitions and notations.
Section~\ref{sec:det} revisits the soundness problem for deterministic
negotiations and is new compared to the conference version~\cite{ekmw16}. Section \ref{sec:beyond-det-tract} shows that soundness of acyclic weakly 
non-deterministic negotiations can be decided in polynomial time; the first part
of the section solves the omitting problem, and the second part applies the solution to
the soundness problem. Section \ref{sec:beyond-det-intract} proves that dropping 
acyclicity or weak non-determinism makes the soundness problem intractable. 
Section \ref{sec:beyond-sound} gives polynomial algorithms for the race problem
and the static analysis problems of sound deterministic negotiations.
Section \ref{sec:conc} presents our conclusions.


\section{Negotiations}\label{sec:def}

A~\emph{negotiation} $\Nn$ is a tuple
$\struct{\Proc,N,\dom,R,\d}$, where 
$\Proc$ is a finite set of \emph{processes} (or agents) that can participate in negotiations, and $N$ is a finite set of \emph{nodes} (or \emph{atomic negotiations}) where the processes can synchronize.
The function $\dom:N\to \Pp(\Proc)$ associates to every atomic
negotiation $n \in N$ the (non-empty) set $\dom(n)$ of processes participating in it
(\emph{domain} of $n$). 
Negotiations come equipped with two distinguished initial and final atomic
negotiations $n_\init$ and $n_\fin$ in which {\em all} processes in
$\Proc$ participate. Nodes are denoted as $m, n, \ldots$, and processes as 
$p, q,\ldots$, possibly with indices. 

The set of possible results of atomic negotiations is denoted $R$,
and we use $a,b,\ldots$ to range over its elements. Every atomic negotiation $n\in N$ has its set of possible results $\out(n)\subseteq R$. 
We assume that every atomic negotiation (except possibly for $n_\fin$) has at least one result.
(This is a slight change with respect to the definitions of \cite{negI,negII},
due to the fact that the final result is not relevant for the
soundness question.) 
The control flow
in a negotiation is determined by a partial transition function
$\d:N\times R\times \Proc\act{\cdot} \Pp(N)$, telling that after the
result $a \in\out(n)$ of an atomic negotiation $n$, process $p\in\dom(n)$ is
ready to participate in any of the negotiations from the set
$\d(n,a,p)$.  
So for every $n'\in\d(n,a,p)$ we have $p\in
\dom(n')\cap\dom(n)$. For every $n$, $a\in \out(n)$ and
$p\in\dom(n)$ the result $\d(n,a,p)$ has to be defined and non-empty. 
So all
processes involved in an atomic negotiation should be ready for all
its possible results. Observe that atomic negotiations may
have one single participant process, and/or have one single result.

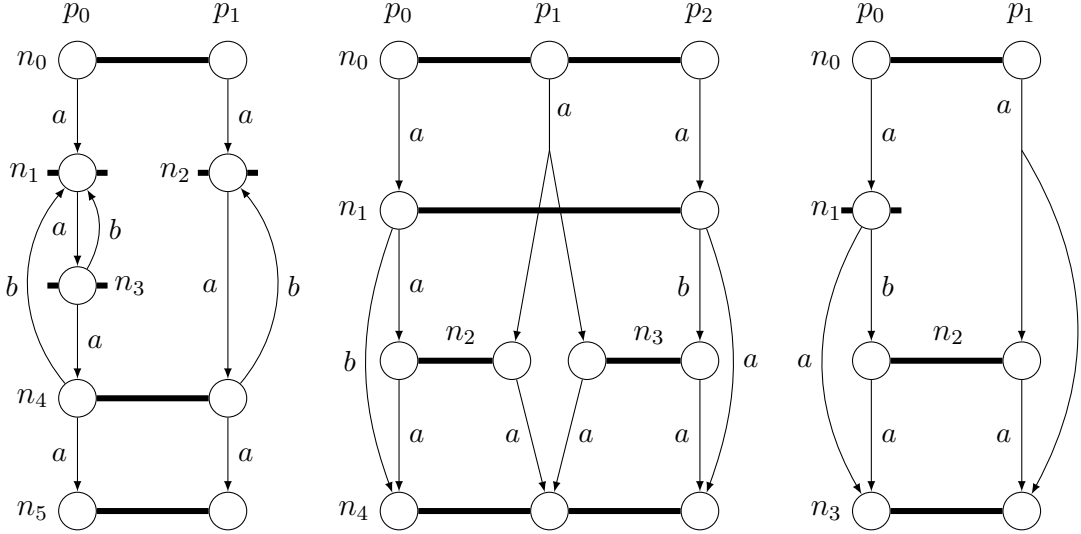
\begin{figure}[ht]
\centerline{
\scalebox{1.0}{\begin{tikzpicture}
\nego[ports=2,id=n0,spacing=2]{0,0}
\node[left = 0cm of n0_P0, font=\large] {$n_0$};
\nego[ports=1,id=n1, spacing=2]{0,-1.5}
\node[left = 0.1cm of n1_P0, font=\large] {$n_1$};
\nego[ports=1,id=n2]{2,-1.5}
\node[left = 0.1cm of n2_P0, font=\large] {$n_2$};
\nego[ports=1,id=n3]{0,-3}
\node[right = 0.1cm of n3_P0, font=\large] {$n_3$};
\nego[ports=2,id=n4, spacing=2]{0,-4.5}
\node[left = 0cm of n4_P0, font=\large] {$n_4$};
\nego[ports=2,id=n5, spacing=2]{0,-6}
\node[left = 0cm of n5_P0, font=\large] {$n_5$};

\node[above = 0.1cm of n0_P0, font=\large] {$p_0$};
\node[above = 0.1cm of n0_P1, font=\large] {$p_1$};

\pgfsetarrowsend{latex}
\draw (n0_P0) edge node[left=0] {$a$} (n1_P0);
\draw (n0_P1) edge node[right=0] {$a$} (n2_P0);
\draw (n1_P0) edge node[left=0] {$a$} (n3_P0);
\draw (n2_P0) edge node[left=0] {$a$} (n4_P1);
\draw (n3_P0) edge node[right=0] {$a$} (n4_P0);
\draw (n3_P0) edge[bend right=30] node[right=0] {$b$} (n1_P0);
\draw (n4_P0) edge[bend left=40] node[left=0] {$b$} (n1_P0);
\draw (n4_P1) edge[bend right=40] node[right=0] {$b$} (n2_P0);
\draw (n4_P0) edge node[left=0] {$a$} (n5_P0);
\draw (n4_P1) edge node[right=0] {$a$} (n5_P1);

\end{tikzpicture}}
\scalebox{1.0}{\begin{tikzpicture}
\nego[ports=3,id=n0,spacing=2]{0,0}
\node[left = 0cm of n0_P0, font=\large] {$n_0$};
\nego[ports=2,id=n05,spacing=4]{0,-2}
\node[left = 0cm of n05_P0, font=\large] {$n_1$};
\nego[ports=2,id=n1, spacing=1.5]{0,-4}
\node[above right = -0.1cm and 0.3cm of n1_P0, font=\large] {$n_2$};
\nego[ports=2,id=n2,spacing=1.5]{2.5,-4}
\node[above right = -0.1cm and 0.3cm of n2_P0, font=\large] {$n_3$};
\nego[ports=3,id=n3, spacing=2]{0,-6.0}
\node[left = 0cm of n3_P0, font=\large] {$n_4$};

\node[above = 0.1cm of n0_P0, font=\large] {$p_0$};
\node[above = 0.1cm of n0_P1, font=\large] {$p_1$};
\node[above = 0.1cm of n0_P2, font=\large] {$p_2$};

\pgfsetarrowsend{latex}
\draw (n0_P0) edge node[right=0] {$a$} (n05_P0);
\draw (n0_P2) edge node[left=0] {$a$} (n05_P1);
\draw (n05_P0) edge node[right=0] {$a$} (n1_P0);
\draw (n05_P1) edge node[left=0] {$b$} (n2_P1);
\draw (n1_P0) edge node[right=0] {$a$} (n3_P0);
\draw (n1_P1) edge node[left=0] {$a$} (n3_P1);
\draw (n2_P0) edge node[right=0] {$a$} (n3_P1);
\draw (n2_P1) edge node[left=0] {$a$} (n3_P2);
\draw (n05_P0) edge[bend right = 20] node[left=0] {$b$} (n3_P0);
\draw (n05_P1) edge[bend left = 20] node[right=0] {$a$} (n3_P2);
\draw (n0_P1) to +(0,-1.2) -- (n1_P1);
\labelEdge[start=n0_P1,end={2,-1.5},dist=0.75,shift={0.2,0}]{$a$}
\draw (n0_P1) to +(0,-1.2) -- (n2_P0);

\end{tikzpicture}}
\scalebox{1.0}{\begin{tikzpicture}
\nego[ports=2,id=n0,spacing=2]{0,0}
\node[left = 0cm of n0_P0, font=\large] {$n_0$};
\nego[ports=1,id=n1, spacing=2]{0,-2}
\node[left = 0cm of n1_P0, font=\large] {$n_1$};
\nego[ports=2,id=n2,spacing=2]{0,-4}
\node[above right = -0.1cm and 0.5cm of n2_P0, font=\large] {$n_2$};
\nego[ports=2,id=nf,spacing=2]{0,-6}
\node[left = 0cm of nf_P0, font=\large] {$n_3$};

\node[above = 0.1cm of n0_P0, font=\large] {$p_0$};
\node[above = 0.1cm of n0_P1, font=\large] {$p_1$};

\pgfsetarrowsend{latex}
\draw (n0_P0) edge node[right=0] {$a$} (n1_P0);
\draw (n0_P1) to +(0,-1.2) -- (n2_P1);
\labelEdge[start=n0_P1,end={0,-1.5},dist=0.7,shift={0.7,0}]{$a$}
\path (n0_P1) -- +(0,-1.2) edge[bend left=30] (nf_P1);

\draw (n1_P0) edge node[right=0] {$b$} (n2_P0);
\draw (n1_P0) edge[bend right=30] node[left=0] {$a$} (nf_P0);
\draw (n2_P0) edge node[right=0] {$a$} (nf_P0);
\draw (n2_P1) edge node[left=0] {$a$} (nf_P1);
\end{tikzpicture}}}
\caption{Three negotiations. In all of them, $n_0$ is the initial node and the final node is the bottom one.}
\label{fig:example1}
\end{figure}

Negotiations admit a graphical representation, cf.\
Figure~\ref{fig:example1}. A node (atomic negotiation) $n$ is
represented as a black bar with a white circles, called {\em ports},
one for each process in $\dom(n)$. So, for example, the negotiation on
the left of Figure~\ref{fig:example1} has six nodes
$n_0, \ldots, n_5$. Nodes $n_0$, $n_4$, and $n_5$ have two ports each,
while nodes $n_1$, $n_2$, and $n_3$ only have one port. An entry
$\delta(n, a, p) = \{n_1, \ldots, n_k\}$ is represented by a
hyper-arc, labeled by $a$, that connects the port of process $p$ in
$n$ with the ports of process $p$ in $n_1, \ldots, n_k$. In
particular, for the negotiation on the left of
Figure~\ref{fig:example1} we have $\Proc = \{ p_0, p_1 \}$,
$N = \{n_0, \ldots, n_5\}$, $R = \{a, b\}$, and, for example
$\dom(n_1)=\{p_0\}$, $\dom(n_4) = \{p_0, p_1\}$,
$\d(n_4,b,p_0)= \{n_1\}$ and $\d(n_4,b,p_1)= \{n_2\}$.  This
negotiation does not contain any proper hyper-arcs, but the second one
does; there we have $\d(n_0,a,p_1) = \{n_2, n_3\}$. In general two nodes
may be connected by several arcs, carrying different process/result
labels. For instance, node $n_4$ in the negotiation on the left of
Figure~\ref{fig:example1} has two arcs to $n_5$.

\subsection*{Configurations} A \emph{configuration} of a negotiation is a function 
$C \colon \Proc\to \Pp(N)$ mapping each process $p$ to the (non-empty) set of atomic 
negotiations in which
$p$ is ready to engage. The \emph{initial and final
configurations} $\Cinit$, $\Cfin$ are given by $\Cinit(p)=\set{n_\init}$
and $\Cfin(p)=\set{n_\fin}$ for all $p\in \Proc$. 
An atomic negotiation $n$ is \emph{enabled} in a 
configuration $C$ if $n\in C(p)$ for every $p\in \dom(n)$, that is, if all 
processes that participate in $n$ are ready to proceed with it. A configuration 
is a \emph{deadlock} if no atomic negotiation is enabled in it. If an atomic
negotiation $n$ is enabled in $C$, and $a$ is a result
of $n$, then we say that \emph{$(n,a)$ can be executed}, and its 
execution produces a new configuration $C'$ given by $C'(p)=\d(n,a,p)$ 
for $p\in\dom(n)$ and $C'(p)=C(p)$ for $p\not \in \dom(n)$.  We denote this by
$C\act{(n,a)}C'$. For example, in the negotiation on the right of Figure~\ref{fig:example1} we have 
$C\act{(n_0,a)}C'$ for $C(p_0) = \{n_0\}= C(p_1)$ and
$C'(p_0) = \{n_1\}, C'(p_1)=\{n_2, n_3\}$. Observe that in all three negotiations,
the final configuration cannot be executed, since the final node has
no result. However, by definition the final node is enabled in the
final configuration, and so the final configuration is not a deadlock.\footnote{While an enabled atomic negotiation that cannot be executed is a bit artificial, we adopt it because 
it allows to separate deadlock and termination.}

\subsection*{Runs} A~\emph{run} of a negotiation $\Nn$ from a configuration $C_1$ is a finite or infinite sequence
$w=(n_1,a_1)(n_2,a_2)\dots$ such that there are configurations
$C_2,C_3,\dots$ with 
\begin{equation*}
  C_1\act{(n_1,a_1)} C_2\act{(n_2,a_2)} C_3\cdots
\end{equation*}
We denote this by $C_1\act{w}$, or $C_1\act{w}C_k$ if the sequence
is finite and finishes with $C_k$.
In the latter case we say that $C_k$ is \emph{reachable from $C_1$ on $w$}.
We simply call it \emph{reachable} if $w$ is irrelevant, and write
$C_1\act{*}C_k$. Consider for example the third negotiation of Figure \ref{fig:example1}.
If we represent a configuration $C$ by the tuple $(C(p_0), C(p_1))$ then 
$$ (\{n_0\}, \{n_0\}) \act{(n_0, a)} (\{n_1\}, \{n_2, n_3\}) \act{(n_1, b)} (\{n_2\}, \{n_2, n_3\})
\act{(n_2, a)} (\{n_3\}, \{n_3\})$$
\noindent is a run.

A run is called \emph{initial} if it starts in $\Cinit$. An initial run is
\emph{successful} if it starts in $\Cinit$ and ends in
$\Cfin$. In the three negotiations of Figure~\ref{fig:example1} we 
have $n_\init = n_0$, and $n_\fin = n_5, n_4, n_3$, respectively. The run shown above is both initial
and successful. 

\subsection*{Acyclicity} The \emph{graph of a negotiation}\label{def:graph} has $N$, the set of atomic negotiations, as
the set of vertices;  the edges are $n\act{p,a} n'$ if  $n'\in \d(n,a,p)$. Observe
that $p\in \dom(n)\cap\dom(n')$.

A negotiation is \emph{acyclic} if its graph is so. 
For an acyclic negotiation $\Nn$ we fix a linear order $\leqN$ on its nodes
that is a topological order on the graph of $\Nn$. 
This means that if there is an edge from $m$ to $n$ in the graph of
$\Nn$ then $m\leqN n$. The last two negotiations of Figure~\ref{fig:example1} are
acyclic, while the first one is not. For the negotiation in the middle of the figure 
there are two options for the topological order, corresponding to fixing 
$n_2 \leqN n_3$ or $n_3 \leqN n_2$.

\subsection*{Paths} 
Fix a negotiation $\Nn$.  A~\emph{local path} of $\Nn$ is a path
$n_0\act{p_0,a_0}n_1\act{p_1,a_1}\dots\act{p_{k-1},a_{k-1}} n_k$ in
the graph of $\Nn$. A local path is 
\begin{itemize}
\item a \emph{circuit} if $n_0 = n_k$ and $k\geq 1$;
\item a \emph{$p$-path} if $p_0 = \cdots = p_{k-1} = p$;
\item \emph{realizable} from a configuration $C$ if there is a run
$$C \act{(n_0,a_0)} C_0' \act{w_1} C_1 \act{(n_1, a_1)} C_1' \cdots C_{k-1}\act{(n_{k-1},a_{k-1})} C_{k-1}' \act{w_k} C_k$$
\noindent such that $p_i \notin \dom(w_{i+1})$ for all
$i=0,\ldots,k-1$. (Here $\dom(v)$ denotes the set of processes involved
in $v$, that is, $\dom(v)=\bigcup\set{\dom(n) : \text{$(n,a)$ appears in $v$ for some $a \in \out(n)$}}$.)
We say that the run \emph{realizes} the path from $C$.
\end{itemize}

\begin{exa}
For example, $n_0 \act{p_1,a} n_2 \act{p_1,a} n_4 \act{p_0, a} n_5$
is a local path of the first negotiation of Figure \ref{fig:example1}.
The path is realized from the initial configuration by the run 
$$\begin{array}{ll}
(\{n_0\}, \{n_0\}) & \act{(n_0, a)} (\{n_1\}, \{n_2\}) \act{(n_1, a)} (\{n_3\}, \{n_2\}) \\[0.2cm]
& \act{(n_3, a)} (\{n_4\}, \{n_2\}) \act{(n_2, a)} (\{n_4\}, \{n_4\}) \act{(n_4, a)} (\{n_5\}, \{n_5\})
\end{array}$$
\noindent Indeed, we can take $w = (n_0, a) \, w_1 \, (n_2, a) \, w_2 \, (n_4, a)$, with 
$w_1 = (n_1, a) \, (n_3, a)$ and $w_2 = \epsilon$. 
\end{exa}

\subsection*{Soundness} A negotiation $\Nn$ is \emph{sound} if every initial run 
can be completed to a successful run. If a negotiation has no infinite runs
(for example, this is the case if the negotiation is acyclic), 
then it is sound if{}f it has no reachable deadlock configuration. The three negotiations of 
Figure~\ref{fig:example1} are sound. If in the negotiation on the right
we change $\delta(n_0, a, p_1)$ from $\{n_2, n_3\}$ to $\{n_3\}$, then 
the negotiation is no longer sound. Indeed, after the change the negotiation has the run
$$ (\{n_0\}, \{n_0\}) \act{(n_0, a)} (\{n_1\}, \{n_3\}) \act{(n_1, b)} (\{n_2\}, \{n_3\})$$
which leads to a deadlock. 

\begin{rem}
\label{rem:sounddef}
Our definition of soundness is slightly different from the one used in \cite{negI}.
The definition of \cite{negI}, which follows the definition of soundness for workflow 
Petri nets introduced in \cite{aalst}, requires an additional property: for every atomic 
negotiation $n$ there is an initial run that enables $n$. We use the weaker definition because it 
leads to cleaner theoretical results, and because, as we shall see, the two definitions are essentially 
equivalent for deterministic negotiations (see Remark \ref{rem:soundequiv}).
\end{rem} 

For deterministic negotiations, being unsound is closely related to
deadlocks, as the observation below shows:\footnote{We could have
  use the weaker requirement that every accessible node has some local path, not
  necessarily $p$-path, to $n_\fin$. However, we use the precise
  statement of the lemma later.}

\begin{lem}\label{lem:unsound-deadlock}
  Let $\Nn$ be a deterministic negotiation such that for every node $n
  \in N$ that is accessible from $n_\init$, and every process $p
  \in\dom(n)$, there exists some $p$-path from $n$ to $n_\fin$. Then
  $\Nn$ is unsound iff some initial run leads to a deadlock
  configuration. 
\end{lem}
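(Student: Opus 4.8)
The plan is to prove the two directions separately, with the backward direction being essentially free and the forward direction carrying all the weight.

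For the direction ``reachable deadlock $\Rightarrow$ unsound'' I would argue directly, using neither hypothesis. A deadlock configuration $C$ has no enabled node, whereas $\Cfin$ has $n_\fin$ enabled; hence $C\neq\Cfin$, and since no transition can be executed from $C$, any initial run reaching $C$ cannot be completed to a run reaching $\Cfin$, so $\Nn$ is unsound. I would also record here the elementary fact, used repeatedly afterwards, that a configuration from which no transition is executable is either a deadlock or equals $\Cfin$: if some node is enabled but cannot be executed it must be $n_\fin$ (the only node allowed to have no result), and $n_\fin$ being enabled forces every process to sit at $n_\fin$, i.e.\ the configuration is $\Cfin$.

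For the converse I would argue the contrapositive: assuming no initial run reaches a deadlock, show that $\Nn$ is sound, i.e.\ that from every reachable configuration $C$ some run reaches $\Cfin$. First I would exploit determinism: since $\d(n,a,p)$ is always a singleton, an induction on the length of the run shows that every reachable configuration assigns to each process exactly one node, so I may treat $C$ as a map $C\colon\Proc\to N$ and work inside the finite reachability graph of such point configurations. Combined with the elementary fact above, the assumption ``no reachable deadlock'' means that every maximal run from $C$ that does not reach $\Cfin$ is infinite; hence if $\Cfin$ were not reachable from $C$, the only obstruction would be a \emph{livelock}. Ruling out livelocks is exactly what the $p$-path hypothesis is for, and this is where the work lies.

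Concretely, I would construct a run from $C$ to $\Cfin$ by a measure-decreasing strategy. A natural measure is $\Phi(C)=\sum_{p\in\Proc} r_p(C)$, where $r_p(C)$ is the length of a shortest $p$-path from $C(p)$ to $n_\fin$; this is finite for every reachable $C$ because $C(p)$ is accessible from $n_\init$ and the hypothesis supplies a $p$-path to $n_\fin$. The goal is to show that from any reachable $C\neq\Cfin$ with no reachable deadlock one can steer, via a finite subrun, to a configuration of strictly smaller measure, so that iterating reaches $\Phi=0$, which by construction means $C=\Cfin$. A useful monotonicity to lean on is that once a process reaches $n_\fin$ it can never move again, since $n_\fin$ has no result; so an alternative measure, the number of processes not yet at $n_\fin$, never increases spuriously, and it suffices to drive one further process to $n_\fin$ while keeping those already there parked. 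The hard part will be the steering step itself, because of multiparty synchronization: to advance a chosen process $p$, one result of the shared node $C(p)$ must be executed, which simultaneously moves all other participants of that node, possibly away from $n_\fin$ and thereby increasing their distances; worse, a node on $p$'s intended $p$-path may have in its domain a process already parked at $n_\fin$, so that the node can never again be enabled and $p$'s route is blocked. The crux is therefore to show that the $p$-path hypothesis, determinism, and the standing assumption that no deadlock is reachable together guarantee a schedule that keeps the run alive while making net progress on the measure; I expect this to require choosing carefully at each stage which process to advance — e.g.\ ordering processes by the strongly connected components of the negotiation graph through which their $p$-paths pass, and always advancing one whose remaining route does not depend on an already-finished process — and verifying that the no-reachable-deadlock assumption forbids exactly the configurations in which every such choice would stall.
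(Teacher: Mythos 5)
Your backward direction is fine, and you have correctly identified the right raw ingredient for the forward direction --- the shortest-$p$-path distance to $n_\fin$ supplied by the hypothesis --- but the proof has a genuine gap exactly where you flag it: the ``steering step'' is never carried out, and with the measure you chose it cannot be carried out as stated. The sum $\Phi(C)=\sum_p r_p(C)$ is not monotone under the only moves available: executing a node to advance one process simultaneously moves every other participant, possibly increasing their distances, so $\Phi$ can go up. You see this problem, and you also see the blocking problem (a node on $p$'s route whose domain contains a process already parked at $n_\fin$), but you leave both unresolved and merely express the hope that a careful scheduling policy exists. That hope is the entire content of the lemma, so as written this is a plan, not a proof.

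The paper's proof avoids the scheduling difficulty altogether by turning the argument around. Instead of assuming ``no reachable deadlock'' and trying to construct a run to $\Cfin$, assume $\Nn$ is unsound and consider the set of \emph{bad} reachable configurations (those from which no run reaches $\Cfin$); this set is nonempty. Fix an order on $\Proc$ and take the measure to be the \emph{tuple} $d(C)=(d(C(p_1),p_1),\ldots,d(C(p_n),p_n))$ of shortest-$p_i$-path lengths, compared lexicographically, and pick a bad configuration $C$ with $d(C)$ minimal. If $C$ were not a deadlock, some $n$ is enabled; let $p$ be the \emph{smallest} process of $\dom(n)$ and execute the result $a$ whose first step $\delta(n,a,p)$ lies on a shortest $p$-path from $n$ to $n_\fin$. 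Every process smaller than $p$ is outside $\dom(n)$ and hence untouched, while $p$'s component strictly decreases, so $d(C')<d(C)$ lexicographically --- no matter what happens to the larger processes. Since $C'$ is reachable from a bad $C$ it is itself bad, contradicting minimality; hence $C$ is a deadlock. Note that this argument never needs the run to actually follow the $p$-path or to terminate successfully (we are deriving a contradiction, not building a schedule), which is precisely why the obstructions you worried about never arise. If you want to salvage your own outline, replace the sum by this lexicographic tuple and replace ``steer forward from an arbitrary configuration'' by ``take a minimal bad configuration and show it is a deadlock.''
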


\begin{proof}
Since the right-to-left implication is obvious, we assume that $\Nn$
is unsound and we show that a reachable deadlock must exist.

For every atomic negotiation $n$ and $p \in \dom(n)$, let $d(n,p)$ be the length of the shortest 
$p$-path leading from $n$ to $n_\fin$ (or $\infty$ if the path does not exist), 
and for every reachable configuration $C$, define 
$d(C) = \left( d(C(p_1),p_1), \ldots, d(C(p_n),p_n)\right)$. Since $\Nn$ is unsound, some
initial run $C_\init \act{*} C$ cannot be extended to a successful run. Choose the run 
so that $d(C)$ is minimal w.r.t.~the lexicographic order according to
some order on $\Proc$. We claim that $C$ is a deadlock. Assume 
the contrary, so $C$ enables some node $n$. Let $p$ be the smallest
process of $\dom(n)$. By assumption, there is some $p$-path from $n$
to $n_\fin$. Let $a \in\out(n)$ and $n'\in N$ be such that $n'=\delta(n,a,p)$ is the
first node on the shortest $p$-path from $n$ to $n_\fin$. In particular, we have $d(n',p) < d(n,p)$.
 Taking $C \act{(n,a)} C'$, we get that $d(C')$ is lexicographically smaller than $d(C)$, contradicting 
the minimality of $C$, and so the claim is proved. 
\end{proof}

\subsection*{Determinism} Process $p$ is \emph{deterministic} in a negotiation $\Nn$ if for
every $n\in N$ and every $a\in \out(n)$, the set of possible next negotiations,
$\d(n,a,p)$, is a singleton. 
A negotiation is \emph{deterministic} if every process $p\in\Proc$ is
deterministic. Graphically, a negotiation is deterministic if it does not
have any proper hyper-arc. The negotiation on the left of Figure~\ref{fig:example1} is deterministic.

A negotiation is \emph{weakly non-deterministic}\label{def:weak} if for every $n\in N$ 
at least one of the processes in $\dom(n)$ is deterministic.
A negotiation is \emph{very weakly non-deterministic}\label{def:very-weak}\footnote{This
class was called \emph{weakly deterministic} in~\cite{negI}.} if for every $n\in N$, 
$a\in \out(n)$, and $p\in\dom(n)$, there is a deterministic process $q$ such
that $q\in \dom(n')$ for all $n'\in \d(n,a,p)$.
As the names suggest, a very weakly non-deterministic negotiation is 
weakly non-deterministic, under the (very weak) assumption that every
node, but $n_\init$ is a target of some transition. 
\igwin{changed the assumption}
Indeed, if a node $n$ in a very weakly non-deterministic negotiation is
reachable from some other node then there is a deterministic process
in $\dom(n)$. The same is true if $n=n_\init$, since all 
processes are in the domain of the initial node. 
This shows the claim. 
The intuition behind very weak nondeterminism is that every
nondeterminism in the transition function should be resolved by a
deterministic process. 
\igw{added intuition}

The negotiation in the middle of Figure~\ref{fig:example1} is weakly non-deterministic. 
Indeed, the processes $p_0$ and $p_2$ are deterministic, and every node has $p_0$ or $p_2$
(or both) in its domain. However, it is not very weakly non-deterministic. To see this,
observe that $\d(n_0, a, p_1) = \{n_2, n_3\}$, but the intersection $\dom(n_2) \cap \dom(n_3) = \{p_1\}$
does not contain any deterministic process. On the contrary, the negotiation on the right of the figure 
is very weakly non-deterministic, because the deterministic process $p_0$ belongs to the domain of all nodes.

Weakly non-deterministic negotiations allow to model 
deterministic negotiations with global resources (see Section~\ref{sec:beyond-sound}). The resource 
(say, a piece of data) can be modeled as an additional process, which participates in the atomic 
negotiations that use the resource. For example, the negotiation 
in the middle of Figure~\ref{fig:example1} models a situation in which processes $p_0$ and $p_2$
negotiate in $n_1$ which of the two will have access to the resource modeled by $p_1$. If 
the result of $n_1$ is $a$, then $p_0$ has access to the resource at node $n_2$, and if it is $b$, then 
$p_2$ has access to it at node $n_3$.


\section{Soundness of deterministic negotiations}\label{sec:det}


We revisit the soundness problem for deterministic negotiations. 
We give the first \NLOGSPACE\ algorithm for the problem, 
in contrast with the polynomial algorithm of~\cite{negI}, 
which requires linear space. The algorithm is based on analysis of
the graph of a negotiation as defined on
page~\pageref{def:graph}. More precisely we show a novel
characterization of soundness in terms 
of \emph{anti-patterns} in  this graph.
The characterization allows not only to check soundness, but
also to diagnose why a given negotiation is unsound. 

The following lemma shows that every local path of a sound and deterministic negotiation
is realizable from some reachable configuration. 
\begin{lem}\label{l:local}
  Let $\pi$ be a local path of a sound deterministic negotiation $\Nn$, and let
  $n_0$ be the first node of $\pi$. Then $\pi$ is realizable from every reachable configuration
  that enables $n_0$.
\end{lem}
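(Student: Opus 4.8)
The plan is to reduce the lemma to a single-edge statement and then iterate it along $\pi$. Writing $\pi = n_0 \act{p_0,a_0} n_1 \act{p_1,a_1} \cdots \act{p_{k-1},a_{k-1}} n_k$, I would argue by induction on the length $k$, the base case $k=0$ being realized by the empty run from $C$. For the inductive step everything rests on the following \emph{single-step claim}: if $C$ is reachable and enables $n_0$, then there is a run $C \act{(n_0,a_0)} C_0' \act{w_1} C_1$ with $p_0 \notin \dom(w_1)$ such that $C_1$ is again reachable and enables $n_1$. Granting the claim, I execute $(n_0,a_0)$ and run $w_1$ to reach such a $C_1$, apply the induction hypothesis to the shorter path $n_1 \act{p_1,a_1} \cdots n_k$ from $C_1$, and concatenate the two runs. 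The result is a run realizing $\pi$ from $C$, because each segment $w_{i+1}$ avoids the corresponding process $p_i$ by construction.

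The heart of the argument --- and the step I expect to be the main obstacle --- is the single-step claim, where soundness and determinism have to be played against each other. First I execute $(n_0,a_0)$ from $C$, which is possible since $n_0$ is enabled, reaching a reachable configuration $C_0'$. By \emph{determinism}, $C_0'(p_0) = \delta(n_0,a_0,p_0) = \{n_1\}$ is a singleton, so in $C_0'$ the only atomic negotiation $p_0$ is ready for is $n_1$. Since $C_0'$ is reachable, \emph{soundness} yields a run $\rho \colon C_0' \act{*} \Cfin$. I then take $w_1$ to be the maximal prefix of $\rho$ in which $p_0$ does not participate, and let $C_1$ be the configuration it reaches. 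As $p_0$ is untouched by $w_1$, its ready set never changes along $w_1$, so $C_1(p_0) = C_0'(p_0) = \{n_1\}$.

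It remains to check that $n_1$ is enabled at $C_1$, and this is exactly where the singleton forces the conclusion. If $p_0$ never participates in $\rho$, then $w_1 = \rho$ and $C_1 = \Cfin$, so $\{n_1\} = C_0'(p_0) = \Cfin(p_0) = \{n_\fin\}$; hence $n_1 = n_\fin$, which is enabled in $\Cfin$ by definition. Otherwise $w_1$ is immediately followed in $\rho$ by the first move $(m,b)$ involving $p_0$, and for $(m,b)$ to be executable at $C_1$ we need $m \in C_1(p_0) = \{n_1\}$, so $m = n_1$ and $n_1$ is enabled at $C_1$. In both cases $C_1$ is reachable and enables $n_1$, establishing the claim. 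The decisive point is that determinism collapses $p_0$'s ready set to the singleton $\{n_1\}$, so the first moment $p_0$ can act it must act on $n_1$; consequently the $p_0$-free prefix of any completing run already enables $n_1$, which is precisely what realizability of the path demands.
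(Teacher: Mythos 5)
Your proof is correct and is essentially the same as the paper's: both proceed by induction along the path, and the key step is identical --- execute the edge, use determinism to collapse $p_i$'s ready set to the singleton $\{n_{i+1}\}$, invoke soundness to obtain a completing run, and cut that run at the point where $n_{i+1}$ becomes enabled, noting that the discarded prefix cannot involve $p_i$. Your explicit treatment of the case where $p_0$ never acts again (forcing $n_1 = n_\fin$) is a detail the paper leaves implicit, but the argument is the same.
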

{
\begin{proof}
Let $\pi = n_0\act{p_0,a_0}n_1\act{p_1,a_1}\cdots\act{p_{k-1},a_{k-1}} n_k$, and 
let $C$ be a reachable configuration such that $C(p)=n_0$ for every $p \in\dom(n_0)$. 
  By induction on $i$ we show that there is a run $C\act{*} C_i$ 
  realizing
  $n_0\act{p_0,a_0}n_1\act{p_1,a_1}\cdots\act{p_{i-1},a_{i-1}} n_i$ and
  such that  $n_i$ is enabled in $C_i$. 
  
  For $i=0$, we simply take $C_i=C$.
For the induction step we assume the existence of $C_i$ in which $n_i$ is
  enabled. 
  Let $C'_{i+1}$ be the result of executing $(n_i,a_i)$ from $C_i$. 
  Observe that $C'_{i+1}(p_i)=n_{i+1}$ (recall that $\Nn$ is deterministic).
  Since $\Nn$ is sound, and $C'_{i+1}$ is reachable, there is a run
  from $C'_{i+1}$ to $C_\fin$. 
  We set then $C_{i+1}$ to be the first configuration on this run when
$n_{i+1}$ is enabled.
  \end{proof}
}

In particular, Lemma~\ref{l:local} states that there is an initial run containing the atomic
negotiation $m$ iff there is a local path from $n_\init$ to $m$.  
If $\dom(m) \cap \dom(n)\not=\es$ then the lemma also provides an easy
test for deciding the existence of a run containing both $m,n$: it
suffices to check the existence of a local path $n_\init \act{*} m
\act{*} n$, or with $m,n$ interchanged.

Our algorithm for checking soundness of deterministic negotiations
checks for certain patterns in the graph of the negotiation. 
Since negotiations exhibiting the patterns are unsound, we call them \emph{anti-patterns}.
In order to define them we need to introduce \emph{forks} and recall 
the notion of \emph{dominating node} of a local path introduced in \cite{negII}.

\begin{defi}\label{def:fork}
Let $\Nn =(\Proc,N,\dom, R,\delta)$ be a deterministic negotiation.
A tuple $(p_1,p_2,n_1,n_2) \in \Proc^2 \times N^2$ is a \emph{fork} of $\Nn$ if there exists a local path from $n_\init$ to a node $n \in N$ and a result $a \in \out(n)$ such that
\begin{itemize}
\item $p_i \in \dom(n) \cap \dom(n_i)$ for $i=1,2$;
\item for $i=1,2$ there exists a $p_i$-path $\pi_i$ leading from $\d(n,a,p_i)$ to $n_i$; and 
\item $\pi_1$ and $\pi_2$ are disjoint, i.e., no node appears in both.
\end{itemize} 
\end{defi}

\begin{defi}
A node $n$ of a local path $\pi$ \emph{dominates} $\pi$ if $\dom(m) \subseteq \dom(n)$ for every node $m$ of $\pi$.
\end{defi}

\begin{exa}
The tuple $(p_0, p_1, n_3, n_4)$ is a fork of the negotiation on the left of Figure~\ref{fig:example1}. 
We can choose $n=n_0$. The $p_0$-path is $n_1 \act{p_0, a} n_3$, and the $p_1$-path is 
$n_2 \act{p_1, a} n_4$. The tuple $(p_0,p_1,n_4,n_5)$
is not a fork since all $p_1$-paths from $n_0$ to $n_5$ go through
$n_4$, but the $p_0$-path and the $p_1$-path are required to be disjoint. If we change this negotiation by setting $\d(n_4,b,p_1)=n_5$
then $(p_0,p_1,n_4,n_5)$ becomes a fork with $n=n_4$, result $b
\in\out(n)$, $n_1 \act{p_0, a}
n_3 \act{p_0,a} n_4$ as $p_0$-path, and the $p_1$-path consisting
of the single node $n_5$.

Consider now the local circuit $n_2 \act{p_1, a} n_4 \act{p_1, b} n_2$
in the graph of the negotiation on the left of
Figure~\ref{fig:example1}. The node $n_4$  
is dominating, since its domain includes all processes; node $n_2$ is
not dominating since $p_0\not\in\dom(n_2)$.
\end{exa}

\begin{lemC}[{\cite[Lemma 2]{negII}}]
\label{lem:dominating}
Every reachable local circuit of a sound deterministic negotiation (that is, every local circuit
containing a node reachable from $n_\init$ by a local path) has a dominating node.\footnote{In \cite{negII} dominating nodes of circuits are called
  synchronizers.}
\end{lemC}

\begin{exa}
Lemma \ref{lem:dominating} does not hold for arbitrary sound negotiations. Consider the 
non-deterministic negotiation of Figure \ref{fig:nodom}. It is easy to see that the negotiation is sound.
However, the local circuit $n_1 \act{p_1, a} n_2 \act{p_1, a} n_1$ has no dominating node, because
$\dom(n_1) = \{p_0, p_1\}$ and $\dom(n_2) = \{p_1, p_2\}$.

\begin{figure}[htb]
\begin{tikzpicture}
\nego[ports=3,id=n0,spacing=2.0]{0,0}
\node[left = 0cm of n0_P0, font=\large] {$n_0$};
\nego[ports=2,id=n1,spacing=2]{0,-2}
\node[above right = -0.1cm and 0.5cm of n1_P0, font=\large] {$n_1$};
\nego[ports=2,id=n2,spacing=2]{2,-4}
\node[above right = -0.1cm and 0.5cm of n2_P0, font=\large] {$n_2$};
\nego[ports=3,id=nf,spacing=2]{0,-6}
\node[left = 0cm of nf_P0, font=\large] {$n_3$};

\node[above = 0.1cm of n0_P0, font=\large] {$p_0$};
\node[above = 0.1cm of n0_P1, font=\large] {$p_1$};
\node[above = 0.1cm of n0_P2, font=\large] {$p_2$};

\pgfsetarrowsend{latex}
\draw (n0_P0) to node[pos=0.47,auto] {$a$} (n1_P0);
\draw (n0_P1) to node[pos=0.47,auto] {$a$} (n1_P1);
\draw (n0_P2) to node[pos=0.2,auto] {$a$} (n2_P1);
\path (n0_P2) -- + (0, -1.5) edge [bend left = 60] (nf_P2);
\draw (n1_P0) to node[pos=0.5,auto] {$b$} (nf_P0);
\draw (n1_P0) to  node[pos=0.7,auto]{$a$} (-1.0,-2.4) to (nf_P0);
\draw (-1.0,-2.4) to[out=90,in=130, looseness=2] (n1_P0);
\draw (n1_P1) to[out=-140,in=140] node[pos=0.5,right,swap]{$b$} (nf_P1);
\draw (n1_P1) to[out=-110,in=110] node[pos=0.5,auto,swap]{$a$} (n2_P0);
\draw (n2_P0) to[out=70,in=-70] node[pos=0.5,auto,swap]{$a$} (n1_P1);
\draw (n2_P1) to  node[pos=0.5,below]{$a$} (5.0,-4.4) to (nf_P2);
\draw (5.0,-4.4) to[out=90,in=40, looseness=2] (n2_P1);
\draw (n2_P0) to node[pos=0.5,auto]{$b$} (nf_P1);
\draw (n2_P1) to node[pos=0.5,left]{$b$} (nf_P2);
\end{tikzpicture}
\caption{A local circuit without a dominating node}
\label{fig:nodom}
\end{figure}
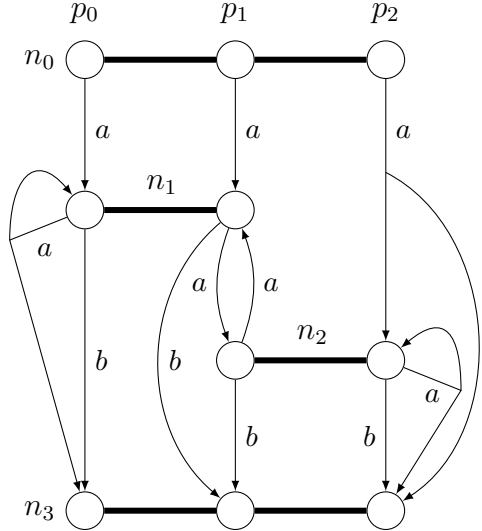
\end{exa}

\begin{defi}
\begin{enumerate}
\item An anti-pattern of type $\Bb$ is a $p$-path leading from $n_\init$ to a node $n$ such that no $p$-path
leads from $n$ to $n_\fin$.
\item An anti-pattern of type $\Ff$ is a fork $(p_1,p_2,n_1,n_2)$ such that $p_2 \in \dom(n_1)$ and $p_1 \in \dom(n_2)$.
\item An anti-pattern of type $\Cc$ is a local circuit without a dominating node. 
\end{enumerate}
\end{defi}

\begin{exa}
The last two anti-patterns are illustrated in Figure \ref{fig:anti}. The tuple
$(p_0, p_1, n_1, n_2)$ is a fork of the negotiation on the left satisfying $p_1 \in \dom(n_1)$ and $p_0 \in \dom(n_2)$. 
The local circuit $n_1 \act{p_0, a} n_2 \act{p_1, a} n_3 \act{p_2, a} n_1$ of the negotiation on the right 
has no dominating node. Observe that the negotiation has no
anti-pattern of type $\Ff$.

Another example of anti-pattern of type $\Ff$ appears in the
negotiation on the left of Figure~\ref{fig:example1} modified by  letting
$\d(n_4,b,p_1)=n_5$: the tuple $(p_0,p_1,n_4,n_5)$ is a fork with $p_0
\in\dom(n_5)$ and $p_1 \in\dom(n_4)$.

\begin{figure}[htb]
\raisebox{2cm}{\begin{tikzpicture}
\nego[ports=2,id=n0,spacing=1.5]{0,0}
\node[left = 0cm of n0_P0, font=\large] {$n_0$};
\nego[ports=2,id=n1, spacing=1.5]{-1.5,-2}
\node[left = 0cm of n1_P0, font=\large] {$n_1$};
\nego[ports=2,id=n2, spacing=1.5]{1.5,-2}
\node[right = 0cm of n2_P1, font=\large] {$n_2$};
\nego[ports=2,id=n3, spacing=1.5]{0,-4}
\node[left = 0.1cm of n3_P0, font=\large] {$n_3$};

\node[above = 0.1cm of n0_P0, font=\large] {$p_0$};
\node[above = 0.1cm of n0_P1, font=\large] {$p_1$};

\pgfsetarrowsend{latex}
\draw (n0_P0) edge node[left=0, pos=0.7] {$a$} (n1_P0);
\draw (n0_P0) edge node[right=0, pos=0.7] {$b$} (n2_P0);
\draw (n0_P1) edge node[left=0, pos=0.7] {$b$} (n1_P1);
\draw (n0_P1) edge node[right=0, pos=0.7] {$a$} (n2_P1);
\draw (n1_P0) edge node[left=0, pos=0.3] {$a$} (n3_P0);
\draw (n1_P1) edge node[left=0, pos=0.3] {$a$} (n3_P1);
\draw (n2_P0) edge node[right=0, pos=0.3] {$a$} (n3_P0);
\draw (n2_P1) edge node[right=0, pos=0.3] {$a$} (n3_P1);

\end{tikzpicture}}
\scalebox{0.9}{
\begin{tikzpicture}
\nego[ports=3,id=n0,spacing=2]{0,0}
\node[left = 0cm of n0_P0, font=\large] {$n_0$};
\nego[ports=2,id=n1, spacing=2]{-2,-2}
\node[right = 0cm of n1_P1, font=\large] {$n_1$};
\nego[ports=2,id=n2, spacing=2]{0,-4}
\node[right = 0cm of n2_P1, font=\large] {$n_2$};
\nego[ports=2,id=n3, spacing=2]{2,-6}
\node[right = 0.1cm of n3_P1, font=\large] {$n_3$};
\nego[ports=3,id=n4, spacing=2]{-2,-8}
\node[left = 0.1cm of n4_P0, font=\large] {$n_4$};

\node[above = 0.1cm of n0_P0, font=\large] {$p_0$};
\node[above = 0.1cm of n0_P1, font=\large] {$p_1$};
\node[above = 0.1cm of n0_P2, font=\large] {$p_2$};

\pgfsetarrowsend{latex}
\draw (n0_P0) edge node[left=0] {$a$} (n1_P1);
\draw (n0_P1) edge node[left=0] {$a$} (n2_P1);
\draw (n0_P2) edge node[right=0] {$a$} (n3_P1);
\draw (n1_P0) edge node[left=0] {$a$} (n4_P0);
\draw (n1_P1) edge node[left=0] {$a$} (n2_P0);
\draw (n2_P0) edge node[left=0] {$a$} (n4_P1);
\draw (n2_P1) edge node[left=0] {$a$} (n3_P0);
\draw (n3_P0) edge node[right=0, pos=0.7] {$a$} (n4_P2);
\draw (n3_P1) edge[out=-120,in=-70] node[right=0.2, pos=0.55] {$a$} (n1_P0);

\end{tikzpicture}}
\caption{Anti-patterns}
\label{fig:anti}
\end{figure}
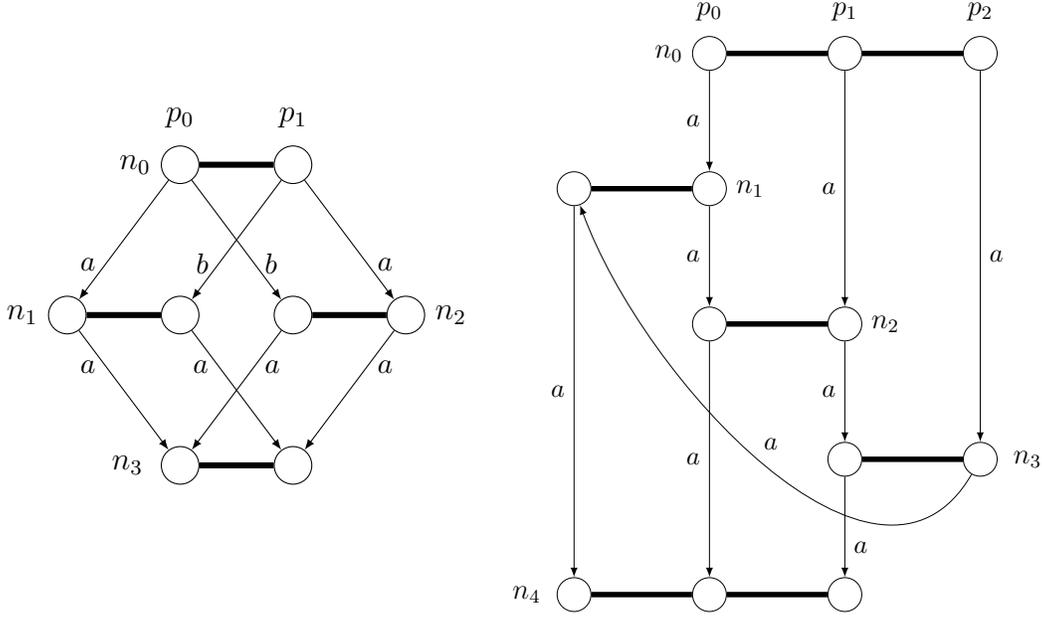
\end{exa}

\begin{lem}
Any deterministic negotiation containing some anti-pattern is unsound.
\end{lem}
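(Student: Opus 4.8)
The plan is to prove the contrapositive: a sound deterministic negotiation contains no anti-pattern. So I fix a sound deterministic $\Nn$ that exhibits an anti-pattern of one of the three types and derive a contradiction, handling the types separately. The two workhorses are Lemma~\ref{l:local}, which lets me realize any local path from a reachable configuration enabling its first node, and Lemma~\ref{lem:dominating}, which guarantees a dominating node on every reachable local circuit. I also record a \emph{cyclic-wait} observation used for the fork case: if a run reaches a reachable configuration $C$ with $n_1 \in C(p_1)$, $n_2 \in C(p_2)$, $n_1 \neq n_2$, $p_2 \in \dom(n_1)$ and $p_1 \in \dom(n_2)$, then $C$ cannot reach $\Cfin$. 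Indeed, since $\Nn$ is deterministic, $C(p_1) = \{n_1\}$ and $C(p_2) = \{n_2\}$, so $p_i$ can leave its node only by firing it; but firing $n_1$ needs $p_2$ at $n_1$ and firing $n_2$ needs $p_1$ at $n_2$, so in any continuation whichever of $n_1, n_2$ fires first is not in fact enabled. Hence neither ever fires, $p_1$ stays at $n_1$ and $p_2$ at $n_2$ forever, and since at most one of $n_1, n_2$ equals $n_\fin$, some process never reaches $n_\fin$.

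For type $\Bb$ I take the $p$-path $\pi$ from $n_\init$ to $n$ with no $p$-path from $n$ to $n_\fin$. By Lemma~\ref{l:local}, applied from $\Cinit$ (which enables $n_\init$), $\pi$ is realizable, so some reachable configuration $C$ has $n \in C(p)$. By soundness there is a run from $C$ to $\Cfin$; the sequence of nodes successively visited by $p$ along this run is a $p$-path from $n$ to $n_\fin$, contradicting the defining property of the anti-pattern. For type $\Cc$ I take a local circuit with no dominating node. Under the reachability assumption discussed in the introduction (every node is reachable from $n_\init$ by a local path), the circuit is reachable, so Lemma~\ref{lem:dominating} applies and forces a dominating node, a contradiction.

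The fork case is the heart of the argument. Given a type $\Ff$ anti-pattern $(p_1,p_2,n_1,n_2)$ with $p_2 \in \dom(n_1)$ and $p_1 \in \dom(n_2)$, note first that disjointness of $\pi_1$ and $\pi_2$ forces $n_1 \neq n_2$ (otherwise $n_1$ would lie on both paths). By the cyclic-wait observation it therefore suffices to reach a reachable configuration placing $p_1$ at $n_1$ and $p_2$ at $n_2$ simultaneously. I would obtain it by first realizing the local path from $n_\init$ to $n$ (Lemma~\ref{l:local}) to reach a configuration enabling $n$, firing $(n,a)$ to split $p_1$ to $\d(n,a,p_1)$ and $p_2$ to $\d(n,a,p_2)$, and then advancing $p_1$ along $\pi_1$ and $p_2$ along $\pi_2$ to their respective endpoints.

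I expect the joint realization in this last step to be the main obstacle. Realizing a single $p_i$-path via Lemma~\ref{l:local} uses soundness runs to enable successive nodes, and those runs may move the other process $p_{3-i}$ off its intended path, so the two advances cannot simply be concatenated. The role of the disjointness hypothesis is precisely to break this interference: since $\pi_1$ and $\pi_2$ share no node, the two processes traverse structurally independent parts of the negotiation, and I would isolate a dedicated concurrent-realizability statement — that two disjoint $p_1$- and $p_2$-paths issuing from a common enabled configuration can be realized in parallel to a configuration with $p_1$ at $n_1$ and $p_2$ at $n_2$ — proved by interleaving the two realizations and using determinism to keep each process pinned to its own path. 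Once such a configuration is reached, the cyclic-wait observation contradicts soundness, completing the proof.
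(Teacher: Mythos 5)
Your proposal is correct and follows essentially the same route as the paper's proof: Lemma~\ref{l:local} to realize the path for type $\Bb$, Lemma~\ref{lem:dominating} for type $\Cc$, and for type $\Ff$ a joint realization of the two disjoint paths (which the paper also handles by induction on $|\pi_1|+|\pi_2|$, at the same level of detail you sketch) followed by the same cyclic-wait/deadlock argument using $p_2\in\dom(n_1)$ and $p_1\in\dom(n_2)$. Your explicit remarks that disjointness forces $n_1\neq n_2$ and that type $\Cc$ relies on the reachability assumption are welcome clarifications but do not change the argument.
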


\begin{proof} Assume that a sound, deterministic negotiation $\Nn$
contains an anti-pattern. If the anti-pattern is of type $\Bb$, then since the $p$-path leading to $n$
is realizable (Lemma \ref{l:local}), some reachable configuration $C$ satisfies $C(p) = \{n\}$. Since 
no $p$-path leads from $n$ to $n_\fin$, we have $C'(p) \neq \{n_\fin\}$ for every
configuration $C'$ reachable from $C$, and so $\Nn$ is not sound.

\smallskip

If the anti-pattern is of type $\Cc$, then the result follows from Lemma \ref{lem:dominating}.  

\smallskip

If the anti-pattern is of type $\Ff$, then $\Nn$ contains a fork $(p_1,p_2,n_1,n_2)$ such 
that $p_2 \in\dom(n_1)$, and $p_1 \in \dom(n_2)$. Let $n$ 
and $a$ be the node and the result required by the definition of a fork. Since local paths of sound  
deterministic negotiations are realizable (Lemma \ref{l:local}), some reachable configuration 
$C$ enables $n$. Let $C \act{(n,a)} C'$. By the definition of the anti-pattern, 
there are disjoint $p_1$- and $p_2$-paths $\pi_1$ and $\pi_2$ leading to 
$n_1$ and $n_2$, respectively. Using again soundness and the fact that
$\pi_1$ and $\pi_2$ are disjoint, we can show by induction on $|\pi_1|+|\pi_2|$ that there is 
a run  $C' \act{w} C''$ that realizes both $\pi_1$ and $\pi_2$  from $C'$. So we have $C''(p_1) = n_1$ and 
$C''(p_2)=n_2$, because neither $n_1$ can be executed before $n_2$
gets enabled, nor the other way round. Since $p_2 \in \dom(n_1)$ and $p_1 \in \dom(n_2)$, neither $n_1$ nor $n_2$ can ever occur from 
$C''$. By the same argument as above, this implies that the initial run leading to $C''$ cannot be extended to 
a successful run, and so $\Nn$ is unsound.
\end{proof}

\begin{lem}
\label{lem:comp}
Any unsound deterministic negotiation contains some anti-pattern.
\end{lem}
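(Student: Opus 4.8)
The plan is to prove the contrapositive direction of the characterization: a deterministic negotiation with \emph{no} anti-pattern of any type must be sound. Equivalently, I will show that if $\Nn$ is unsound, at least one of the three anti-patterns $\Bb$, $\Ff$, $\Cc$ is present. First I would dispose of the trivial reachability assumption needed to apply Lemma~\ref{lem:unsound-deadlock}: if some accessible node $n$ and process $p \in \dom(n)$ admits no $p$-path from $n$ to $n_\fin$, then taking the realizing $p$-path from $n_\init$ to $n$ (which exists because $n$ is accessible) gives immediately an anti-pattern of type $\Bb$, and we are done. So for the rest of the argument I may assume that every accessible node has, for each of its processes, a $p$-path to $n_\fin$, which is exactly the hypothesis of Lemma~\ref{lem:unsound-deadlock}.

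Under that assumption, unsoundness of $\Nn$ yields, by Lemma~\ref{lem:unsound-deadlock}, an initial run reaching a deadlock configuration $C$. The core of the proof is to extract an anti-pattern of type $\Ff$ or $\Cc$ from this deadlock. The idea is as follows: in $C$, each process $p$ sits at a single node $C(p)$ (determinism), and since $C$ is a deadlock no node is enabled, yet the final node is not reached. For each process $p$, since $C(p)$ is accessible, I use the standing assumption to pick a $p$-path $\sigma_p$ from $C(p)$ toward $n_\fin$; the first step of $\sigma_p$ cannot be taken, meaning the node $C(p)$ is blocked because some \emph{other} process $q$ needed to fire $C(p)$ is not ready for it, i.e.\ $C(q) \neq C(p)$ although $q \in \dom(C(p))$. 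This ``blocking'' relation among processes has no sink (every process is blocked by another), so following the blocking chain produces a cycle of processes $p_1, \dots, p_m, p_1$ where consecutive processes block each other across distinct nodes.

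From this cyclic blocking structure I would trace the histories of the involved processes back to a common branching point. Here I use Lemma~\ref{l:local}: all the relevant local paths from $n_\init$ to the blocked nodes are realizable, so the graph-theoretic data I extract genuinely corresponds to reachable behavior. Concretely, for two processes $p_i$ and $p_{i+1}$ that mutually block at nodes $C(p_i)$ and $C(p_{i+1})$ with $p_{i+1} \in \dom(C(p_i))$ and $p_i \in \dom(C(p_{i+1}))$, I would find the last node $n$ shared by their trajectories before they diverge, together with a result $a$ at which the two $p_i$- and $p_{i+1}$-paths split off disjointly toward $C(p_i)$ and $C(p_{i+1})$. If those two paths are genuinely disjoint, this is precisely a fork $(p_i, p_{i+1}, C(p_i), C(p_{i+1}))$ with $p_{i+1} \in \dom(C(p_i))$ and $p_i \in \dom(C(p_{i+1}))$ — an anti-pattern of type $\Ff$.

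The main obstacle, and the reason the type $\Cc$ anti-pattern is needed, is the case where these two process-paths are \emph{not} disjoint: they reconverge at some node before reaching their respective deadlock positions, or the trajectory of a single process revisits a node. In that situation I cannot directly exhibit a fork, but the reconvergence/revisitation forces a local circuit into the picture, and I would argue that this circuit has no dominating node (otherwise the dominating node, being shared by all processes of the circuit, would force synchronization and eliminate the blocking asymmetry), giving an anti-pattern of type $\Cc$. Making the disjointness dichotomy clean — precisely delimiting where paths are forced to be disjoint versus where reconvergence must create an undominated circuit — is the delicate part, and I expect it to require careful bookkeeping of the topological/temporal order in which the processes last synchronized, so that the extracted circuit or fork is guaranteed to be reachable via a realizable local path.
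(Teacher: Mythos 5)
Your opening moves match the paper's: reduce to a reachable deadlock via Lemma~\ref{lem:unsound-deadlock} (the absence of type-$\Bb$ patterns supplying its hypothesis for the relevant nodes $C(p)$, which are reached by genuine $p$-paths), and extract from the deadlock a cyclic waiting structure $p_0,\dots,p_{k-1}$ at distinct nodes $n_0,\dots,n_{k-1}$. The $k=2$ case, where the two processes really do block each other mutually, is handled correctly and gives a type-$\Ff$ pattern exactly as in the paper.

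The gap is in the case $k>2$, and it is a real one. You assert that \emph{consecutive} processes in the waiting cycle ``mutually block'' each other, i.e.\ that both $p_{i+1}\in\dom(n_i)$ and $p_i\in\dom(n_{i+1})$ hold; but in a waiting cycle of length $k>2$ only one of these holds per pair ($p_{i+1}$ waits for $p_i$, not conversely), and indeed the paper observes that $p_{i+1}\in\dom(n_i)$ would by itself already yield a type-$\Ff$ pattern. Consequently each consecutive pair does give a fork $(p_i,p_{i+1},n_i,n_{i+1})$ with \emph{disjoint} paths (disjointness is essentially free by taking the last node of the run in which both processes participate), yet the fork fails the domain conditions of type $\Ff$. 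Your proposed dichotomy --- disjoint paths give $\Ff$, reconverging paths give $\Cc$ --- therefore leaves this case uncovered: the paths are disjoint and no $\Ff$ pattern exists. The negotiation on the right of Figure~\ref{fig:anti} is a concrete instance: it deadlocks with a waiting cycle of length $3$, has no type-$\Ff$ pattern at all, and the undominated circuit is not produced by any reconvergence of two trajectories. What the paper actually does here is different: assuming no $\Ff$ pattern, it shows that for each $i$ there is a $p_i$-path $\pi_i$ from $n_i$ to $n_{i+1}$ in which $p_{i+1}$ occurs only at $n_{i+1}$ (using the absence of both $\Bb$ and $\Ff$ patterns), and then concatenates $\pi_0\pi_1\cdots\pi_{k-1}$ into a single local circuit; since every $p_i$ appears in the domain of some node of the circuit but $p_{i+1}$ appears in no node of $\pi_i$ except $n_{i+1}$, no node dominates, yielding type $\Cc$. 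Your sketch contains neither this construction nor a substitute for it, and your justification that a ``reconvergence circuit'' lacks a dominating node is not an argument; so the proposal as written does not establish the lemma.
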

\begin{proof}
Let $\Nn$ be an unsound deterministic negotiation without anti-patterns of type $\Bb$.
We prove that $\Nn$ has some anti-pattern of type $\Ff$ or $\Cc$.
Let $\Proc=\{p_1, \ldots, p_n\}$ be the set of processes of $\Nn$.

By Lemma~\ref{lem:unsound-deadlock} we know that $\Nn$ has some
reachable deadlock configuration, let us call it $C$.
This means that there are processes 
$p_0,\ldots,p_{k-1}$, and distinct nodes $n_0,\ldots,n_{k-1},n_k=n_0$ ($k \ge
2$) such that   
$C(p_i)=n_i$ and $p_i \in\dom(n_{i+1})$ for every $0 \le i < k$.  
(Intuitively, at the configuration $C$ process $p_0$ waits for $p_{k-1}$, $p_1$ waits for $p_0$, etc.) 
A run from $C_\init$ to $C$ yields a fork 
$(p_i,p_{i+1},n_i,n_{i+1})$ for every pair of nodes $n_i,n_{i+1}$: Indeed, we can choose the pair 
$(n,a)$ required in the definition of a fork as the last pair in the 
run satisfying $\set{p_i,p_{i+1}} \subseteq \dom(n)$, and choose the path $\pi$ (resp. $\pi'$) 
as a $p_i$-path from $\d(n,a,p_i)$ to $n_i$ (resp. a $p_{i+1}$-path from $\d(n,a,p_{i+1})$ to $n_{i+1}$).

If $k=2$ then $(p_0,p_1,n_0,n_1)$ is a a pattern of type
$\Ff$. So assume that $k>2$ and that $\Nn$ has no pattern of type
$\Ff$.  We prove that $\Nn$ has a local circuit without a dominating node.

We claim that for every $0 \le i<k$  there is some $p_i$-path $\pi_i$ from
$n_i$ to $n_{i+1}$ (setting $n_k=n_0$). Assume the contrary, and let $(n,a)$ be the pair
in the definition of the fork $(p_i,p_{i+1},n_i,n_{i+1})$. Then we can
extend the $p_i$-path $\pi$  
from $\d(n,a,p_i)$ to $n_i$ with some $p_i$-path $\pi'$ from 
$n_i$ to some node $n'$ with $p_{i+1} \in\dom(n')$, and such that
$p_{i+1}$ does not occur in $\pi'$ except for $n'$. Such a node $n'$
exists because $\Nn$ has no anti-pattern of type $\Bb$, and the final node 
$n_\fin$ contains all processes. The
$p_i$-path $\pi \pi'$ is still disjoint with the $p_{i+1}$-path
leading from $\d(n,a,p_{i+1})$ to $n_{i+1}$.  Therefore, the fork 
$(p_i, p_{i+1}, n', n_{i+1})$ is an anti-pattern of type $\Ff$, 
contradicting the hypothesis, and the claim is proven.
  
Note that $p_{i+1} \notin \dom(n_i)$, since otherwise $(p_i,p_{i+1},n_i,n_{i+1})$ would be a pattern
of type $\Ff$. Also, $p_{i+1}$ cannot occur in $\pi_i$ except for
$n_{i+1}$: otherwise we could take the shortest prefix $\pi$ of
$\pi_i$ ending in a node $n'_i\not=n_{i+1}$ with $\set{p_i,p_{i+1}} \subseteq \dom(n'_i)$ and find another $\Ff$ pattern, 
using the paths $\pi_i \pi$ and $\pi_{i+1}$. Let $\gamma$ be the concatenation of  $\pi_0, \pi_1, \ldots \pi_{k-1}$. 
Then $\gamma$ is a local circuit of $\Nn$, and every $p_i$ belongs to the domain of some node of $\gamma$. Since  
$p_{i+1}$ does not belong to the domain of any node of $\pi_i$, except $n_{i+1}$, no node of $\gamma$ has all of 
$p_0, \ldots, p_{k-1}$ in its domain. So $\gamma$ has no dominating node, and therefore $\Nn$ contains a pattern of type $\Cc$.
\end{proof}

The above two lemmas allow us to prove:

\begin{thm}\label{th:sound-det}
 Soundness of deterministic negotiations is \NLOGSPACE-complete.
\end{thm}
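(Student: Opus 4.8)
The plan is to establish the theorem in two parts: first the upper bound (soundness of deterministic negotiations is in \NLOGSPACE), and then the matching lower bound (the problem is \NLOGSPACE-hard). The upper bound follows almost immediately from the two preceding lemmas, which together yield the exact characterization: a deterministic negotiation is unsound if{}f it contains an anti-pattern of type $\Bb$, $\Ff$, or $\Cc$. Since \NLOGSPACE\ is closed under complement (by Immerman--Szelep\-cs\'enyi), it suffices to show that the \emph{unsoundness} problem, i.e.\ detecting the presence of some anti-pattern, lies in \NLOGSPACE.

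First I would argue that each anti-pattern type can be detected by a nondeterministic logspace procedure that guesses the relevant nodes and processes one step at a time, verifying reachability along local paths and $p$-paths on the fly. Concretely: reachability in the graph of $\Nn$ (and restricted $p$-path reachability, where one only follows edges labelled by a fixed process $p$) is a directed-reachability question, which is the canonical \NLOGSPACE\ problem; one never needs to store a whole path, only the current node together with the fixed process label. For type $\Bb$, I would guess $p$ and $n$, verify a $p$-path from $n_\init$ to $n$, and then verify (using the $\coNP$-closure, i.e.\ a complementary reachability check done in \NLOGSPACE) that no $p$-path runs from $n$ to $n_\fin$. For type $\Ff$, I would guess $p_1,p_2,n,a,n_1,n_2$, check that $n$ is reachable from $n_\init$ by a local path, that $p_i\in\dom(n)\cap\dom(n_i)$, and that there are disjoint $p_i$-paths from $\d(n,a,p_i)$ to $n_i$ with the extra condition $p_2\in\dom(n_1)$ and $p_1\in\dom(n_2)$. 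For type $\Cc$, I would guess a node $m$ on the circuit and a process $p\notin\dom(m)$, then verify that $m$ lies on a local circuit (reachable from $n_\init$) passing through some node whose domain contains $p$; the absence of a dominating node is witnessed by exhibiting, for a guessed circuit, a process missing from some node's domain. Since a logarithmic number of pointers into $N$ and $\Proc$ suffices and all verifications reduce to directed reachability, the whole test stays in \NLOGSPACE.

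The main obstacle I anticipate is the disjointness requirement in the type-$\Ff$ detection and the joint verification of two $p_i$-paths: a naive approach would want to store both paths to check they share no node, which would exceed logarithmic space. I would handle this by reformulating disjointness as a reachability query in a suitable product-style graph, or by guessing and re-verifying the crossing structure incrementally so that only a constant number of node pointers are ever live; the key observation is that ``$\pi_1$ and $\pi_2$ are disjoint'' can be checked by a nondeterministic routine that guesses at each step which of the two paths advances and maintains only the two current endpoints, rejecting if they ever coincide. A similar care is needed for the circuit in type $\Cc$, but again only the current node, the starting node of the circuit, and the witnessing missing process need to be remembered.

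For the lower bound, I would give a logspace reduction from (the complement of) directed graph reachability, or more directly exhibit a family of deterministic negotiations encoding a reachability instance so that soundness corresponds to the (non-)existence of a path; the introduction already asserts that \NLOGSPACE-hardness ``is easily shown,'' so I would reduce $s$-$t$ reachability to the existence of an anti-pattern of type $\Bb$, using a single process whose $p$-path structure mirrors the input graph and whose ability to reach $n_\fin$ encodes the reachability question. Combining the \NLOGSPACE\ membership with \NLOGSPACE-hardness yields \NLOGSPACE-completeness, completing the proof.
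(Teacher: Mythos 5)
Your overall architecture matches the paper's: both directions rest on the anti-pattern characterization from the two preceding lemmas, the upper bound is a guess-and-verify detection of each anti-pattern type, and the lower bound is a reduction from directed $s$-$t$ reachability via a single-process negotiation whose $p$-paths mirror the input graph (the paper additionally adds a ``back to $s$'' result at every non-final node so that soundness holds if{}f $t$ is reachable; your sketch is compatible with this). However, your detection of anti-patterns of type $\Cc$ has a genuine gap. You propose to witness ``no dominating node'' by guessing a single node $m$ on the circuit and a process $p\notin\dom(m)$ that occurs in the domain of some other node of the circuit. That only certifies that the particular node $m$ is not dominating; the anti-pattern requires that \emph{no} node of the circuit dominates it, and a circuit can easily contain one non-dominating node together with another node that does dominate (in which case it is not an anti-pattern, yet your procedure accepts). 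Certifying non-domination for every node naively needs linearly many witnesses, which is exactly the obstacle. The paper's proof resolves this with a trick you are missing: guess an integer $M$ and a process $p$, and verify on the fly a reachable circuit in which (1) every node has domain of size at most $M$, (2) some node contains $p$, and (3) some node has domain of size exactly $M$ and does not contain $p$. Then the union of the domains along the circuit has at least $M+1$ processes while every single node has at most $M$, so no node can dominate; conversely any circuit without a dominating node yields such $M$ and $p$.

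A second, smaller issue concerns your disjointness check for type $\Ff$. Interleaving the two path traversals and rejecting only when the two current endpoints coincide does not verify disjointness: the paths may share a node that the two heads visit at different times under the chosen interleaving, so the test accepts non-disjoint pairs. (The ``product-graph'' reformulation you mention has the same problem: a product of two reachability queries certifies the existence of two paths, not of two \emph{vertex-disjoint} paths, and two-disjoint-paths in general digraphs is not an \NLOGSPACE-style reachability question.) The paper itself dismisses types $\Bb$ and $\Ff$ as ``clearly verifiable in \NLOGSPACE'' without elaborating, so you are not alone in glossing over this, but the specific mechanism you commit to is incorrect as stated and would need to exploit the special structure of $p_1$- and $p_2$-paths (shared nodes must contain both processes) rather than a generic two-pointer argument. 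Your appeal to closure of \NLOGSPACE\ under complement for the type-$\Bb$ test and for passing from unsoundness to soundness is fine and is implicit in the paper as well.
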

\begin{proof}
The existence of an anti-pattern of type $\Bb$ or $\Ff$ is clearly verifiable in \NLOGSPACE.

To non-deterministically check the existence of a path of type $\Cc$ using logarithmic space, 
we guess an integer $M$ and a process $p$.
Then, we guess on-the fly a circuit $\pi$, such that (1) each node of
$\pi$ has domain of size at most $M$, (2) $p$ occurs in $\pi$, and (3)
one node of $\pi$ has domain of size $M$ and does not contain $p$. Clearly,
this implies that $\pi$ contains at least $M+1$ processes, but no
dominant node.

We prove \NLOGSPACE-hardness by reduction from the reachability problem for
directed graphs, which is \NLOGSPACE-complete. Given a directed graph $G$ with 
nodes $s$ and $t$, where $s$ has no incoming and $t$ no outgoing edges, we reduce
the question whether $s \act{*} t$ to the soundness of a deterministic 
negotiation $\Nn(G)$ with one process $p$.  The atomic negotiations of $\Nn(G)$ are the 
vertices of $G$, with domain $\{p\}$. The initial node is $s$ and the final node is $t$.
If $u \rightarrow v$ is an edge of $G$, then the atomic 
negotiation $u$ has a result $(u, v)$ with 
$\delta(u,(u,v),p) = v$.  Moreover, for every vertex $u$ of $G$ except
$t$ the atomic negotiation $u$ has a result ${\it back}$ with
$\delta(u, {\it back}, p) = s$.  Clearly, $\Nn(G)$ is sound
if{}f there is some path $s \act{*} t$ in~$G$. 
\end{proof}

\begin{rem}
\label{rem:soundequiv}
As announced in Remark \ref{rem:sounddef},
for deterministic negotiations the 
notion of soundness used in this paper and the one of \cite{negI} 
essentially coincide. More precisely, we show that the two notions coincide
under the very weak assumption that for every atomic negotiation $n$ there
is a local path from $n_\init$ to $n$. (Atomic negotiations that do not 
satisfy this condition can be identified and removed in \NLOGSPACE, and their 
removal does not change the behavior of the negotiation.)

Recall that the definition of \cite{negI} requires that 
(a) every run can be extended to a successful run \emph{and} (b) that for 
every atomic negotiation $n$ some initial run enables $n$. Now, let 
$\Nn$ be a deterministic negotiation that is sound according to the 
definition of this paper, i.e., satisfies (a), and such that for every 
atomic negotiation $n$ there is a local path from $n_\init$ to $n$.
We show that (b) also holds. For $n = n_\fin$, (b) holds because 
every run can be extended to a successful run. For $n \neq n_\fin$, we observe 
that, by Lemma \ref{l:local}, the local path leading from $n_\init$ to $n$ 
is realizable. By the definition of realizability, there is a reachable 
configuration $C$ and a process $p$ such that $C(p) = \{n\}$. Since $\Nn$ 
is sound, it has a run leading from $C$ to $C_\fin$. 
Since $\N$ is deterministic and $n \neq n_\fin$, this run necessarily 
executes $n$, and we are done.
\end{rem}


\section{Beyond determinism: Tractable cases}
\label{sec:beyond-det-tract}

In this section we investigate how much nondeterminism we can allow,
while retaining polynomiality of the soundness problem, a question that was 
left open in \cite{negI,negII}.  We prove that soundness of acyclic, 
weakly non-deterministic negotiations can be decided in polynomial time. 
In Section \ref{sec:beyond-det-intract} we show that the problem becomes intractable for 
both arbitrary  weakly non-deterministic negotiations, and for acyclic non-deterministic negotiations.  

The polynomial time algorithm is based on a game-theoretic solution to 
the \emph{omitting problem} for deterministic negotiations, which is a
problem of independent interest. 
Section \ref{sec:omit} introduces an extended version of the omitting
problem for deterministic negotiations where we not only want to omit
some nodes but also to reach 
some other nodes. We show that for a fixed number of nodes to be
reached, the problem can be solved  
in polynomial time for sound, acyclic and deterministic negotiations. Section~\ref{sec:weak} uses this result to prove that
the soundness problem for acyclic weakly non-deterministic negotiations is in \PTIME.
\igwin{stressed that omitting is only for deterministic negotiations.}

\subsection{Omitting problem}\label{sec:omit}

Let $B\incl N$ be a set of nodes of a deterministic negotiation $\Nn$. 
We say that a run $(n_1,a_1)(n_2,a_2)\cdots$  of $\Nn$
\emph{omits $B$} if $n_i\not\in B$ for all $i$.
Let $P\incl N\times R$ be a set of pairs consisting of a node and a result.
We say that a run of $\Nn$ \emph{includes $P$ and omits $B$} if it
omits $B$ and contains all the pairs from $P$.

\begin{defi}
The \emph{omitting problem} consists of deciding, given $\Nn$, $P$, 
and $B$, whether there is a successful run of $\Nn$ including $P$ and omitting
$B$. 

Given a constant $K$, the \emph{$K$-omitting problem} is the subproblem
of the omitting problem in which $P$ has size at most $K$.
\end{defi}

We show that the omitting problem for  sound, acyclic, and
deterministic negotiations can be reduced to solving a safety game on
a finite graph
(see e.g.~\cite{Thomas08} for an introduction to games). 
As a first step we  define a \emph{two-player game $\GNB$} with players Adam and Eve, 
where the goal of Eve is to produce a successful run that omits
$B$. The game arena is a finite graph, with vertex set partitioned
between Adam and Eve:
\begin{itemize}
\item the positions of Eve are $N\setminus B$, namely the set of
  all the nodes of the negotiation but those in $B$;
\item the positions of Adam are $N\times R$, namely the set of
  pairs consisting of a node and an action
\end{itemize}
The edges of the game arena are:
\begin{itemize}
\item $n \act{} (n,a)$, for every $n \in N\setminus B$, $a\in \out(n)$;
\item $(n,a) \act{} \d(n,a,p)$, for every $n \in N$, $a \in \out(n)$,
  $p\in \dom(n)$.
\end{itemize}
The initial game position is $n_\init$. Adam wins a play if it reaches a node in $B$, Eve wins if the play
  reaches $n_\fin$. Eve has a winning strategy in the game, if all
  plays (from $n_\init$) are won by her, no matter which are the moves of Adam. 

The idea behind the game is that Eve chooses results, and builds thus
a run of the negotiation. Adam challenges Eve by choosing at each step
a process, thus a play is a local path of the negotiation. The winning
condition for Eve ensures that the execution is successful. Moreover,
the execution avoids $B$ since Eve is not allowed to continue from nodes in $B$. 

Observe that since $\Nn$ is acyclic, the winning condition
for Eve is actually a safety condition: every maximal play avoiding
$B$ is winning for Eve. 
So if Eve  can win, then she wins with a positional strategy.
A \emph{deterministic positional strategy for Eve} is a function $\s:N\to R$, it
indicates that at position $n$ Eve should go to position
$(n,\s(n))$.
Since $\GNB$ is a safety game for Eve, there is a \emph{biggest
non-deterministic winning strategy} for Eve, 
i.e., a strategy of type $\smax:N\to\Pp(R)$. 
The strategy $\smax$ is obtained by computing the set $W_E$ of all
winning positions for Eve in $\GNB$, and then setting for every
$n\in N$:
\begin{equation*}
  \smax(n)=\set{a\in\out(n) : \text{ for all $p\in\dom(n)$, } \d(n,a,p)\in W_E}\,.
\end{equation*}

\begin{lem}\label{lemma:run-to-strat}
  Let $\Nn$ be an acyclic negotiation. If $\Nn$ has a successful run omitting $B$ then Eve has a winning strategy in $\GNB$.
\end{lem}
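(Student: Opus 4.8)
The plan is to extract from the given run a positional strategy for Eve and then verify, by an induction along plays, that it is winning. First I would record two structural facts that make the objective manageable. Since $\Nn$ is acyclic, the game graph of $\GNB$ is acyclic as well (every edge $n \act{} (n,a) \act{} \delta(n,a,p)$ leads to a strictly larger node in topological order), so every play is finite and terminates at a sink. The only Eve-positions without outgoing edges are those $n \in N\setminus B$ with $\out(n)=\es$, which by our standing assumption can only be $n_\fin$; Adam-positions $(n,a)$ always have a successor because $\dom(n)\neq\es$. Hence every maximal play that avoids $B$ ends at $n_\fin$ and is won by Eve, and the winning condition is effectively a safety condition. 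Consequently it suffices to produce a \emph{positional} strategy $\sigma\colon N\to R$ all of whose consistent plays avoid $B$.

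Let $w$ be the successful run omitting $B$. Because $\Nn$ is acyclic, no node is executed twice along $w$ (a process only moves to topologically later nodes), so every node $n$ occurring in $w$ carries a well-defined result $a_n$ with which it is executed. I define $\sigma(n)=a_n$ for every such $n$, and let $\sigma(n)$ be arbitrary for the remaining nodes. The core of the proof is the invariant: \emph{every Eve-position visited by a $\sigma$-consistent play either occurs in $w$ or equals $n_\fin$}. For the base case, $n_\init$ occurs in $w$ whenever $n_\init\neq n_\fin$ (to reach $\Cfin$ the processes must leave $n_\init$, which by determinism forces $n_\init$ to be executed), and if $n_\init=n_\fin$ the play already starts at $n_\fin$ and Eve wins. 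For the inductive step, suppose the play is at an Eve-node $n$ occurring in $w$; then $\sigma$ plays $(n,a_n)$ and Adam moves to $m:=\delta(n,a_n,p)$ for some $p\in\dom(n)$. In $w$, executing $(n,a_n)$ sends $p$ to $m$, so after this step the component of $p$ is the singleton $m$. Here is the step I expect to be the main obstacle, and where determinism is essential: the component of $p$ can change only by executing the node at which $p$ currently sits, so since $w$ is successful and ends with $p$ at $n_\fin$, either $m=n_\fin$ or $m$ is executed later in $w$; in both cases $m$ occurs in $w$ or equals $n_\fin$, re-establishing the invariant.

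Finally I would assemble the conclusion. By the invariant every Eve-position reached under $\sigma$ occurs in $w$ or equals $n_\fin$; since $w$ omits $B$ and $n_\fin\notin B$ (otherwise no successful run omitting $B$ could exist, as such a run must execute $n_\init$ and terminate at $n_\fin$, forcing $n_\init,n_\fin\notin B$), no consistent play ever reaches a node of $B$. Combined with the first paragraph — that every maximal $B$-avoiding play ends at $n_\fin$, and that $\sigma$ is always defined on the positions actually encountered — this shows $\sigma$ is a winning strategy for Eve in $\GNB$, as required.
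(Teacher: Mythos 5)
Your proof is correct and follows the paper's argument exactly: you define the positional strategy from the results used in the given run and verify by induction that every consistent play stays within the nodes of the run (or reaches $n_\fin$), which is precisely the verification the paper states in one line and leaves to the reader, with determinism entering exactly where you flag it. The only blemish is the parenthetical claim that $n_\fin\in B$ would be impossible because a successful run "must execute $n_\fin$" — in fact a successful run merely reaches $\Cfin$ without executing $n_\fin$ (which may have no result) — but this degenerate case is also unaddressed by the paper's definition of the game and does not affect the substance of the argument.
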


\begin{proof}
Define $\s(n)=a$ if $(n,a)$ appears in the run. For other nodes
define the strategy arbitrarily. 
To check that this strategy is winning, it is enough to verify that
every play respecting the strategy stays in the nodes appearing in the
run. 
\end{proof}

\begin{lem}\label{lemma:strat-to-run}
  Let $\Nn$ be a sound, acyclic, and deterministic negotiation such that 
  Eve has a deterministic winning strategy $\s:N \to R$ in
  $\GNB$.  Consider the set $S$ of nodes that are reachable on some play
  from $n_\init$ respecting $\s$. Then there exists a successful run of $\Nn$ containing
  precisely the nodes of $S$.
\end{lem}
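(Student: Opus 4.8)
The plan is to read off from the deterministic winning strategy $\s$ the run it induces, and then to use soundness to show that the local choices made by $\s$ synchronize into a single global successful run. For each process $p$ I would first isolate the play from $n_\init$ in which Adam always selects $p$; since $\s$ is winning, this play avoids $B$ and ends in $n_\fin$, and it traces a $p$-path
$$\pi_p\colon\quad n_\init=v^p_0\act{p,\s(v^p_0)}v^p_1\act{p,\s(v^p_1)}\cdots\act{}n_\fin,$$
with $v^p_{i+1}=\d(v^p_i,\s(v^p_i),p)$ (a single node, as $\Nn$ is deterministic). All nodes of $\pi_p$ lie in $S$ and, by acyclicity, are strictly increasing for $\leqN$. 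The run I want is the one that fires every node $n$ of $S\setminus\{n_\fin\}$ with result $\s(n)$, taken in the order $\leqN$; firing $n$ should advance each $p\in\dom(n)$ by one step along $\pi_p$. The entire difficulty is to ensure that each such $n$ is \emph{enabled} when its turn comes, i.e.\ that all its participants have arrived at $n$ simultaneously.

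The key step, which I expect to be the main obstacle, is the synchronization property: \emph{for every $n\in S$ and every $p\in\dom(n)$, the node $n$ lies on $\pi_p$.} I would prove it by contradiction, taking a $\leqN$-minimal $n\in S$ for which some $p^\ast\in\dom(n)$ has $n\notin\pi_{p^\ast}$. Necessarily $n\neq n_\init$ and $n\neq n_\fin$, since every $\pi_p$ starts at $n_\init$ and ends at $n_\fin$. As $n\in S$, some play reaches $n$, and its last edge has the form $n=\d(m,\s(m),q)$ with $m\leN n$, $m\in S$ and $q\in\dom(m)\cap\dom(n)$; by minimality the property holds at $m$, so $m\in\pi_q$, and following $q$ one further step yields $n\in\pi_q$. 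Using minimality again, the property holds at every node of $S$ below $n$, so the enabling argument described below applies to all of them and lets me fire, in $\leqN$-order, all nodes of $S$ that are $\leN n$, reaching a configuration $D$ (reachable by construction) in which every process $r$ sits at the first node $v$ of $\pi_r$ with $n\leqN v$. In particular $q$ sits exactly at $n$, whereas $p^\ast$ sits at some $v$ with $n\leN v$. By acyclicity $p^\ast$'s token can only move to strictly larger nodes and never returns to $n$, so $n$ can never be fired from $D$; hence $q$ is stuck at $n$ and no continuation of $D$ reaches $\Cfin$, contradicting soundness.

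With the synchronization property established, I would finish by a straightforward induction on $\leqN$, maintaining the invariant that after firing all nodes of $S$ below $n$ each process $r$ sits at the first node $v$ of $\pi_r$ with $n\leqN v$. When the turn of a node $n\in S\setminus\{n_\fin\}$ comes, the property guarantees that $n\in\pi_p$ for every $p\in\dom(n)$, so each such $p$ is already at $n$ and $n$ is enabled; firing $(n,\s(n))$ (the result $\s(n)$ exists because every node other than $n_\fin$ has one) advances each participant along its $\pi_p$ and restores the invariant. Once all of $S\setminus\{n_\fin\}$ has been fired, every process has traversed its path and sits at $n_\fin$, so the configuration reached is exactly $\Cfin$. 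The resulting run is therefore successful and its set of fired nodes is precisely $S\setminus\{n_\fin\}$, i.e.\ it contains precisely the nodes of $S$.
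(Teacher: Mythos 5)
Your proof is correct and follows essentially the same route as the paper's: both fire the nodes of $S$ in topological order and use soundness together with acyclicity and determinism to rule out a participant of a node sitting at a $\leqN$-larger node from which it can never return. The only difference is organizational --- you factor this into a standalone synchronization property about the single-process plays $\pi_p$ before running the induction, whereas the paper establishes enabledness of $n_{i+1}$ directly inside the inductive step by the same contradiction.
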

{
\begin{proof}
Recall that for an acyclic $\Nn$ we can fix a topological order
$\leqN$. Let $n_1,n_2,...,n_k$ be an enumeration of the nodes in $S\subseteq(N \setminus B)$
according to $\leqN$.
Let $w_i=(n_1,\s(n_1))\cdots(n_i,\s(n_i))$.
By induction on $i\in \set{1,\dots,k}$ we prove that 
there is a configuration $C_i$ such that $C_\init\act{w_i} C_i$ is a
run of $\Nn$. 
This will show that $w_k$ is a successful run containing precisely the
nodes of $S$.

For $i=1$, $n_1=n_\init$, in $C_\init$ all
processes are ready to do $n_1$, so $C_1$ is the result of performing
$(n_1,\s(n_1))$. 

For the inductive step, we assume that we have a run
$C_\init\act{w_i} C_i$, and we want to extend it by
$C_i\act{(n_{i+1},\s(n_{i+1}))} C_{i+1}$. 
Consider a play respecting $\s$ and reaching $n_{i+1}$. 
The last step in this play is $(n_j,\s(n_j))\to n_{i+1}$, for some
$j \le i$ and $n_j$ in $S$. 
Since $\Nn$ is deterministic, we have $\d(n_j,\s(n_j),p)=n_{i+1}$ for some process $p$. 
Since $j \le i$ and $(n_j,\s(n_j))$ occurred in $w_i$ (but 
$n_{i+1}$ has not occured), we have
$C_i(p) =\set{n_{i+1}}$.
If we show that $C_i(q)=\set{n_{i+1}}$ for all $q\in\dom(n_{i+1})$ then we
obtain that $n_{i+1}$ is enabled in $C_i$ and we get the required  $C_{i+1}$.
Suppose by contradiction that $C_i(q)=\set{n_l}$ for some
$l \not= i+1$. We must have $l>i+1$, since
otherwise $n_l$ already occurred in $w_i$.
By definition of our indexing $n_{i+1}\leN n_l$.
But then no run from $C_i$ can bring process $q$ to a
state where it is ready to participate in negotiation $n_{i+1}$, and so, since $\Nn$ is deterministic,
we have $C(p) = \{n_{i+1} \}$ for every configuration $C$ reachable from $C_i$. 
This contradicts the fact that $\Nn$ is sound.
\end{proof}
}

Lemmas~\ref{lemma:run-to-strat} and~\ref{lemma:strat-to-run} show that
Eve wins in $\GNB$ iff $\Nn$ has a successful run omitting $B$. We apply this result
to the omitting problem.

\begin{thm}\label{thm:omitting}
For every constant $K$, the $K$-omitting problem for 
deterministic, acyclic, and sound negotiations is in \PTIME.
\end{thm}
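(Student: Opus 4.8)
The plan is to solve the $K$-omitting problem by combining the safety game $\GNB$ analysed above with a bounded search that takes care of the inclusion requirement $P$. First I would compute, in polynomial time, Eve's winning region $W_E$ in $\GNB$ and the maximal safe strategy $\smax$; if $n_\init \notin W_E$ there is no successful run omitting $B$ at all and I reject. By Lemmas~\ref{lemma:run-to-strat} and~\ref{lemma:strat-to-run}, the successful runs of $\Nn$ omitting $B$ are exactly the runs induced by deterministic winning strategies $\s$, and such a run visits precisely the set $S$ of nodes reachable from $n_\init$ along $\s$. Since a run includes a pair $(m,a)$ iff $m \in S$ and $\s(m)=a$, writing $P=\{(m_1,a_1),\dots,(m_j,a_j)\}$ with $j\le K$, the instance is positive iff there is a deterministic winning strategy $\s$ such that each $m_i$ is $\s$-reachable from $n_\init$ and $\s(m_i)=a_i$. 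A winning $\s$ is exactly one that selects at every reachable node a result in $\smax$, so ``winning'' reduces to ``everywhere safe'', and off the witnessing paths any safe choice can be completed to $n_\fin$ because $\smax$ is non-empty on $W_E$.

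I would first dispose of the trivial inconsistencies: if $P$ contains two pairs with the same node but different results the answer is negative (in an acyclic negotiation every node occurs at most once in a run), and likewise if some $a_i\notin\smax(m_i)$ or $m_i\notin W_E$. The remaining, and only real, difficulty is that a \emph{single} deterministic strategy must reach all $j$ targets at once: reaching one target merely asks for a safe local path from $n_\init$ to it, but the $j$ safe local paths witnessing the $j$ targets must be \emph{compatible}, i.e. leave every shared node through the same result. This is what makes naive per-target reachability insufficient, and it is the step I expect to be the crux.

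To enforce compatibility I would run a reachability computation over a state space of $K$-tuples. A state records, for each target $m_i$, the node currently carrying the obligation to reach $m_i$ (with a flag once it is discharged), the initial state being $(n_\init,\dots,n_\init)$. A transition takes the active obligations sitting on the $\leqN$-minimal node $v$, chooses a single safe result $a\in\smax(v)$ at $v$ --- forced to $a_i$ if $v=m_i$, so a clash of required results at $v$ kills the branch --- and moves each of those obligations to one of the successors $\d(v,a,p)$; an obligation is discharged when it sits on its own target. Always advancing the topologically minimal node is the key point: because the negotiation is acyclic, once $v$ has been processed no obligation can ever return to it, so every visit to $v$ uses the one result chosen here, which is exactly the compatibility condition. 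Hence a path from the initial state to the all-discharged state yields a genuine (single-valued) safe strategy $\s$ reaching all targets with the prescribed results, and conversely any such $\s$ induces, by processing its witnessing paths in topological order, an accepting run of the state graph.

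Correctness then follows from Lemmas~\ref{lemma:run-to-strat} and~\ref{lemma:strat-to-run}, which translate this strategy into a successful run omitting $B$, together with the observation that its set of visited nodes contains each $m_i$ with result $a_i$. For the complexity, there are $O(|N|^K)$ states, and each transition tries at most $|R|$ results and distributes at most $K$ obligations among the successors of $v$, i.e. $O((\max_n |\dom(n)|)^K)$ choices; for fixed $K$ this is polynomial, and the safety game in the first step is polynomial as well, giving the claimed \PTIME\ bound.
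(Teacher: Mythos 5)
Your proposal is correct. The first half coincides with the paper's proof: compute the winning region and the maximal non-deterministic winning strategy $\smax$ of the safety game $\GNB$, use Lemmas~\ref{lemma:run-to-strat} and~\ref{lemma:strat-to-run} to reduce the $K$-omitting problem to the existence of a deterministic winning strategy $\s$ with $\s(m_i)=a_i$ and all $m_i$ reachable under $\s$, and dispose of the trivial inconsistencies in $P$ up front. Where you diverge is in the final algorithmic step. The paper phrases the remaining question as the existence of a certain subgraph $H$ of the graph $G(\smax)$ (at most one outgoing result per node, every pair of $P$ witnessed and reachable from $n_\init$), argues that such an $H$ can be taken to be a tree with at most $|P|$ branching nodes, and then guesses the branching nodes and solves $|P|+1$ reachability problems. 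You instead run a reachability computation on the product space of $K$ ``obligation'' tokens, always advancing the token sitting on the $\leqN$-minimal node and committing to a single result of $\smax$ there. Your scheduling argument --- that the sequence of processed nodes is strictly increasing in the topological order, so each node is resolved exactly once and all witnessing paths automatically agree on the result chosen at every shared node --- is a clean and explicit way to enforce the compatibility of the $K$ paths, which is indeed the crux; the paper's tree decomposition handles the same issue more implicitly (conflicting segments cannot coexist inside a subgraph with one outgoing result per node). Both yield a polynomial bound for fixed $K$; your product construction costs roughly $O(|N|^K)$ states versus the paper's $O(|N|^{|P|})$ guesses of branching nodes, so neither wins asymptotically, but your version is more self-contained and arguably easier to verify.
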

{
\begin{proof}
  If for some atomic negotiation $m$ we have $(m,a)\in P$ and
  $(m,b)\in P$ for $a\not =b $ then the answer is negative as $\Nn$
  is acyclic.
  So let us suppose that it is not the case.
  By Lemmas~\ref{lemma:run-to-strat} and~\ref{lemma:strat-to-run} our
problem is equivalent to determining the existence of a deterministic
strategy $\s$ for Eve in the game $\GNB$ such that $\s(m)=a$ for all
$(m,a)\in P$, and all these $(m,a)$ are reachable on a play respecting
$\s$.

  To decide this we calculate $\smax$, the biggest non-deterministic winning
  strategy for Eve in $\GNB$. 
  This can be done in \PTIME\ as the size of $\GNB$ is proportional to
  the size of the negotiation.
  Strategy $\smax$ defines a graph $G(\smax)$ whose nodes are atomic
  negotiations, and edges are $(m,a,m')$ if $(m,a)\in\smax$ and
  $m'=\d(m,a,p)$ for some process $p$.
  The size of this graph is proportional to the size of the negotiation.
  In this graph we look for a subgraph $H$ such that:
  \begin{itemize}
  \item for every node $m$ in $H$ there is at most one $a$ such that
    $(m,a,m')$ is an edge of $H$ for some $m'$;
  \item for every $(m,a)\in P$ there is an edge $(m,a,m')$ in $H$ for
    some $m'$, and moreover $m$ is reachable from $n_\init$ in $H$.
  \end{itemize} 
  We show that  such a graph $H$ exists iff there is a strategy $\s$
  with the required properties.

  Suppose there is a deterministic winning strategy
  $\s$ such that $\s(m)=a$ for all $(m,a)\in P$, and all these $(m,a)$
  are reachable on a play respecting $\s$. 
  We now define $H$ by putting an edge $(m,a,m')$ in $H$ if $\s(m)=a$
  and 
  $m'=\d(m,a,p)$ for some process $p$. 
  As $\s$ is deterministic and winning, this definition guarantees that $H$
  satisfies the first item above. 
  The second item is guaranteed by the reachability property that
  $\s$ satisfies. 
  
  For the other direction, given such a graph $H$ we define a
  deterministic strategy  $\s_H$. 
  We put $\s_H(m)=a$ if $(m,a,m')$ is an edge of $H$. 
  If $m$ is not a node in $H$, or has no outgoing edges in $H$ then we
  put $\s_H(m)=b$ for some arbitrary $b\in \smax(m)$.
  It should be clear that $\s_H$ is winning since every play
  respecting $\s_H$ stays in the winning nodes for Eve. 
  By definition $\s_H(m)=a$ for all $(m,a)\in P$, and all these
  $(m,a)$ are reachable on a play respecting $\s_H$.

  So we have reduced the problem stated in the theorem to finding a
  subgraph $H$ of $G(\smax)$ as described above. 
  If there is such a subgraph $H$ then there is one in form of a tree,
  where the edges leading to leaves are of the form  $(m,a,m')$  with
  $(m,a)\in P$. 
  Moreover, there is such a tree with a few, at most $|P|$, nodes with more
  than one child.
  So finding such a tree can be done by guessing the $|P|$ branching nodes and
  solving $|P|+1$ reachability problems in $G(\smax)$. 
  This can be done in \PTIME\ since the size of $P$ is bounded by $K$.
  \end{proof}
}

\subsection{Soundness of acyclic weakly non-deterministic negotiations}\label{sec:weak}

In this section we consider acyclic, weakly non-deterministic
negotiations, c.f.\ page~\pageref{def:weak}. 
That is, we allow some processes to be non-deterministic, but every
atomic negotiation should involve at least one  deterministic process.
We will often work with a part of a negotiation as defined below.
\begin{defi}
The~\emph{restriction} of a negotiation $\Nn= \struct{\Proc,N,\dom,R,\d}$ 
to a subset $\Proc' \subseteq \Proc$ of its processes is the negotiation
$\struct{\Proc',N',\dom',R,\d'}$ where
$N' = \{ n\in N \colon \dom(n)\cap\Proc'\not=\es\}$, 
$\dom'(n)=\dom(n)\cap\Proc'$,
and $\d'(n,r,p)=\d(n,r,p)\cap N'$. The restriction of $\Nn$ to
its deterministic processes is denoted $\Nn_D$.
\end{defi}

If $\Nn$ is weakly non-deterministic, every atomic negotiation
involves a deterministic process, so $\Nn_D=\Nn$ have the same set of
nodes; but they do not have the same set of processes unless $\Nn$ is
deterministic. 

Recall also that for an acyclic negotiation $\Nn$ we fixed some linear order
$\leqN$ on $N$, that is a topological order of the graph of $\Nn$.

We show that deciding soundness for acyclic, weakly non-deterministic negotiations is in $\PTIME$.
The proof is divided into three parts. In Section \ref{subsub:prelim} we prove some preliminary lemmas. 
In Section \ref{subsub:one} we consider
the special case in which the negotiation $\Nn$ has one single non-deterministic process, and show that
$\Nn$ is sound if{}f $\Nnd$ is sound and Eve wins a certain instance of the omitting problem for $\Nnd$.
Finally, Section \ref{subsub:more} shows how to reduce the general case to the case with only
one non-deterministic process.

\subsubsection{Preliminaries}
\label{subsub:prelim}
We first show two auxiliary lemmas on the structure of runs in
negotiations. Two runs $w,w'$ of $\Nn$ are called \emph{equivalent}
(and we write $w \equiv w'$) if one can be obtained from the other by
repeatedly permuting adjacent pairs $(m,a),(n,b)$, with $\dom(m)\cap
\dom(n)=\es$. 

The next lemma shows that for acyclic negotiations we can restrict our considerations to runs respecting the order $\leqN$.

\begin{lem}\label{lemma:order}
  Every run of an acyclic negotiation $\Nn$ has an equivalent run that
  respects the  topological order $\leqN$. 
\end{lem}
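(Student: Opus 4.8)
The plan is to prove the statement by a bubble-sort argument on the run, using the topological order as the target sort key. First I would record two structural facts about runs of an acyclic $\Nn$: every run is finite (there are no infinite runs, as noted earlier), and, more importantly, no node can occur twice in it, since a second occurrence of $m$ would require a local path from $m$ back to $m$, contradicting acyclicity. Hence a run $w=(n_1,a_1)\cdots(n_\ell,a_\ell)$ involves pairwise distinct nodes, and ``$w$ respects $\leqN$'' means simply $n_1 \leN n_2 \leN \cdots \leN n_\ell$.

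The crux is the following claim: if $(m,a)(n,b)$ occur consecutively in a run and are \emph{out of order}, i.e.\ $n \leN m$, then $\dom(m)\cap\dom(n)=\es$, so this adjacent pair may be swapped. To see this, suppose $p\in\dom(m)\cap\dom(n)$, and let $C\act{(m,a)}C'\act{(n,b)}$ be the relevant portion of the run. Executing $(m,a)$ sets $C'(p)=\d(m,a,p)$, and since $p\in\dom(n)$ the immediately following step requires $n\in C'(p)=\d(m,a,p)$. Thus $n\in\d(m,a,p)$, i.e.\ there is an edge $m\act{p,a}n$ in the graph of $\Nn$; by the defining property of the topological order this forces $m\leN n$, contradicting $n\leN m$. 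Note that this argument needs neither determinism nor anything beyond the acyclic topological order, which is why it is the right lever.

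With the claim in hand I would finish by induction on the number of inversions of $w$, an inversion being a pair of positions $i<j$ with $n_j\leN n_i$. If there are no inversions then $w$ already respects $\leqN$ and we are done. Otherwise the sequence is not sorted, so some \emph{adjacent} pair is out of order; by the claim its two events are independent, and swapping them yields, via the standard diamond property (independent adjacent events commute and lead to the same configuration, so the result is again a run), a run $w'\equiv w$. Swapping a single adjacent out-of-order pair flips exactly that pair from an inversion to a non-inversion and leaves the relative order of every other pair unchanged, so the number of inversions drops by exactly one. Iterating terminates after finitely many steps in a run equivalent to $w$ with zero inversions, i.e.\ respecting $\leqN$. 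The only genuinely load-bearing step is the claim; the rest is the familiar sorting bookkeeping.
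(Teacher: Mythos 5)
Your proof is correct and follows essentially the same route as the paper's: an induction on the number of inversions, locating an adjacent out-of-order pair, observing that its two nodes must have disjoint domains, and swapping to obtain an equivalent run with one fewer inversion. Your justification of the disjointness claim for the adjacent pair (immediate enabledness forces $n\in\d(m,a,p)$, hence an edge $m\act{p,a}n$ contradicting the topological order) is, if anything, slightly more direct than the paper's.
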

\begin{proof}
  Let $C_\init \act{(n_0,a_0) \cdots (n_p,a_p)} C$ be some run of $\Nn$. 
  The proof is by
  induction on the number of pairs $0 \le i < j \le p$ such that $n_j
  \leqN n_i$. If there are no such pairs then we are done. Otherwise,
  assume that a pair $i,j$ as above exists. Note that in particular,
  $\dom(n_i) \cap \dom(n_j)=\es$ holds. It is not hard to see that
  there must exist some $i \le k < j$ such that $n_{k+1}  \leqN n_k$. As
  before, we note that $\dom(n_k) \cap
  \dom(n_{k+1})=\es$. Clearly, $(n_0,a_0) \cdots
  (n_{k-1},a_{k-1})(n_{k+1},a_{k+1}) (n_k,a_k)  (n_{k+2},a_{k+2})
  \cdots (n_p,a_p)$ is an equivalent run of $\Nn$, with one less pair
  that violates $\leqN$.
\end{proof}

The next lemma states some properties of runs respecting the order $\leqN$.

\begin{lem} \label{lem:small}
Let $\Nn$ be an acyclic, weakly non-deterministic negotiation.
Consider some reachable
configuration $C$. Let $n$ be the $\leqN$-smallest atomic negotiation with $C(d)=n$ for some deterministic process $d$. 
The following properties hold:
\begin{enumerate}
\item If $\Nn$ is sound then $n$ is enabled in $C$.
\item For all runs $C \act{w}C_\fin$, all atomic negotiations $n'$ in $w$ are $\leqN$-bigger than $n$. 
\end{enumerate}	
\end{lem}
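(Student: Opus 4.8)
The plan is to isolate a single monotonicity property and derive both parts from it. For a reachable configuration $C$, write $\mu(C)$ for the $\leqN$-smallest node for which some deterministic process is ready; the node $n$ in the statement is exactly $\mu(C)$. This is well defined: weak non-determinism forces the domain of $n_\init$, hence every reachable configuration, to contain a deterministic process, and for a deterministic process $d$ the set $C(d)$ is a singleton in every reachable $C$. The core claim I would establish is that $\mu$ never decreases along executions: if $C \act{(m,a)} C'$ then $\mu(C) \leqN \mu(C')$.

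This one-step claim is a short case analysis on a deterministic process $d'$. If $d' \notin \dom(m)$ then $C'(d') = C(d')$, a node $n'$ with $\mu(C) \leqN n'$. If $d' \in \dom(m)$ then, since $m$ is enabled and $d'$ is deterministic, $C(d') = m$ and $C'(d') = \d(m,a,d')$; the graph edge $m \to \d(m,a,d')$ together with acyclicity gives $m \leN \d(m,a,d')$, whence $\mu(C) \leqN m \leN \d(m,a,d')$. So in $C'$ every deterministic process is ready for a node at least $\mu(C)$, i.e.\ $\mu(C) \leqN \mu(C')$. Iterating, $\mu(C) \leqN \mu(C'')$ for every configuration $C''$ reachable from $C$. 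Part~(2) is then immediate: any node $n'$ executed in a run from $C$ is enabled in some reachable $C''$, and by weak non-determinism $n'$ carries a deterministic process in its domain, so $n \;=\; \mu(C) \leqN \mu(C'') \leqN n'$.

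For part~(1) I would argue by contradiction. Assume $n$ is not enabled in $C$, so some $q \in \dom(n)$ has $n \notin C(q)$. By the monotonicity claim, no deterministic process is ever ready for a node strictly below $n$ along any run from $C$; since every node carries a deterministic process, no node $\leN n$ is ever enabled, hence none is ever executed. But $q$ can become ready for $n$ only by firing some node $m'$ with an edge $m' \to n$, which forces $m' \leN n$; therefore $q$ never becomes ready for $n$, and so $n$ is never enabled nor executed after $C$. If $n \neq n_\fin$, the deterministic process $d$ with $C(d)=n$ is then frozen at $n$ and can never reach $n_\fin$, contradicting soundness, which guarantees a run $C \act{*} \Cfin$. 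If $n = n_\fin$, note that $n_\fin$ is the $\leqN$-maximum (it is the unique sink of the graph, since every other node has a result and hence an outgoing edge), so the preceding argument shows no node at all is executable from $C$; as a sound acyclic negotiation has no reachable deadlock, this forces $C = \Cfin$, in which $n_\fin$ is enabled, contradicting the assumption.

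I expect the delicate point to be, in part~(1), the interaction between the topological order and the enabling of $n$: the observation that the blocking process $q$ can never be brought to readiness for $n$, because doing so would require firing a node strictly below $n$, which monotonicity of $\mu$ forbids. Establishing that monotonicity cleanly is the crux, and in particular verifying that it is precisely weak non-determinism which lets every enabled node be witnessed by a deterministic process; once this is in place, the remaining arguments are bookkeeping with the order $\leqN$ and the soundness/deadlock-freeness facts recalled in the definitions.
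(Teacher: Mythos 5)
Your proof is correct and follows essentially the same route as the paper's: both arguments rest on the facts that deterministic processes can only advance in the topological order $\leqN$ and that, by weak non-determinism, every enabled node is witnessed by a deterministic process in its domain, so the minimality of $n$ blocks anything $\leN n$ from ever firing. The monotonicity invariant $\mu(C)\leqN\mu(C')$ you isolate is precisely what the paper's much terser proof leaves implicit, and your explicit handling of the $n=n_\fin$ edge case fills in a detail the paper glosses over.
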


\begin{proof}
For the first item we use the soundness of $\Nn$: if $n$ were not enabled in $C$, then there 
would exist some $n'$ that is enabled in $C$ and such that the execution of $n'$ makes the execution
of $n$ eventually possible. So in particular, we would have $n' \leN n$, which contradicts the choice of $n$.

	The second item is shown similarly: every atomic negotiation $n'$ occurring in $w$
satisfies $n'' \leqN n'$ for some $n''$ that is enabled in $C$. By the choice of $n$ we have $n \leqN n''$, 
which shows the claim.
\end{proof}

It is easy to see that whenever $C$ is a reachable configuration of a weakly non-deterministic negotiation $\Nn$, the restriction 
$C^D$ of $C$ to deterministic processes  is a reachable configuration
of $\Nn_D$. The next lemma considers the reverse construction, lifting runs of
$\Nn_D$ to runs of $\Nn$. For this, we need to assume that the runs respect the order $\leqN$.

\begin{lem}\label{lemma:from-det-to-nondet}
  Suppose $\Nn$ is a sound, acyclic and weakly non-deterministic
  negotiation. Let $\Cinit^D\act{w} C^D$ be a run of $\Nnd$ respecting
  the order $\leqN$. Let also $C_v^D$ denote the configuration reached by some prefix $v$ of $w$:
  $\Cinit^D \act{v} C_v^D$. Then $\Nn$ has a run $\Cinit \act{v} C_v$
  with $C_v(d)=C_v^D(d)$, for every prefix $v$ of $w$ and every deterministic 
  process $d$.
\end{lem}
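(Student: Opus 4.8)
The plan is to lift $w$ one action at a time, by induction on the length of the prefix $v$, maintaining as invariant that the constructed run $\Cinit \act{v} C_v$ of $\Nn$ agrees with the $\Nnd$-run on every deterministic process, i.e.\ $C_v(d)=C_v^D(d)$ for all $d\in\Det$. The base case $v=\epsilon$ is immediate, since $\Cinit(d)=\{n_\init\}=\Cinit^D(d)$. The whole difficulty is concentrated in the inductive step, where I must show that the next node really is firable in $\Nn$; the update of the deterministic processes will then take care of itself.

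For the inductive step, let $v'=v\,(m,a)$ be a prefix of $w$ and assume the run $\Cinit\act{v}C_v$ with the matching property is already built. Once I show that $m$ is enabled in $C_v$, firing $(m,a)$ yields $C_{v'}$, and the invariant is preserved: for deterministic $d\in\dom(m)$ the new value $C_{v'}(d)=\d(m,a,d)$ is a singleton contained in $N'$, hence equal to $\d'(m,a,d)=C_{v'}^D(d)$, while deterministic processes outside $\dom(m)$ are untouched in both $\Nn$ and $\Nnd$. Every deterministic $d\in\dom(m)$ is already ready for $m$ in $C_v$: since $(m,a)$ is the next action of $w$, node $m$ is enabled in $\Nnd$ at $C_v^D$, so $C_v^D(d)=\{m\}$, and hence $C_v(d)=\{m\}$ by the invariant. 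What remains is to get the \emph{non-deterministic} participants of $m$ ready, and for this I would invoke Lemma~\ref{lem:small}(1).

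Concretely, let $n^*$ be the $\leqN$-smallest node for which some deterministic process is ready in $C_v$. Since $\Nn$ is sound and $C_v$ is reachable, Lemma~\ref{lem:small}(1) guarantees that $n^*$ is enabled in $C_v$, crucially including all of its non-deterministic participants. The heart of the argument is then the claim $n^*=m$. As $m$ is itself a node for which the deterministic processes in $\dom(m)$ are ready, we get $n^*\leqN m$, and the reverse inequality is what needs work. I would argue by contradiction: if $n^*\leN m$, then by the invariant $n^*$ is also enabled in $\Nnd$ at $C_v^D$; but $w$ respects $\leqN$ and fires $m$ next, so neither the remainder of $w$ (all of whose nodes are $\fgeq_\Nn m$) nor the consumed prefix $v$ (acyclicity would otherwise have advanced the witnessing deterministic process beyond $n^*$) ever executes $n^*$, leaving that process blocked at $n^*$ throughout $w$. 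Completing $w$ to a successful run of $\Nnd$ then forces this process to reach $n_\fin$, a contradiction. Hence $n^*=m$, so $m$ is enabled in $C_v$ and the induction goes through.

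The main obstacle is exactly this identification $n^*=m$: Lemma~\ref{lem:small} only certifies that the $\leqN$-minimal deterministically-enabled node is \emph{fully} enabled, so everything hinges on showing that the order-respecting run $w$ never overtakes a smaller node on which a deterministic process is stuck. This is where the topological discipline of $\leqN$ has to be combined with soundness via the completability of $w$, and I expect the bookkeeping about which deterministic processes are blocked, together with the appeal to acyclicity to exclude an earlier execution of $n^*$, to be the delicate point of the proof.
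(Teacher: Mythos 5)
Your overall strategy is the same as the paper's: induct on the length of the prefix, maintain the invariant $C_v(d)=C_v^D(d)$ for deterministic $d$, and obtain enabledness of $m$ in $C_v$ by applying Lemma~\ref{lem:small}(1) to $\Nn$ and identifying $m$ with the $\leqN$-smallest node $n^*$ for which some deterministic process is ready. You also correctly isolate the crux, namely the inequality $m\leqN n^*$. But your argument for that inequality does not close. You show that if $n^*\leN m$ then $n^*$ is never executed inside $w$ (not in $v$, by acyclicity, and not in the remainder, because $w$ respects $\leqN$ and $m$ is fired next), so the witnessing deterministic process is still parked at $n^*$ when $w$ ends. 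You then say that completing $w$ to a successful run of $\Nnd$ forces this process to reach $n_\fin$, ``a contradiction'' --- but no contradiction arises: the completion lies outside $w$ and is under no obligation to respect $\leqN$, so it may simply execute $n^*$ there, after $m$, and carry the process on to $n_\fin$. Nothing you have established rules this out.

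This is not merely an expository gap: for an arbitrary order-respecting run $w$ the identification $n^*=m$ can genuinely fail. Take deterministic processes $d_1,d_2$ and a non-deterministic process $p$, with nodes $n_\init$, $A$, $B$, $n_\fin$, where $\dom(A)=\{d_1,p\}$, $\dom(B)=\{d_2,p\}$, the initial node sends $d_1$ to $A$ and $d_2$ to $B$, it sends $p$ to the set $\{A,n_\fin\}$, and $p$ goes from $A$ to $B$ and from $B$ to $n_\fin$. This negotiation is sound, acyclic and weakly non-deterministic, and the edge from $A$ to $B$ via $p$ forces $A\leN B$; yet $(n_\init,a)(B,a)$ is a run of $\Nnd$ respecting $\leqN$ that cannot be lifted to $\Nn$, because after $n_\init$ only $A$ is enabled in $\Nn$. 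What makes the step work everywhere the lemma is actually invoked is that $w$ is there a \emph{maximal} run of $\Nnd$ (successful or deadlocking), so every node enabled in $C_v^D$ must eventually be fired inside $w$ itself, and order-respecting then yields $m\leqN n^*$. The paper's own proof is equally terse at this point (it simply asserts that $m$ is the $\leqN$-smallest node enabled in $C^D_v$ because $w$ respects the order), so you have reproduced its structure faithfully; but to make the argument sound you must either assume maximality of $w$, or otherwise supply a reason why $n^*$ would have to be fired within $w$ after $m$.
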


\begin{proof}
  The proof is by induction on the length of $v$.
  Consider a prefix $v(m,a)$ of $w$. 
  Recall that $w$ respects the order $\leqN$, thus  $m$ is the $\leqN$-smallest atomic negotiation enabled
  in $C^D_v$. In particular, Lemma~\ref{lem:small} (2) applied to $\Nn_D$ implies that $m$ is the 
  $\leqN$-smallest atomic negotiation with $C_v(d)=m$ for some deterministic process. Applying Lemma~\ref{lem:small} (1) to 
  $\Nn$ shows finally that $m$ is also enabled in $C_v$.  
\end{proof}

The next lemma gives a necessary condition for the
soundness of $\Nn$ that is easy to check. It is proved by showing that $\Nnd$ cannot have
much more behaviors than $\Nn$.  
\begin{lem}\label{lemma:sound-then-detsound}
    If $\Nn$ is a sound, acyclic, weakly non-deterministic negotiation
    then $\Nnd$ is sound.
\end{lem}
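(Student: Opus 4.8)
The plan is to derive soundness of $\Nnd$ by transferring the completion of runs from $\Nn$ down to $\Nnd$. First I would note that $\Nnd$ is itself acyclic (its graph is a subgraph of the graph of $\Nn$) and deterministic (it retains only deterministic processes), so it has no infinite runs and soundness reduces to showing that \emph{every} reachable configuration of $\Nnd$ can be completed to a successful run. Also, since $\Nn$ is weakly non-deterministic, every node has a deterministic process in its domain, so $\Nnd$ keeps all nodes of $\Nn$ (i.e.\ $N'=N$) and $\d'(n,a,d)=\d(n,a,d)$ for every deterministic $d$; this identity of transition functions is what will make the projection work.

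Next I would fix an arbitrary reachable configuration $C^D$ of $\Nnd$, witnessed by a run $\Cinit^D \act{w} C^D$. Applying Lemma~\ref{lemma:order} to the acyclic negotiation $\Nnd$, I may assume $w$ respects $\leqN$, as reordering adjacent independent moves does not change the configuration reached. Now Lemma~\ref{lemma:from-det-to-nondet} applies (its hypotheses, including soundness of $\Nn$, are exactly in force), and taking the full run as the prefix it lifts $w$ to a run $\Cinit \act{w} C$ of $\Nn$ with $C(d)=C^D(d)$ for every deterministic process $d$. Thus $C$ is a reachable configuration of $\Nn$ whose restriction to the deterministic processes is precisely $C^D$.

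Then I would invoke soundness of $\Nn$ to extend $C$ to a successful run $C \act{v} \Cfin$, and project this run back onto the deterministic processes. The core verification is a short induction showing that if $C_1 \act{(n,a)} C_2$ in $\Nn$, then the restrictions satisfy $C_1^D \act{(n,a)} C_2^D$ in $\Nnd$: the node $n$ is enabled in $C_1^D$ because it is enabled in $C_1$ and the deterministic ports agree, and the successor configurations match because $\d'=\d$ on deterministic processes. Applying this step by step to $C \act{v} \Cfin$ yields a run $C^D \act{v} \Cfin^D$ of $\Nnd$; and since $\Cfin(p)=\set{n_\fin}$ for all $p$, its restriction $\Cfin^D$ is exactly the final configuration of $\Nnd$. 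Hence $C^D$ completes to a successful run, and as $C^D$ was arbitrary, $\Nnd$ is sound.

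I expect the only delicate point to be the bookkeeping around the two directions of the correspondence: ensuring the run reaching $C^D$ can be taken to respect $\leqN$ so that Lemma~\ref{lemma:from-det-to-nondet} is applicable, and checking that projecting a run of $\Nn$ onto the deterministic processes yields a genuine run of $\Nnd$ rather than dropping or merging moves. Both rest entirely on weak non-determinism guaranteeing $N'=N$, so that no executed pair $(n,a)$ ever becomes invisible under the projection; the rest is the easy half of the $\Nn$--$\Nnd$ correspondence already remarked upon before Lemma~\ref{lemma:from-det-to-nondet}.
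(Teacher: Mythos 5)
Your proof is correct and rests on the same two ingredients as the paper's: reordering the run of $\Nnd$ to respect $\leqN$ (Lemma~\ref{lemma:order}) and lifting it to $\Nn$ (Lemma~\ref{lemma:from-det-to-nondet}). The only difference is presentational: the paper argues contrapositively, lifting a hypothetical deadlock of $\Nnd$ and observing the lifted configuration must deadlock $\Nn$, whereas you argue directly by completing the lifted run in $\Nn$ and projecting it back down — the extra projection step you verify is exactly the easy observation the paper records just before Lemma~\ref{lemma:from-det-to-nondet}.
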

{
\begin{proof}
  Suppose to the contrary that $\Nnd$ has a run reaching a deadlock
  configuration $\Cinit\act{w} C$.
  By Lemma~\ref{lemma:order} we can assume that $w$ respects the $\leqN$
  ordering.
  By Lemma~\ref{lemma:from-det-to-nondet} we get a run of $\Nn$ to a
  deadlock configuration, but this is impossible.
  \end{proof}
}

\subsubsection{Negotiations with one non-deterministic process}
\label{subsub:one}

We  consider a special case of a negotiation with only one
non-deterministic process. The next lemma establishes the connection to the omitting problem: $\Nn$ is unsound
if the deterministic part $\Nnd$ is unsound, or $\Nnd$ is sound but has a certain successful omitting run.

\begin{lem}\label{lem:1nd} Let $\Nn$ be an acyclic, weakly
non-deterministic negotiation with a single non-deterministic process
$p$. Then $\Nn$ is not sound if and
only if either:
\begin{itemize}
\item $\Nn_D$ is not sound, or
\item $\Nn_D$ is sound, and it has  two nodes $m\leqN n$ with  results $a\in\out(m)$, $b\in\out(n)$ such that:
  \begin{itemize}
    \item $p\in\dom(m)\cap\dom(n)$,  $n\not\in \d(m,a,p)$, and
    \item there is a successful run of $\Nnd$ containing
      $P=\set{(m,a),(n,b)}$ and omitting
      $B=\set{n'\in \d(m,a,p) : m\leN n' \leN n}$.
  \end{itemize}
\end{itemize}
\end{lem}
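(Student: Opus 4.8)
The plan is to prove both implications, separating the disjunct ``$\Nnd$ is not sound'' from the structural witness. The implication from ``$\Nnd$ unsound'' to ``$\Nn$ unsound'' is immediate from the contrapositive of Lemma~\ref{lemma:sound-then-detsound}, so in both directions I may restrict attention to the case where $\Nnd$ is sound; all the work then concerns relating unsoundness of $\Nn$ to the existence of $m\leN n$, results $a,b$, and the omitting run of $\Nnd$. Throughout I use that, $\Nn$ being acyclic, each node occurs at most once in a run and, by Lemma~\ref{lemma:order}, every run may be assumed to respect $\leqN$; in particular a run executing both $m$ and $n$ with $m\leN n$ executes $(m,a)$ strictly before $(n,b)$. (I treat the meaningful case $m\leN n$; a witness with $m=n$ forces $a=b$ and $B=\es$ and does not certify unsoundness.)

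For the direction from the witness to unsoundness, assume $\Nnd$ is sound and the data $m\leN n$, $a\in\out(m)$, $b\in\out(n)$ and a successful run $\rho$ of $\Nnd$ including $\{(m,a),(n,b)\}$ and omitting $B=\{n'\in\d(m,a,p):m\leN n'\leN n\}$ are given; by Lemma~\ref{lemma:order} take $\rho$ to respect $\leqN$. Suppose, for contradiction, that $\Nn$ is sound, and let $v$ be the prefix of $\rho$ ending just before $(n,b)$. By Lemma~\ref{lemma:from-det-to-nondet} (applicable since $\Nn$ is sound) there is a run $\Cinit\act{v}C_v$ of $\Nn$ agreeing with the $\Nnd$-run on all deterministic processes; write $C^D_v$ for the corresponding configuration of $\Nnd$. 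I claim $C_v(p)=\d(m,a,p)$: process $p$ performs $(m,a)$ inside $v$ and cannot move afterwards, since its only next nodes lie in $\d(m,a,p)$ and within $v$ each of those is either in $B$ (hence omitted) or $\leqN$-larger than $n$ (hence not yet executed). As $n\notin\d(m,a,p)$, node $n$ is not enabled in $C_v$, while every deterministic $d\in\dom(n)$ already has $C_v(d)=\{n\}$.

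To reach a contradiction I show $p$ can never become ready for $n$ from $C_v$. Because $\rho$ respects $\leqN$ and reaches $\Cfin^D$, every deterministic process $d$ with $C^D_v(d)=n_d$ executes $n_d$ somewhere after $v$ in $\rho$, so $n\leqN n_d$ (taking $n_\fin$ to be $\leqN$-maximal). Hence for each $n'\in B$ (which satisfies $n'\leN n$) and each deterministic $d\in\dom(n')$ we get $n'\leN n_d$, so by acyclicity $d$ has passed $n'$ and $n'$ can never again be enabled. Since any $p$-path from $\d(m,a,p)$ towards $n$ must start with a node of $\d(m,a,p)$ that is $\leN n$, i.e.\ a node of $B$, process $p$ is permanently blocked and $n$ is never executed; the frozen deterministic $d\in\dom(n)$ then never leaves $n$, contradicting soundness of $\Nn$. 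For the converse I assume $\Nn$ unsound and $\Nnd$ sound. Acyclicity gives a reachable deadlock $C$ of $\Nn$, reached by a run $u$ that I take to respect $\leqN$; restricting to deterministic processes yields a reachable $C^D$ of $\Nnd$. Let $m$ be the last node $p$ executes along $u$ (it has moved, as the first event is $n_\init\ni p$) and $a$ its result, so $C(p)=\d(m,a,p)$. Let $n$ be the $\leqN$-smallest node for which some deterministic process is ready in $C^D$; by Lemma~\ref{lem:small}(1) for the sound $\Nnd$, $n$ is enabled in $C^D$, so every deterministic $d\in\dom(n)$ has $C^D(d)=\{n\}$. Since $C$ is a deadlock of $\Nn$ but these participants are ready, the blocker is $p$: thus $p\in\dom(n)$, $n\notin\d(m,a,p)$, and $m\leN n$ (by Lemma~\ref{lem:small}(2) together with $m$ being already executed while $n$ is pending). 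Completing $C^D$ to $\Cfin^D$ by soundness of $\Nnd$, prepending $u$, and reordering by Lemma~\ref{lemma:order} gives a successful run $\rho$ of $\Nnd$ respecting $\leqN$ and containing $(m,a)$; by Lemma~\ref{lem:small}(2) no node of $B$ (all $\leN n$) is executed after $u$, and none inside $u$ (that would move $p$ past $m$), so $\rho$ omits $B$; and provided $n\neq n_\fin$, the completion executes $n$ with some $b\in\out(n)$, yielding the witness.

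The step I expect to be the crux is ensuring the blocking node $n$ differs from $n_\fin$, since only then does it carry a result $b$. This means excluding the degenerate deadlock in which all deterministic processes have reached $n_\fin$ while $p$ is stranded in $\d(m,a,p)$: there the smallest deterministically-enabled node is $n_\fin$, which has no result, and no witness of the required shape exists. Resolving this relies on the standing well-formedness assumption that every node is reachable from $n_\init$ by a local path of each of its participating processes, so that the deterministic co-participants of the nodes in $\d(m,a,p)$ can actually synchronise with $p$; this forces the smallest enabled node to be non-final whenever $C^D\neq\Cfin^D$. The remaining bookkeeping — that $\rho$ genuinely omits $B$, that $m\leN n$ holds strictly, and that the reordering of Lemma~\ref{lemma:order} preserves both the executed node set and the order constraints — is routine.
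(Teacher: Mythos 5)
Your right-to-left direction is sound and follows essentially the same route as the paper (lift the $\leqN$-respecting omitting run of $\Nnd$ to $\Nn$ via Lemma~\ref{lemma:from-det-to-nondet}, observe that $p$ is frozen at $\d(m,a,p)$ because every member of that set is either in $B$ or $\leqN$-larger than $n$, and conclude that the deterministic participant of $n$ can never leave it). The left-to-right direction, however, has a genuine gap: you define $m$ as \emph{the last node $p$ executes along $u$} and then assert $m\leN n$ ``by Lemma~\ref{lem:small}(2) together with $m$ being already executed while $n$ is pending.'' Lemma~\ref{lem:small}(2) only constrains the nodes executed \emph{after} the configuration $C$; it says nothing about nodes executed before reaching $C$, and a $\leqN$-respecting run can perfectly well execute a node $\leqN$-larger than $n$ while skipping $n$ itself. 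Concretely: take processes $p$ (non-deterministic), $d_1,d_2$, nodes $n_\init\leN n_1\leN n_2\leN n_\fin$ with $\dom(n_1)=\{p,d_1\}$, $\dom(n_2)=\{p,d_2\}$, $\d(n_\init,a,p)=\{n_1,n_2\}$, $\d(n_\init,a,d_i)=n_i$, and all three of $p,d_1,d_2$ going to $n_\fin$ after $n_1$ resp.\ $n_2$. Then $\Nnd$ is sound, and the $\leqN$-respecting run $(n_\init,a)(n_2,a)$ of $\Nn$ deadlocks with $d_1$ stuck at $n_1$; here the $\leqN$-smallest deterministically occupied node is $n=n_1$, but the last node $p$ executed is $m=n_2\gN n_1$, so your witness violates the requirement $m\leqN n$ of the statement. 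The paper avoids exactly this by defining $m$ as the last node involving $p$ that is \emph{$\leqN$-smaller than $n$}, which secures $m\leN n$ by fiat; the trade-off is that with that choice one no longer has $C(p)=\d(m,a,p)$ for free, so the condition $n\notin\d(m,a,p)$ (which you obtain immediately from $n\notin C(p)$) needs its own justification. The two requirements pull in opposite directions, and your write-up resolves neither tension.

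Two smaller points. First, your claim that $\rho$ omits $B$ inside $u$ (``that would move $p$ past $m$'') is tied to $m$ being the last $p$-node overall, so it must be re-argued once the definition of $m$ is corrected (the right argument is that executing some $n'\in B\subseteq\d(m,a,p)$ would be a $p$-node $\leN n$ occurring after $m$, contradicting maximality of $m$ among $p$-nodes $\leN n$). Second, your treatment of the case $n=n_\fin$ appeals to a ``standing well-formedness assumption'' that every node is reachable from $n_\init$ by a local path of each of its participants; no such blanket assumption is made in this section of the paper (it appears only as an explicit hypothesis of Lemma~\ref{lem:unsound-deadlock} and in Remark~\ref{rem:soundequiv}), so you are importing a hypothesis the statement does not grant you, and even granting it, the argument that it excludes a deadlock with all deterministic processes at $n_\fin$ is not spelled out.
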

{
\begin{proof}
  Consider the right-to-left direction.  We abbreviate $S_p:=\d(m,a,p)$. 
  If $\Nnd$ is not sound then by
  Lemma~\ref{lemma:sound-then-detsound}, $\Nn$ is not sound.

  Suppose then that $\Nn_D$ satisfies the second item from the
  statement of the lemma, and take a run $w$ of $\Nnd$ as it is
  assumed there. 
  By Lemma~\ref{lemma:order} we can assume that this run respects
  $\leqN$.
  Towards a contradiction suppose also that $\Nn$ is sound.
  Lemma~\ref{lemma:from-det-to-nondet} says that $w$ is also a run of
  $\Nn$. 
  Let $C_1$ be the configuration of this run just after $(m,a)$  was executed, 
  so we have  $C_1(p)=S_p$.
  Let $C_2$ be the first configuration after $C_1$ such that $C_2(d)=n$
  for some process $d$ (it may be that $C_2=C_1$).
  We have  $C_2(p)=S_p$ since the run $w$ omits $\set{n'\in S_p : m\leNd n'
    \leNd n}$, so $p$ cannot move between $C_1$ and $C_2$.
  When we continue following $w$ from $C_2$ we see that 
  either $p$ will never move, or it will move to some $n'\not=n$ with
  $n\leNd n'$. But then $d$ will not be able to move. 
  So this run leads to a deadlock, contradiction with the soundness
  of $\Nn$.

  For the left-to-right direction, assume that $\Nn_D$ is sound. We
  need to show the second item of the lemma.  
  Observe that since $\Nn$ is acyclic and not sound, there is a
  run $C_\init\act{w} C$ where $C$ is a deadlock.   
  By Lemma~\ref{lemma:order} we can assume that $w$ respects $\leqN$.
  Let $n$ be the $\leqN$-smallest  atomic negotiation such that
  $C(d)=n$ for some deterministic process $d$. 
  Applying Lemma~\ref{lem:small} (1) to $\Nn_D$ yields that  
  $n$ must be deterministically enabled in $C$: that is $C(d')=n$ for
  all deterministic processes $d'\in\dom(n)$. 
  This implies $p\in \dom(n)$, and $n\not\in C(p)$.

  Let us split $w$ as $u(m,a)v$ where $p\in \dom(m)$ and all
  atomic negotiations in $v$ involving $p$ are $\leqN$-bigger than $n$ 
  (it may be that $v$ is empty). That is to say, we define $m$ as the last atomic negotiation in $w$ involving $p$ that is $\leqN$-smaller than $n$.
  Let $C_m$ be the configuration reached after doing $(m,a)$:
  $\Cinit\act{u(m,a)} C_m$.
  Take $S_p=C_m(p)=\d(m,a,p)$. 
  By the choice of $m$ and $v$, the run from $C_m$ to $C$ does not use 
  any atomic negotiation from the set
  $\set{n'\in S_p :  m\leN n' \leN n}$.
  
  By soundness of $\Nnd$, from $C^D$ there is a run to the final
  configuration, and by the choice of $n$ and Lemma~\ref{lem:small} (2) this run cannot use 
  any atomic negotiation that is $\leqN$-smaller than $n$. 
  Let $b$ be the result such that $(n,b)$ appears in this run.
  Putting these pieces together we have a successful run of $\Nnd$
  containing $\set{(m,a),(n,b)}$ 
  and omitting $\set{n'\in S_p :  m\leN n' \leN n}$.
  We have already observed that  $p\in\dom(m)\cap\dom(n)$ and
  $n\not\in S_p=\d(m,a,p)$. So all the requirements of the lemma are met.
\end{proof}
}

\SKIP{
\begin{lem}\label{lemma:ptime-test}
  Soundness of acyclic, weakly
non-deterministic negotiations with only one non-deterministic process
can be checked in \PTIME.
\end{lem}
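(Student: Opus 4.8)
The plan is to turn the characterization of Lemma~\ref{lem:1nd} directly into a polynomial-time decision procedure, using Theorem~\ref{thm:omitting} to discharge its second disjunct. First I would decide whether $\Nnd$ is sound. Since $\Nnd$ is the restriction of $\Nn$ to its deterministic processes, it is deterministic and, being a restriction of an acyclic negotiation, acyclic. Hence by Theorem~\ref{th:sound-det} its soundness is decidable in $\NLOGSPACE \subseteq \PTIME$. If $\Nnd$ is unsound, then by the first disjunct of Lemma~\ref{lem:1nd} the negotiation $\Nn$ is unsound, and we report this.

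Suppose now $\Nnd$ is sound. By Lemma~\ref{lem:1nd}, $\Nn$ is unsound if and only if there exist nodes $m\leqN n$ and results $a\in\out(m)$, $b\in\out(n)$ with $p\in\dom(m)\cap\dom(n)$ and $n\notin\d(m,a,p)$, such that $\Nnd$ has a successful run containing $P=\set{(m,a),(n,b)}$ and omitting $B=\set{n'\in\d(m,a,p) : m\leN n' \leN n}$. The procedure I would run enumerates all candidate tuples $(m,a,n,b)$, keeps those satisfying the purely structural side conditions (which are checkable directly from $\d$, $\dom$, and the order $\leqN$), and for each such tuple computes the associated sets $P$ and $B$ and tests the existence of the required run.

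The crux is a counting observation together with an applicability check. The number of candidate tuples is $O(|N|^2\,|R|^2)$, hence polynomial in the size of $\Nn$; and for every fixed tuple we have $|P|\le 2$. Since in this branch $\Nnd$ is sound, acyclic, and deterministic, each instance $(\Nnd,P,B)$ is a $K$-omitting instance with $K=2$, so Theorem~\ref{thm:omitting} decides it in $\PTIME$. Iterating this $\PTIME$ test over the polynomially many tuples stays polynomial, and $\Nn$ is unsound exactly when one of these tests succeeds (or $\Nnd$ was already unsound).

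There is essentially no combinatorial obstacle left, as Lemma~\ref{lem:1nd} and Theorem~\ref{thm:omitting} already carry the weight; the only points that genuinely need verification are bookkeeping ones, namely that $\Nnd$ inherits acyclicity and determinism by construction and that its soundness is precisely the standing hypothesis of the branch in which we invoke the omitting theorem, so that the hypotheses of Theorem~\ref{thm:omitting} are met on every instance. With these checks in place the overall algorithm is a polynomial-time membership test, establishing the claim.
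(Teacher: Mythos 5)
Your proposal is correct and follows essentially the same route as the paper's own (very terse) proof: enumerate the polynomially many tuples $(m,a,n,b)$ from Lemma~\ref{lem:1nd} and discharge each via Theorem~\ref{thm:omitting} with $|P|=2$, after first disposing of the case where $\Nnd$ is unsound. The extra bookkeeping you supply (that $\Nnd$ inherits acyclicity and determinism, and that its soundness is exactly the hypothesis needed to invoke the omitting theorem) is precisely what the paper leaves implicit.
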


\begin{proof} 
For every $m\leqN n$, $a$ and $b$ we check the conditions described in
Lemma~\ref{lem:1nd}. 
The existence of a run of $\Nnd$ can be checked in \PTIME\ thanks to
Theorem~\ref{thm:omitting} and the fact that the size of $P$ is always
$2$. 
\end{proof}
}



\subsubsection{General weakly non-deterministic negotiations}
\label{subsub:more}
The next lemma deals with the case where there is more than one
non-deterministic process. Loosely speaking, in this case $\Nn$ is unsound 
if{}f there is a non-deterministic process such that the restriction of $\Nn$
to the deterministic processes \emph{and} this process is unsound.

\begin{lem}\label{l:vw-oneproc}
  An acyclic, weakly non-deterministic negotiation $\Nn$ is unsound if{}f:
  \begin{enumerate}
  \item its restriction $\Nnd$ to deterministic processes is unsound, or
\item for some non-deterministic process $p$, its restriction $\Nn^p$ to $p$
  and the deterministic processes is unsound.
  \end{enumerate}
\end{lem}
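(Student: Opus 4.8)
The plan is to prove the two implications separately, exploiting that $\Nnd$ and each $\Nn^p$ are acyclic, weakly non-deterministic negotiations whose deterministic processes are exactly those of $\Nn$. Since $\Nn$ is weakly non-deterministic, every node has a deterministic process in its domain and therefore survives each of these restrictions; hence $\Nnd$, each $\Nn^p$, and $\Nn$ share the same node set and the same transition function on common processes. The one structural fact I would isolate first strengthens the observation preceding Lemma~\ref{lemma:from-det-to-nondet}: if $w$ is an initial run of $\Nn$, then the same word $w$ is an initial run of each of these restrictions, and for a fixed process $q$ the set of nodes $q$ is ready to engage in after $w$ depends only on the subsequence of $w$ involving $q$. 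Consequently the configuration reached by $w$, read off on the processes of $\Nn^p$, is the same whether $w$ is executed in $\Nn$, in $\Nn^p$, or (on the deterministic part) in $\Nnd$. Throughout I use that, by acyclicity, soundness is equivalent to having no reachable deadlock, and Lemma~\ref{lemma:order} to assume all runs respect $\leqN$.

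\textbf{The ``only if'' direction.} I would argue the contrapositive: assuming $\Nnd$ is sound and every $\Nn^p$ is sound, I show $\Nn$ is sound. If it were not, pick a reachable deadlock $C$ reached by a run $w$ respecting $\leqN$, and let $n$ be the $\leqN$-smallest node with $C(d)=n$ for some deterministic process $d$. Applying Lemma~\ref{lem:small}(1) to the sound negotiation $\Nnd$ at the projected configuration shows that $n$ is enabled in $\Nnd$, i.e.\ every deterministic participant of $n$ holds $n$ in $C$; since $C$ is nonetheless a deadlock of $\Nn$, the node $n$ must be blocked by a \emph{non-deterministic} process $p\in\dom(n)$ with $n\notin C(p)$. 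Now project $w$ to $\Nn^p$: by the structural fact the deterministic participants of $n$ still hold $n$ while $p$ still omits $n$, so $n$ is not enabled in $\Nn^p$. But $n$ is again the $\leqN$-smallest deterministically held node, so Lemma~\ref{lem:small}(1) applied to the \emph{sound} negotiation $\Nn^p$ forces $n$ to be enabled there, a contradiction.

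\textbf{The ``if'' direction.} Here I treat the two cases. If $\Nnd$ is unsound then $\Nn$ is unsound by the contrapositive of Lemma~\ref{lemma:sound-then-detsound}. For an unsound $\Nn^p$ I again argue contrapositively, showing that $\Nn$ sound implies $\Nn^p$ sound. Given a deadlock of $\Nn^p$ reached by a run $w$ respecting $\leqN$, I project $w$ to an initial run of $\Nnd$ and invoke Lemma~\ref{lemma:from-det-to-nondet} (legitimate because $\Nn$ is sound) to lift it to an initial run of $\Nn$ executing the same word $w$; by the structural fact this $\Nn$-run reaches a configuration $\hat C$ agreeing with the $\Nn^p$-deadlock on all processes of $\Nn^p$. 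Lemma~\ref{lem:small}(1) applied to the sound $\Nn$ then shows that the $\leqN$-smallest deterministically held node of $\hat C$ is enabled in $\Nn$, hence all its participants, $p$ included, hold it; transporting this equality back contradicts the assumption that the corresponding $\Nn^p$-configuration was a deadlock.

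\textbf{Main obstacle.} The crux is that soundness does not transfer by naively changing the process set: projecting a deadlock of $\Nn$ down to $\Nn^p$ need not yield a deadlock of $\Nn^p$, because fewer processes mean fewer synchronizations and hence possibly more enabled nodes; conversely a deadlock run of $\Nn^p$ need not even be a run of $\Nn$. The plan avoids transporting deadlocks directly. In the ``only if'' direction it transports only the \emph{blocking} non-deterministic process and plays the soundness of $\Nn^p$ against it via Lemma~\ref{lem:small}(1); in the ``if'' direction it uses the controlled lifting of Lemma~\ref{lemma:from-det-to-nondet}, whose applicability rests precisely on the soundness of $\Nn$ being assumed. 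The technical linchpin making both transports valid is the invariance of a single process's reached state as a function of the executed word across $\Nn$, $\Nn^p$, and $\Nnd$, which I would verify carefully at the outset.
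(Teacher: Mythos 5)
Your proof is correct and follows essentially the same route as the paper's: both directions reduce to deadlock analysis via Lemma~\ref{lem:small}(1), the normalization of runs by Lemma~\ref{lemma:order}, and the transfer of the same word $w$ between $\Nn$ and its restrictions through the shared deterministic skeleton (with Lemma~\ref{lemma:sound-then-detsound} handling the $\Nnd$ case). The only cosmetic difference is in the ``$\Nn^p$ unsound $\Rightarrow$ $\Nn$ unsound'' step, where you argue contrapositively and lift the run with Lemma~\ref{lemma:from-det-to-nondet}, while the paper directly executes $w$ in $\Nn$ and shows that the point where it gets stuck (if any) is already a deadlock.
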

\begin{proof}
 For the right-to-left direction the case where $\Nnd$ is unsound
 follows directly from Lemma~\ref{lemma:sound-then-detsound}.
 It remains to check the case where $\Nn^p$ is not sound for some
 non-deterministic process $p$.
 Consider a run $C^p_\init\act{w} C^p$ where $C^p$ is a deadlock.
 By Lemma~\ref{lemma:order} we can assume that $w$ respects $\leqN$.
 Now let us try to make $\Nn$ execute the sequence $w$. 

 If $C_\init\act{w} C$ is a run of $\Nn$ then $C$ is a deadlock.
 Indeed if in $\Nn$ it would be possible to do $C\act{(n,a)} $ for
 some $(n,a)$ then $C^p\act{(n,a)}$ would be possible in $\Nn^p$. 

 The other case is when in $\Nn$ it is not possible to execute all the
 sequence $w$.
 Then we have $w=v(n,a)v'$, a run $C_\init\act{v}C_1$ and from $C_1$
 action $(n,a)$ is not possible.
 Since $C^p_\init\act{v}C^p_1\act{(n,a)}C^p_2$ is a run of $\Nn^p$, we
 know from Lemma~\ref{lem:small} (2) that there is a deterministic process $d$ with $C_1(d)=n$, and
 $n\leqN C_1(d')$ for all other deterministic processes $d'$. 
 Thus $C_1$ is a deadlock because by Lemma~\ref{lem:small} (1) if
there is an action possible from $C_1$ then this must be $n$.

For the left-to-right direction, suppose that $\Nn$ is not sound and
take a run $\Cinit\act{w} C$ with $C$ a deadlock configuration. 
  Take the $\leqN$-smallest atomic negotiation $n$ such
  that $n=C(d)$ for some deterministic process $d$. 
  Consider an arbitrary non-deterministic process $p$ and a run $\Cinit^p\act{w}
  C^p$ of $\Nn^p$; it is indeed a run since $\Nn^p$ is a restriction
  of $\Nn$. 
  As $\Nn^p$ is sound, it is possible to extend this run. 
  By Lemma~\ref{lem:small} (1), it should be possible to execute $n$
  from $C^p$. 
  Hence for every deterministic process $d\in\dom(n)$, we have
  $C(d)=n$.
  Moreover $n\in C(p)$ if $p\in \dom(n)$ is non-deterministic.
  Since the choice of $p$ was arbitrary, we have $n\in C(p)$ for all
  $p\in\dom(n)$. 
  Thus it is possible to execute $n$ from $C$, a contradiction. 
\end{proof}

\begin{thm}\label{th:weak-sound}
Soundness can be decided in $\PTIME$ for acyclic, weakly non-deterministic negotiations.
\end{thm}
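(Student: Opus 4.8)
The plan is to combine the structural reduction of Lemma~\ref{l:vw-oneproc} with the polynomial tests developed in the preceding sections, so that soundness of $\Nn$ is decided by a bounded number of polynomial-time subroutines. First I would invoke Lemma~\ref{l:vw-oneproc}: an acyclic, weakly non-deterministic negotiation $\Nn$ is unsound if{}f either $\Nnd$ is unsound, or $\Nn^p$ is unsound for some non-deterministic process $p$. Since $\Nn$ has at most $|\Proc|$ non-deterministic processes, this reduces the soundness check for $\Nn$ to at most $|\Proc|+1$ independent soundness checks: one for the fully deterministic negotiation $\Nnd$, and one for each single-non-deterministic-process restriction $\Nn^p$. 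Each $\Nn^p$ is itself acyclic and weakly non-deterministic (it contains all deterministic processes, so every node still has a deterministic process in its domain) with exactly one non-deterministic process, and $\Nnd$ is acyclic and deterministic. The whole computation loops over polynomially many $p$, so it suffices to show each individual check is in \PTIME.

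Next I would dispatch the two kinds of check. Soundness of the deterministic negotiation $\Nnd$ is decidable in \PTIME\ (indeed in \NLOGSPACE) by Theorem~\ref{th:sound-det}, via the anti-pattern characterization. For each single-non-deterministic-process restriction $\Nn^p$, I would apply Lemma~\ref{lem:1nd}, which characterizes unsoundness of such a negotiation as: either $(\Nn^p)_D = \Nnd$ is unsound, or $\Nnd$ is sound and there exist nodes $m\leqN n$ with results $a\in\out(m)$, $b\in\out(n)$ satisfying $p\in\dom(m)\cap\dom(n)$, $n\notin\d(m,a,p)$, and a successful run of $\Nnd$ including $P=\set{(m,a),(n,b)}$ and omitting $B=\set{n'\in\d(m,a,p): m\leN n'\leN n}$. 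The first disjunct is again handled by Theorem~\ref{th:sound-det}. For the second disjunct, there are only polynomially many choices of the quadruple $(m,n,a,b)$, so I would enumerate all of them; each $B$ is computable directly from the negotiation, and $|P|=2$ is constant, so the existence of the required omitting run is an instance of the $K$-omitting problem with $K=2$ on the sound, acyclic, deterministic negotiation $\Nnd$. By Theorem~\ref{thm:omitting} this is decidable in \PTIME. Crucially, the hypothesis of Theorem~\ref{thm:omitting} that $\Nnd$ be sound is exactly the case we are in when testing the second disjunct, so the theorem applies.

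Putting this together, the overall algorithm is: compute $\Nnd$ and test its soundness; if sound, then for every non-deterministic process $p$ and every quadruple $(m,n,a,b)$ meeting the side conditions of Lemma~\ref{lem:1nd}, run the $2$-omitting test on $\Nnd$; declare $\Nn$ unsound if{}f $\Nnd$ is unsound or some $\Nn^p$ is found unsound. The running time is a polynomial number of iterations, each calling a \PTIME\ subroutine, so the total is polynomial. The main point to verify carefully—rather than a genuine obstacle, since the machinery is already assembled—is that the soundness precondition of the omitting theorem is legitimately available: the second disjunct of Lemma~\ref{lem:1nd} is predicated on $\Nnd$ being sound, which matches the requirement of Theorem~\ref{thm:omitting}, and the hypotheses of Lemma~\ref{lem:1nd} itself require $\Nn^p$ to be acyclic and weakly non-deterministic with a single non-deterministic process, all of which are inherited by the restriction. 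Once these bookkeeping conditions are checked, the \PTIME\ bound follows immediately from the cited results.
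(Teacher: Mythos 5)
Your proposal is correct and follows essentially the same route as the paper: reduce via Lemma~\ref{l:vw-oneproc} to the single-non-deterministic-process case, then enumerate the quadruples $(m,n,a,b)$ of Lemma~\ref{lem:1nd} and decide each via the $2$-omitting problem using Theorem~\ref{thm:omitting}. The extra bookkeeping you spell out (soundness of $\Nnd$ via Theorem~\ref{th:sound-det}, and the precondition of the omitting theorem being supplied by the second disjunct of Lemma~\ref{lem:1nd}) is exactly what the paper's terser proof implicitly relies on.
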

\begin{proof}
  By Lemma~\ref{l:vw-oneproc} we can restrict to negotiations $\Nn$ with one
  non-deterministic process.
For every $m\leqN n$, $a$ and $b$ we check the conditions described in
Lemma~\ref{lem:1nd}. 
The existence of a run of $\Nnd$ can be checked in \PTIME\ thanks to
Theorem~\ref{thm:omitting} and the fact that the size of $P$ is $2$.
%
\end{proof}


\section{Beyond determinism: Intractable cases}
\label{sec:beyond-det-intract}
We show that if we remove any of the two assumptions of Theorem \ref{th:weak-sound} (acyclicity and weak non-determinism) then the soundness problem becomes
\coNP-complete. In fact, even a very mild relaxation of acyclicity
suffices.

It is not very surprising that deciding soundness for acyclic, 
non-deterministic negotiations is \coNP-complete. In the acyclic case
the negotiation is unsound if{}f it can reach a deadlock. Clearly,
by acyclicity this would happen after at most $|N|$ steps.  So it suffices to guess a run of linear length
step by step and check if it leads to a deadlock. \coNP-hardness is
proved by a simple reduction of SAT to the complement of the soundness problem. 
The reduction, which strongly relies on non-determinism, is 
presented in \cite{negI}. We get:

\begin{prop}[\cite{negI}]
  Soundness of acyclic non-deterministic negotiations is
  \coNP-complete.
\end{prop}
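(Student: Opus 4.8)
The plan is to treat membership and hardness separately, since soundness being in \coNP\ amounts to showing that \emph{unsoundness} is in \NP, and \coNP-hardness amounts to reducing SAT to unsoundness.

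For membership I would first use acyclicity to bound the length of runs. Whenever $(n,a)$ is executed from a configuration $C$, every process $p\in\dom(n)$ moves from $n$ into $\d(n,a,p)$, which consists of nodes strictly $\leN$-larger than $n$ because $\leqN$ is a topological order of the (acyclic) graph of $\Nn$. Hence the ready set $C(p)$ of any process can only ever move to $\leqN$-larger nodes, so once a process leaves $n$ it can never have $n$ in its ready set again. Consequently no atomic negotiation is executed twice along a single run, and every run has length at most $|N|$. Since for acyclic negotiations soundness is equivalent to the absence of a reachable deadlock, to certify unsoundness it suffices to guess a run $\Cinit\act{w} C$ with $|w|\le|N|$, verify step by step that each guessed pair is enabled and executable in the current configuration (each configuration being a map $\Proc\to\Pp(N)$ of polynomial size), and finally check that $C$ is a deadlock, i.e.\ that no atomic negotiation is enabled in $C$. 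This is a polynomial-time check of a polynomial-size certificate, so unsoundness is in \NP\ and soundness is in \coNP.

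For hardness I would reduce SAT to the complement of soundness, as already done in \cite{negI}. Given a CNF formula $\f$ over variables $x_1,\dots,x_k$, the idea is to build an acyclic negotiation $\Nn(\f)$ in which non-determinism is used to guess a truth assignment: each variable is assigned by a dedicated branch in which some process nondeterministically commits to $x_i=1$ or $x_i=0$. The rest of the construction is arranged so that a play can reach a deadlock precisely when the guessed assignment satisfies every clause, while any assignment falsifying some clause can always be completed to a successful run. Thus $\Nn(\f)$ admits a reachable deadlock---equivalently, is unsound---if{}f $\f$ is satisfiable, and the construction is polynomial in the size of $\f$.

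The main obstacle is the hardness direction. Membership is essentially routine once acyclicity has been used to bound run length, whereas the real content lies in engineering the gadget so that exactly the satisfying assignments lead to a deadlock; crucially this exploits non-determinism, which is precisely the feature forbidden in the tractable classes treated earlier in the paper. Rather than redo this construction in detail, I would invoke the reduction of \cite{negI}.
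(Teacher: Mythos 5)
Your proposal is correct and follows essentially the same route as the paper: membership via the observation that acyclicity bounds every run by $|N|$ steps (so a reachable deadlock can be guessed and verified in polynomial time, using the fact that for acyclic negotiations unsoundness coincides with reachability of a deadlock), and hardness by citing the SAT reduction of \cite{negI}. The paper's own justification is exactly this two-line sketch preceding the proposition, so your slightly more detailed account of why no node repeats along a run is, if anything, a small improvement in explicitness.
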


Now we consider a very mild relaxation of acyclicity: deterministic processes still need to be acyclic, but non-deterministic processes may have cycles.

Recall that $\Nnd$ is the restriction of $\Nn$ to deterministic processes.
\begin{defi}
A negotiation $\Nn$ is \emph{det-acyclic} if $\Nn_D$ is acyclic.
\end{defi}

It follows easily from this definition that all runs 
of a weakly non-deterministic, det-acyclic negotiation have length at
most $|N|$. 
However, we show that even in the \emph{very} weakly non-deterministic case (c.f. page~\pageref{def:very-weak}) the soundness problem is \coNP-complete. 



\begin{thm}\label{t:nphard}
  Non-soundness of det-acyclic, very weakly non-deterministic
  negotiations is \NP-complete.
\end{thm}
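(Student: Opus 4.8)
The plan is to prove the two directions separately: membership of non-soundness in $\NP$, and $\NP$-hardness by reduction from SAT.

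For membership, I would first record the observation made just before the statement: in a weakly non-deterministic, det-acyclic negotiation every node carries a deterministic process in its domain, and that process, being part of the acyclic subnegotiation $\Nnd$, can be ready for a given node at most once along any run; hence each node fires at most once and every run has length at most $|N|$. In particular there are no infinite runs, so (by the soundness--deadlock equivalence for finite runs recorded in Section~\ref{sec:def}) the negotiation is unsound iff some initial run reaches a deadlock configuration. A deadlock is therefore witnessed by a run of length at most $|N|$, which can be guessed and verified in polynomial time, by replaying it and checking that the final configuration enables no node while differing from $\Cfin$. This places non-soundness in $\NP$.

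For hardness I would reduce from SAT: given $\varphi = C_1 \wedge \cdots \wedge C_m$ over variables $x_1,\dots,x_k$, I build a det-acyclic, very weakly non-deterministic negotiation $\Nn(\varphi)$ that is unsound iff $\varphi$ is satisfiable, so that a satisfying assignment corresponds exactly to a reachable deadlock. The assignment is produced by nondeterministic processes that, variable by variable, choose a truth value; the deterministic part contains clause-checking processes whose only route to $n_\fin$ is blocked precisely when the chosen assignment satisfies every clause, at which point a circular-wait deadlock of the form exploited in Lemma~\ref{lem:comp} is created.

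The delicate point is keeping the construction very weakly non-deterministic: every branching $\d(n,a,p)=\{n',n''\}$ of a nondeterministic process must admit a deterministic process $q \in \dom(n')\cap\dom(n'')$ present in all of its targets, which (as the intuition on page~\pageref{def:very-weak} indicates) tends to let the witness $q$ resolve the choice deterministically and thereby collapse the intended combinatorial freedom. This is exactly where det-acyclicity is used: I would route the nondeterministic processes through back-edges (legal because only $\Nnd$ must be acyclic) so that each local choice is guarded by a witness, yet the guesses accumulate along a cyclic path into a genuine $2^k$-way choice of assignment that no single acyclic deterministic process can pin down. The main obstacle will be designing these gadgets so that the three requirements hold simultaneously: (i) very weak non-determinism (a shared deterministic witness in every pair of branch targets), (ii) det-acyclicity (the projection $\Nnd$ acyclic, with cycles confined to nondeterministic processes), and (iii) faithfulness (a reachable deadlock exists iff $\varphi$ is satisfiable). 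Reconciling (i) with (iii)---honest combinatorial choice despite every local nondeterminism being deterministically witnessed---is the crux, and is precisely what the relaxation from acyclic to det-acyclic buys us; checking that $\Nn(\varphi)$ has size polynomial in $|\varphi|$ and that its deadlock analysis matches satisfying assignments is then routine but careful.
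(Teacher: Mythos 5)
Your membership argument is correct and matches the paper's: every node of a weakly non-deterministic, det-acyclic negotiation has a deterministic process in its domain, that process moves along the acyclic graph of $\Nnd$ and so can be ready for a given node at most once, hence every run has length at most $|N|$; there are no infinite runs, so unsoundness is equivalent to the existence of a reachable deadlock, which can be guessed and verified in polynomial time. This half is fine.

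The hardness direction, however, is a statement of intent rather than a proof. You correctly identify the crux---reconciling very weak non-determinism with a genuine combinatorial choice---but you then say you \emph{would} design gadgets resolving it without designing them, and the entire content of the theorem lives in exactly those gadgets. Worse, the one concrete design decision you commit to, namely having the truth assignment produced by the non-deterministic processes, takes the harder route: every hyper-arc used to guess a value must carry a deterministic witness common to all its targets, which is precisely the collapse you are worried about. The paper resolves the tension differently, in three ways worth noting. First, the assignment is chosen by a \emph{deterministic} process $E$: determinism only forces $\d(n,a,p)$ to be a singleton for each result $a$, but a node may have several results, so a deterministic process can still branch (a node $m_{i-1}$ has results $1$ and $0$ sending $E$ to $n_i^+$ or $n_i^-$). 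Second, very weak non-determinism is obtained for free by making $n_\fin$ one of the two targets of every hyper-arc of every non-deterministic process $P_{j,d}$; since $\dom(n_\fin) = \Proc$, any deterministic process in the domain of the other target serves as witness. Third, det-acyclicity is needed only for a two-node cycle between $r_{j,d}$ and $n_{j,d}$ inside $P_{j,d}$, which lets these nodes occur in either order when the literal $\ell_{j,d}$ is false but forces $m_{j,d}$ to precede $n_{j,d}$ when it is true; a satisfying assignment then imposes the cyclic chain of ordering constraints $t_1 < t'_1 < t_2 < \cdots < t_m < t'_m < t_1$ on the acyclic deterministic processes $T_{j,d}$, $T'_{j,d}$, $V_j$, which cannot be realized, so the run deadlocks. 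Without an explicit construction of this kind, together with a verification of both directions of the equivalence, the reduction---and hence the hardness half of the theorem---is missing.
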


\begin{proof}
  We describe a reduction from 3-SAT and fix a 3-CNF formula
  $\varphi=c_1\wedge\dots\wedge c_m$, with
  clauses $c_1,\ldots,c_m$, each of length 3, and $k$ variables
  $x_1,\ldots, x_k$. Let $c_j=\ell_{j,1} \vee \ell_{j,2} \vee
  \ell_{j,3}$, with $\ell_{j,d} \in \set{x_i,\overline{x_i} : 1 \le i
    \le k}$ for every $1 \le j \le m$, $d\in \set{1,2,3}$. We construct a det-acyclic, very weakly non-deterministic
  negotiation $\Nn$ such that $\varphi$ is satisfiable if{}f $\Nn$ is not sound.

  The atomic negotiations of $\Nn$ are (apart from $n_\init$ and $n_\fin$):
  \begin{itemize}
  \item An atomic negotiation $m_0$, and for each variable $x_i$ three atomic negotiations $n^+_i, n^-_i, m_i$. 
  \item For every pair clause/literal $(j,d)$, $j=1,\dots, m$ and $d=1,2,3$, three atomic negotiations
              $m_{j,d}$, $n_{j,d}$, and $r_{j,d}$. 
  \item For every clause $c_j$, two auxiliary atomic negotiations $t_j$ and $t_j'$.
  \end{itemize}
  The processes of $\Nn$ are:
  \begin{itemize}
  \item A deterministic process $E$. 
  \item For each clause $c_j$, a deterministic process $V_j$.
  \item For every pair clause/literal $(j,d)$, where $j=1,\dots, m$ and $d=1,2,3$: two deterministic processes 
        $T_{j,d}, T'_{j,d}$, and a non-deterministic process $P_{j,d}$.
  \end{itemize}

 Now we describe the behavior of each process $P$ by means of a graph. The
 nodes of the graph for $P$ are the atomic negotiations in which $P$ participates. 
 The graph has an edge $n \rightarrow n'$ if
 there is a result $a$ of $n$ such that $P$ moves with $a$ from $n$ to $n'$.
 If $P$ is nondeterministic, and after $a$ is ready to engage in a set of atomic negotiations
 $\{n_1, \ldots, n_k\}$, then the graph contains a {\em hyperarc} leading from $n$ to $\{n_1, \ldots, n_k\}$.

The graphs of all processes are shown in Figure \ref{fig:processes}.
Intuitively, process $E$ is in charge of producing a valuation of $x_1,\ldots, x_k$:
it chooses between $n^+_1$ and $n^-_1$, then between $n^+_2$ and $n^-_2$, etc. Choosing 
$n^+_i$ stands for setting $x_i$ to true, and choosing $n^-_i$ for setting $x_i$ to false.

   \begin{figure}[h]
     \centering
\begin{tikzpicture}[node distance=.8cm]

\node (init) {$n_\init$};
\node[right= of init] (q1)  {$m_0$};
\node[left= of init] (E) {$E:$};
\node[above right= of q1] (n1m)  {$n_1^+$};
\node[below right= of q1] (n1p)  {$n_1^-$};
\node[below right= of n1m] (q2)  {$m_1$};

\node[right= of q2] (qk)  {$m_{k-1}$};

\node[above right= of qk] (nkm)  {$n_k^+$};
\node[below right= of qk] (nkp)  {$n_k^-$};
\node[below right= of nkm] (qf)  {$m_k$};

\node[right= of qf] (nf)  {$n_\fin$};

\draw[->] (init) -- (q1);
\draw[->] (q1) to node [above] {1} (n1m);
\draw[->] (q1) to node [below] {0} (n1p);
\draw[->] (n1m) -- (q2);
\draw[->] (n1p) -- (q2);
\draw[dotted] (q2) -- (qk);
\draw[->] (qk) to node [above] {1} (nkm);
\draw[->] (qk) to node [below] {0} (nkp);
\draw[->] (nkm) -- (qf);
\draw[->] (nkp) -- (qf);
\draw[->] (qf) -- (nf);

\node[below=2 of E] (Vj) {$V_j:$};
\node[right= of Vj]  {$n_\init \longrightarrow t'_j\longrightarrow t_{j+1} \longrightarrow  n_\fin$};

\node[below= of Vj] (Tjd) {$T_{j,d:}$};
\node[right= of Tjd]  {$n_\init \longrightarrow t_j\longrightarrow m_{j,d} \longrightarrow  r_{j,d} \longrightarrow  n_\fin$};

\node[below= of Tjd] (Tjd') {$T'_{j,d:}$};
\node[right= of Tjd']  {$n_\init \longrightarrow n_{j,d} \longrightarrow  t'_j \longrightarrow  n_\fin$};

\node[below=2 of Tjd'.west,anchor=west] (Pjd) {$P_{j,d}$ (with $l_{j,d}=x_i$):};
\node[right= of Pjd] (Pjd0){$n_\init$};
\node[right= of Pjd0] (Pjd1) {$m_{i-1}$}; 
\node[above right= of Pjd1] (Pjd2) {$n_i^+$};
\node[below right= of Pjd1] (Pjd3) {$n_i^-$};
\node[right= of Pjd2] (Pjd4) {$r_{j,d}$};
\node[right= of Pjd3] (Pjd5) {$n_{j,d}$};
\node[right=5 of Pjd1] (Pjd_fin) {$n_\fin$};

\path (Pjd2) edge[->] (Pjd4) 
(Pjd3) edge[->] (Pjd5) 
(Pjd0) edge[->] (Pjd1)
(Pjd1) edge[->] (Pjd2)
(Pjd1) edge[->] (Pjd3)
 ($(Pjd4.east) +(1,-.5)$) edge[->] (Pjd5)
  ($(Pjd5.east) +(1,.5)$) edge[->] (Pjd4)
 ;
 
\draw[->]
(Pjd4.east) -- ++(1, -.5) -- (Pjd_fin);
\draw[->]
(Pjd5.east) -- ++(1, .5) -- (Pjd_fin);

\node[below=3 of Pjd.west,anchor=west] (P) {$P_{j,d}$ (with $l_{j,d}=\overline{x_i}$):};
\node[right= of P] (P0){$n_\init$};
\node[right= of P0] (P1) {$m_i$};
\node[above right= of P1] (P2) {$n_i^+$};
\node[below right= of P1] (P3) {$n_i^-$};
\node[right= of P2] (P4) {$r_{j,d}$};
\node[right= of P3] (P5) {$n_{j,d}$};
\node[right=5 of P1] (P_fin) {$n_\fin$};

\path (P2) edge[->] (P5) 
(P3) edge[->] (P4) 
(P0) edge[->] (P1)
(P1) edge[->] (P2)
(P1) edge[->] (P3)
 ($(P4.east) +(1,-.5)$) edge[->] (P5)
  ($(P5.east) +(1,.5)$) edge[->] (P4)
 ;
 
\draw[->]
(P4.east) -- ++(1, -.5) -- (P_fin);
\draw[->]
(P5.east) -- ++(1, .5) -- (P_fin);

\end{tikzpicture}
\caption{Graphs of the processes of $\Nn$.}
\label{fig:processes}
\end{figure}

Observe that in the graph for the non-deterministic process $P_{j,d}$ we assume
$\ell_{j,d} \in\set{x_i,\overline{x_i}}$. After $n_\init$, the process goes to $m_i$, and then to $n_i^+$ or $n_i^-$ in a deterministic way, depending on the result chosen at $m_i$. 
The rest of its behavior depends on whether $\ell_{j,d}=x_i$ or $\ell_{j,d}=\overline{x_i}$.
If $\ell_{j,d}=x_i$, then after $n^+_i$ the process goes to
$r_{j,d}$, and then to one of $\{n_\fin, n_{j,d}\}$ (nondeterminism!). For the other case, 
see Figure \ref{fig:processes}.

Process $P_{j,d}$ is designed with the following purpose. 
If process $E$ sets literal $\ell_{j,d}$ to true, then $P_{j,d}$
(together with $T_{j,d}$) guarantees that $m_{j,d}$ is
executed before $n_{j,d}$; if $E$ sets literal $\ell_{j,d}$ to false, then $m_{j,d}$ and $n_{j,d}$ 
can occur in any order. In other words, in every successful run containing $n_i^+$:  if
$\ell_{j,d}=x_i$ then node $m_{j,d}$ appears before $n_{j,d}$; if $\ell_{j,d}=\overline{x_i}$
then the nodes $m_{j,d}$ and $n_{j,d}$ can appear in any order. Similarly for runs containing $n_i^-$, 
interchanging $x_i,\overline{x_i}$. 







  

It is easy to see that $\Nn$ is very weakly non-deterministic.
The only nondeterministic processes are the $P_{j,d}$ processes. 
Moreover, the sets $\{n_\fin, r_{j,d}\}$ and $\{n_\fin, n_{j,d}\}$
are the only two sets of atomic negotiations such that (a) there are configurations
such that $P_{j,d}$ is ready to engage in them, and (b) contain more than one element.
Since $n_\fin$ contains all processes, the condition for very weak non-determinism is clearly verified.

If the valuation chosen by $E$ makes all $c_j$ true, we claim that the
partial run corresponding to this valuation cannot be completed to a
successful run. Indeed, in this case for each $c_j$ there is a true literal
$\ell_{j,d}$, and so $P_{j,d}$ enforces that $m_{j,d}$ is
executed before $n_{j,d}$. We denote this by $m_{j,d} < n_{j,d}$. 
But then, since process $T_{j,d}$ enforces $t_j < m_{j,d}$, process $T'_{j,d}$ enforces 
$n_{j,d} < t_j'$, and process $V_j$ enforces $t_j' < t_{j+1 \it mod(m+1)}$, we get the cycle
$$t_1 < t'_1 < t_2 < \cdots t_m < t'_m < t_1$$ 
Since, say, $t_2$ cannot occur before and after $t_1$, because the deterministic process are acyclic,
the partial run cannot be completed. 

Otherwise, if the valuation makes at least one $c_j$
false, then no cycle is created and the partial run can be extended to a successful run.


Here is a more formal version of the proof. By construction,
$\dom(t_j)=\set{V_{j-1},T_{j,d} \mid d=1,2,3}$, $\dom(t'_j)=\set{V_j,T'_{j,d}
  \mid d=1,2,3}$, $\dom(m_{j,d})=\set{T_{j,d}}$,
$\dom(n_{j,d})=\set{T'_{j,d},P_{j,d}}$,
$\dom(r_{j,d})=\set{T_{j,d},P_{j,d}}$.

Let $\n :\set{1,\ldots,k} \to \set{0,1}$ be a valuation of the
variables $x_1,\ldots,x_k$. By $C_\n$ we denote the following
configuration (note that only the position of processes $P_{j,d}$
depends on the valuation):
\begin{itemize}
\item $C_\n(E)=n_\fin$
\item If the literal $\ell_{j,d}$ is true under $\n$ then
  $C_\n(P_{j,d})=r_{j,d}$, and otherwise $C_\n(P_{j,d})=n_{j,d}$.
\item $C_\n(T_{j,d})=t_j$, $C_\n(T'_{j,d})=n_{j,d}$
\item $C_\n(V_j)=t'_j$
\end{itemize}

The following property is easy to check:

\smallskip

\noindent
\emph{Fact~1.} Every configuration $C_\n$ is reachable. Moreover,
every maximal run has some trace-equivalent prefix that reaches one of
the configurations $C_\n$. 


\smallskip

\noindent
\emph{Fact~2.} If $\n$ does not satisfy the formula then there
is a successful run from $C_\n$.

\smallskip

Assume that $\n$ does not satisfy clause $c_j$. By Fact~1, 
nodes $n_{j,d}$, $d=1,2,3$, are enabled. By executing these nodes we can
reach a configuration  with $P_{j,d}$ in $\set{r_{j,d},n_\fin}$ and
$T'_{j,d}$ in $t'_j$. Now $t'_j$ is executable, and $V_j$ moves to node
$t_{j+1}$. Notice that $t_{j+1}$ is now executable, so  $T_{j+1,d}$
can move first to $m_{j+1,d}$, then to $r_{j+1,d}$, $d=1,2,3$ (and
$V_j$ goes to
$n_\fin$). Now either $P_{j+1,d}$ is already in $r_{j+1,d}$ or it can
come there after $n_{j+1,d}$ is executed. In the first case nodes
$r_{j+1,d},n_{j+1,d}$ can be executed (in this order), in the second
case $n_{j+1,d},r_{j+1,d}$ can be executed (in this order). In either case we can get to a
configuration where processes $T_{j+1,d}$ are in $n_\fin$ and 
processes $T'_{j+1,d}$ are in $t'_{j+1}$. Therefore, $t'_{j+1}$ is
also executable. By iterating this
argument we obtain a successful run executing $t'_j,t_{j+1}, t'_{j+1}, \ldots,
t_1, t'_1,\ldots,t_j$ in this order. 

\smallskip

\noindent
\emph{Fact~3.} If $\n$ does satisfy the formula then there
is no successful run from $C_\n$.

\smallskip

Suppose by contradiction that there is a successful
run $\s$ from $C_\n$. By assumption, for each $j$ there is some
$d=1,2,3$ such that $C_\n(P_{j,d})=r_{j,d}$. Therefore $m_{j,d}$ is
necessarily executed before $n_{j,d}$ in $\s$. By construction this
implies that $t_j$ is executed before $t'_j$ in $\s$. 
Also by construction, $t'_j$ must be executed before
$t_{(j+1)\mathit{mod}(m+1)}$, for 
every $j$.
This means that $t_1$ should be executed before $t'_m$, and $t'_m$
before $t_1$, a contradiction.
\end{proof}


\section{Beyond soundness}
\label{sec:beyond-sound} 

We show that the techniques we have developed for the soundness problem, 
like forks and the omitting game, can be used to 
check other functional properties of acyclic deterministic negotiations. 
Moreover, 
we show that soundness reduces the complexity of verifying these properties: 
while checking the property for arbitrary deterministic negotiations---sound or not---is an 
intractable problem, it is polynomial in the sound case. 

In Section \ref{subsub:races} we study the race problem: Given two atomic negotiations
$m$ and $n$, is there a run in which they occur concurrently? In Section \ref{subsub:data}
we address the static analysis of negotiations with data. We assume that the result 
of an atomic negotiation corresponds to some operations acting on a
set of variables.
We study some standard
questions of static analysis, for instance whether a variable can be allocated and then never
deallocated before the execution ends. These questions can be answered in exponential
time by constructing the reachability graph of the negotiation and applying standard model
checking algorithms. (This is the approach followed in~\cite{aalst}, which studies the questions
for workflow Petri nets.) 
We exhibit polynomial algorithms for the acyclic deterministic case.

\begin{defi}
\label{def:race}
Let $\Nn=\struct{\Proc,N,\dom,R,\d}$ be a negotiation. Two atomic negotiations
$m, n \in N$ can be \emph{concurrently enabled}, denoted $m\para n$, if 
$\dom(m) \cap \dom(n) = \emptyset$ and there is a reachable configuration $C$
of $\Nn$ where both $m$ and $n$ are enabled.
\end{defi}

\subsection{Races}
\label{subsub:races}

For a given pair of atomic negotiations $m,n$ of a deterministic negotiation
$\Nn=\struct{\Proc,N,\dom,R,\d}$, we want to determine if 
there is a \emph{race} between $m$ and $n$, i.e., if there is a reachable configuration
that concurrently enables $m$ and $n$.

This question was answered in \cite{KovEsp} for live and safe free-choice nets,
where a polynomial fixed point algorithm was given. The algorithm can also be applied
to sound deterministic negotiations (cyclic or acyclic), but has cubic complexity 
in the number of atomic negotiations. We show that in the acyclic case there is a simple
anti-pattern characterization of the race pairs $m, n$, which leads to
an algorithm
that runs in linear time and logarithmic space.

In the rest of the section we give a syntactic characterization of the pairs $m, n$ such that $m \para n$. 
We proceed in several steps. 
Lemma \ref{p:acyclic-parallel} shows the equivalence of 
the semantic definition with a more concrete, but still semantic, condition. 
Proposition \ref{p:reach} transforms 
this condition into the conjunction of a syntactic and a semantic condition.
Proposition \ref{p:localreach} replaces the latter by the existence of a certain fork.
The final result, given in Theorem \ref{thm:races}, just puts the two propositions together.

Recall that two runs $w,w' \in (N \times R)^*$ are equivalent if 
$w'$ can be obtained from $w$ by repeatedly exchanging adjacent pairs
$(m,a)(n,b)$ into $(n,b)(m,a)$ whenever $\dom(m) \cap \dom(n)=\es$.

\begin{lem}\label{p:acyclic-parallel}
 Let $\Nn$ be an acyclic, deterministic, sound negotiation, and let
  $m,n$ be two atomic negotiations in $\Nn$. Then $m \para n$ iff
  \emph{every} run $w$ from $n_\init$ containing both $m$ and $n$ has
  an equivalent run $w'=w_1w_2$ such that $w' = C_\init \act{w_1} C \act{w_2}
  C'$ for some configuration $C$ where both $m$ and $n$ are enabled.
\end{lem}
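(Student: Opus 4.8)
The plan is to prove the two implications separately, after recording one consequence of determinism: since $\Nn$ is deterministic, every reachable configuration maps each process to a \emph{singleton} (it does so at $\Cinit$, and each $\d(n,a,p)$ is a singleton). Hence if a reachable configuration $C$ enables two \emph{distinct} nodes $m,n$, no process can lie in both domains, so $\dom(m)\cap\dom(n)=\es$ comes for free. This makes the right-to-left direction immediate: given a run containing both $m$ and $n$, the assumed equivalent run $w_1w_2$ reaches a configuration $C$ enabling both, and by the remark $C$ is a witness for $m\para n$. (The one degenerate point is when \emph{no} run contains both $m$ and $n$; then the right-hand side is vacuous while $m\not\para n$, so one implicitly restricts to distinct $m,n$ for which some initial run contains both --- the only case relevant for races.)

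The heart is the left-to-right direction, and I would reduce it to a statement about the Mazurkiewicz trace of a run. Fix $m\para n$ and an arbitrary run $w$ containing both, with events $(m,a)$ and $(n,b)$. I claim these two events are \emph{causally incomparable} in $w$ (incomparable in the dependence order, where two events depend when their domains meet). Indeed, a dependence chain witnessing $(m,a)\le(n,b)$ is a sequence of events whose consecutive members share a process, and tracing each shared process along its local moves turns such a chain into a \emph{local path} in the graph of $\Nn$ from $m$ to $n$ (symmetrically, a chain the other way yields a local path $n\leadsto m$). So it suffices to establish the following key claim: \emph{if $m\para n$ then there is no local path between $m$ and $n$ in either direction.}

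I would prove the key claim from soundness and acyclicity through Lemma~\ref{l:local}. Suppose there were a local path $\pi=m\act{}\cdots\act{p_{l-1},a_{l-1}} n$, and let $C_0$ be a reachable configuration enabling both $m$ and $n$ (it exists because $m\para n$). Since the last process satisfies $p_{l-1}\in\dom(n)$ and $n$ is enabled in $C_0$, we have $C_0(p_{l-1})=n$. On the other hand $C_0$ enables $m$, so by Lemma~\ref{l:local} the path $\pi$ is realizable from $C_0$; any realizing run must fire the last edge $(m_{l-1},a_{l-1})$ with $p_{l-1}$ moving from $m_{l-1}$ to $n$, which forces $p_{l-1}$ to sit at $m_{l-1}$ at some moment. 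But the edge $m_{l-1}\to n$ gives $m_{l-1}\leN n$, and by acyclicity a process currently at $n$ can never reach the $\leqN$-smaller node $m_{l-1}$ --- a contradiction. The symmetric case $n\leadsto m$ is identical. Note this is exactly where soundness is essential: in an unsound negotiation one can have $m\para n$ and simultaneously a run in which $m$ causally precedes $n$.

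With the key claim the left-to-right direction is routine trace theory. As $(m,a)$ and $(n,b)$ are incomparable, the union of their strict pasts is an order ideal, so some linearization $w'\equiv w$ executes exactly this ideal as a prefix $w_1$ and then fires $(m,a)(n,b)$ consecutively. The configuration $C$ reached after $w_1$ enables both $m$ and $n$, because $w_1$ contains every event on which $(m,a)$ or $(n,b)$ causally depends, so all processes of $\dom(m)$ (resp.\ $\dom(n)$) have already been brought to $m$ (resp.\ $n$), while neither node has yet fired. Thus $w'=w_1w_2$ has the required form, completing the proof. I expect the main obstacle to be the key claim, and specifically the correct packaging of the causal/trace argument that converts dependence between the events into a graph path, so that the realizability statement of Lemma~\ref{l:local} combined with acyclicity can rule it out.
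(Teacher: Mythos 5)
Your proof is correct and follows essentially the same route as the paper's: both reduce the left-to-right direction to showing that $m \para n$ excludes any local path between $m$ and $n$ (via Lemma~\ref{l:local} realizability plus acyclicity), and both use the Mazurkiewicz-trace reordering to place $(m,a)$ and $(n,b)$ adjacently when they are causally incomparable. The only differences are cosmetic --- the paper derives the contradiction by firing $(n,b)$ first and then realizing the path to re-enable $n$, whereas you argue directly that the process sitting at $n$ cannot move back to the $\leqN$-smaller predecessor, and you are somewhat more explicit about the degenerate right-to-left case.
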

{
\begin{proof}
It suffices to show the implication from left to right. So assume
  that there exists some reachable configuration $C_0$ where both $m$
  and $n$ are enabled.  In particular, $\dom(m) \cap \dom(n)=\es$. 
By way of contradiction, let us suppose that there
  exists some run containing both $m$ and $n$, but this run cannot be
  reordered as in the statement of the lemma. 

Assuming that $m$ appears before $n$ in this run,
we claim that there must be some local path 
from $m$ to $n$ in $\Nn$. To see this, assume the contrary and
consider a run of the form $w=w_1 (m,a)w_2 (n,b) w_3$. The run $w$
defines a partial order (actually a Mazurkiewicz trace) $\tr(w)$ with nodes corresponding
to positions in $w$, and edges from $(m',c)$ to $(n',d)$ if
$\dom(m')\cap \dom(n')\not=\es$ and $(m',c)$ precedes $(n',d)$ in
$w$. Since there is no path from $m$ to $n$ in $\Nn$, nodes $(m,a)$
and $(n,b)$ are unordered in $\tr(w)$. So we can choose a topological order
$w'$ of $\tr(w)$ of the form $w'=w'_1 (m,a) (n,b) w'_2$. This shows the claim.

So let $\pi$ be a path in $\Nn$ from $m$ to $n$, say $m,
n_1,\ldots,n_k,n$. Let $p$ be 
some process such that $n_k \act{p,a'} n$ for some result $a'$.
  Let us go back to $C_0$. Since both $m$ and $n$ are enabled in $C_0$, we
  have a transition $C_0 \act{n,b} C_1$, for some $b \in\out(n)$. Note that $m$ is still enabled in $C_1$, since
  $\dom(m)\cap\dom(n)=\es$. So we can apply Lemma~\ref{l:local} to
  $C_1$ and $\pi$ (because $\Nn$ is sound), obtaining a
  configuration $C_2$ where $C_2(p)=n$. But since $n$ was
  executed before $C_1$, this violates the acyclicity of $\Nn$.
\end{proof}
}

The next step is to convert the condition from
Lemma~\ref{p:acyclic-parallel} to a condition on the graph of a
negotiation. 
This condition uses the notion of fork from Definition~\ref{def:fork}.
\begin{prop}\label{p:reach}
 Let $\Nn$ be an acyclic, deterministic, sound negotiation, and let
  $m,n$ be two atomic negotiations in $\Nn$ with $\dom(m) \cap \dom(n)=\es$. Then $m \para n$ if{}f
  there exists an initial  run containing both $m,n$, and 
  there is neither a local path from $m$ to $n$ nor a local path from $n$ to $m$.
  \end{prop}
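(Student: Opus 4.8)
The plan is to prove Proposition~\ref{p:reach} by relating it to the condition established in Lemma~\ref{p:acyclic-parallel}. Recall that lemma says $m \para n$ iff every initial run containing both $m$ and $n$ can be reordered so that some reachable configuration enables both simultaneously. So I would establish two things: first, that $m\para n$ forces the existence of an initial run containing both together with the absence of local paths in either direction; and second, the converse.

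For the forward direction, suppose $m \para n$. Then by definition there is a reachable configuration enabling both, hence (by soundness and Lemma~\ref{l:local}) an initial run containing both $m$ and $n$. It remains to rule out local paths. I would argue by contradiction: if there were a local path from $m$ to $n$, say $m \act{p,a} \cdots \act{p',a'} n$, then in any execution the node $m$ must be executed strictly before $n$ can become enabled, since the path forces a causal dependency through the intermediate nodes on some process. But $m\para n$ requires a configuration where both are enabled at once, and $\dom(m)\cap\dom(n)=\es$ already holds by hypothesis; the path would contradict acyclicity exactly as in the proof of Lemma~\ref{p:acyclic-parallel} (executing $n$ first and then realizing the path from $m$ via Lemma~\ref{l:local} produces a second occurrence of $n$, violating acyclicity). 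The symmetric argument rules out a local path from $n$ to $m$.

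For the converse, assume there is an initial run $w$ containing both $m$ and $n$, and neither a local path from $m$ to $n$ nor from $n$ to $m$. I would invoke Lemma~\ref{p:acyclic-parallel}: to conclude $m\para n$ it suffices to reorder \emph{some} such run so that a configuration enables both. Take the run $w$ and form its associated trace $\tr(w)$ as in the proof of the previous lemma. Since there is no local path between $m$ and $n$ in either direction, the nodes $(m,a)$ and $(n,b)$ in $w$ are unordered in the trace order (any ordering edge would have to be witnessed by a chain through shared domains, which, for deterministic negotiations, corresponds to a local path). Hence there is a topological linearization $w' = w_1\,(m,a)\,(n,b)\,w_2$ of $\tr(w)$, equivalent to $w$. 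In the configuration $C$ reached after $w_1$, both $m$ and $n$ are enabled, giving $m\para n$.

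The main obstacle I expect is the precise justification that ``no local path in either direction'' is equivalent to ``$(m,a)$ and $(n,b)$ are trace-unordered.'' The trace order is generated by immediate dependencies $\dom(m')\cap\dom(n')\neq\es$ along the run, so an ordering path in the trace gives a chain of nodes with pairwise intersecting domains, but turning this into a genuine \emph{local} path (a single process $p$ carrying the dependency through $\d$) relies on determinism: in a deterministic negotiation each such step is realized by a concrete edge $n'\act{p,c}n''$ in the graph. I would need to argue that a trace-dependency chain from $(m,a)$ to $(n,b)$ projects onto a local path in the graph of $\Nn$, using that the run is realizable and that determinism pins down the successor nodes. Care is also needed to handle the hypothesis $\dom(m)\cap\dom(n)=\es$ consistently, since without it $m\para n$ is false by definition and the statement is vacuous.
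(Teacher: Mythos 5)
Your proof is correct and follows essentially the same route as the paper's: the forward direction rules out local paths between $m$ and $n$ via the realizability/acyclicity contradiction from the proof of Lemma~\ref{p:acyclic-parallel}, and the converse uses the Mazurkiewicz-trace reordering to place $(m,a)$ and $(n,b)$ adjacently. The point you flag---that a trace-dependency chain projects to a local path only thanks to determinism---is indeed the step the paper also leaves implicit, and your sketch of how to close it is the right one.
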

\SKIP{
\begin{proof}
For the left-to-right implication, assume by contradiction that there
is some local path $\pi$ from $m$ to $n$. Consider some reachable configuration
$C$ such that $m$ is enabled in $C$. By Lemma~\ref{l:local} we also find a run $C
\act{*} C'$ such that $n$ is enabled in $C'$. But note that the run $C_\init \act{*} C
\act{*} C'$ cannot be reordered as stated in
Lemma~\ref{p:acyclic-parallel}, a contradiction.

For the converse, consider some run $w$ containing both $m,n$. Since
there are no local paths in $\Nn$ between $m,n$, we can reorder, as in
the proof of Lemma~\ref{p:acyclic-parallel}, the run $w$
into some $w'$
such that we find a configuration $C$ of $w'$ where both $m$ and $n$
are enabled. 
\end{proof}
}

\begin{prop}\label{p:localreach}
Let $\Nn$ be a sound deterministic negotiation, and let $m,n$ be two
atomic negotiations of $\Nn$. 
Then $\Nn$ has an initial run containing both $m$ and $n$ if{}f it has a fork $(p, q, m, n)$ for some $p,q$.
\end{prop}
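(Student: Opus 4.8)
The plan is to prove both directions by tracking, for each of $m$ and $n$, a single witnessing process, and to read the fork off the point where these two processes last synchronize. Throughout I use that reachable configurations of a deterministic negotiation assign a single node to each process.

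For the direction $(\Leftarrow)$, suppose $(p,q,m,n)$ is a fork, witnessed (Definition~\ref{def:fork}) by a node $\nu$, a result $a\in\out(\nu)$, and disjoint paths $\pi_1$ (a $p$-path from $\d(\nu,a,p)$ to $m$) and $\pi_2$ (a $q$-path from $\d(\nu,a,q)$ to $n$). First I would use Lemma~\ref{l:local} to realize the local path from $n_\init$ to $\nu$ from $\Cinit$, reaching a reachable $C$ that enables $\nu$, and then fire $C\act{(\nu,a)}C'$, so that $C'(p)$ and $C'(q)$ are the (singleton) first nodes of $\pi_1$ and $\pi_2$. The crux is a \emph{simultaneous realization} of the two disjoint paths: from any reachable $C'$ whose values $C'(p),C'(q)$ sit at the heads of disjoint paths $\pi_1,\pi_2$, one can reach $C''$ with $C''(p)=\{m\}$ and $C''(q)=\{n\}$. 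I would prove this by induction on $|\pi_1|+|\pi_2|$. By soundness there is a run from $C'$ to $\Cfin$; let $E$ be the first configuration along it at which one of the two current head nodes, say $x_0$ (the head of $\pi_1$), is fired. Before $E$ neither $p$ nor $q$ moved, so $E$ enables $x_0$ while $q$ is still parked at its head $y_0$. Since $\pi_1,\pi_2$ are node-disjoint, $x_0\neq y_0$, hence $q\notin\dom(x_0)$ (otherwise $x_0\in E(q)=\{y_0\}$); so I may instead fire $(x_0,a_0)$ with the path result, advancing $p$ one step without disturbing $q$, and invoke the induction hypothesis. Finally, from $C''$ soundness gives a run to $\Cfin$ in which (as $m,n\neq n_\fin$) both $p$ and $q$ must execute their parked nodes, so the whole initial run contains both $m$ and $n$.

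For the converse $(\Rightarrow)$, I would mimic the fork extraction in the proof of Lemma~\ref{lem:comp}. Given an initial run executing both $m$ and $n$, pick processes $p\in\dom(m)$ and $q\in\dom(n)$ and trace their trajectories through the run: $p$ follows a $p$-path $P$ from $n_\init$ to $m$, and $q$ a $q$-path $Q$ from $n_\init$ to $n$. Both contain $n_\init$, so they share a node; let $\nu$ be the last node (in the order of the run) lying on both $P$ and $Q$, and let $a$ be the result used when $\nu$ was fired. The continuations $\pi_1$ of $P$ and $\pi_2$ of $Q$ after $\nu$ are disjoint, since a shared node would lie on both $P$ and $Q$ and occur after $\nu$, contradicting the choice of $\nu$; their first nodes are $\d(\nu,a,p)$ and $\d(\nu,a,q)$ by determinism, and the prefix of $P$ up to $\nu$ is a local path from $n_\init$ to $\nu$. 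Hence $(p,q,m,n)$ is a fork.

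The main obstacle, and the step I would spend most care on, is the simultaneous realization in $(\Leftarrow)$: it is exactly where soundness and disjointness must be combined, and the point is to argue that advancing one path never forces the other process off its own path — which is what the ``first mover is enabled, and the other process is not in its domain'' argument secures. A secondary point in $(\Rightarrow)$ is the choice $p\neq q$: this is automatic when $\dom(m)\cap\dom(n)=\es$ (the case relevant to the race application in Proposition~\ref{p:reach}), while the degenerate situation $\dom(m)=\dom(n)=\{p\}$, in which the two trajectories coincide and no genuine fork is produced, must be excluded or handled separately.
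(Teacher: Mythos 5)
Your proof is correct and follows essentially the same route as the paper's: the left-to-right direction extracts the fork from the last synchronization of the two witnessing processes, and the right-to-left direction realizes the path to the branching node via Lemma~\ref{l:local} and then advances the two disjoint paths by induction on their total length, using soundness plus disjointness to show the other process is never disturbed. The caveat you flag about $\dom(m)\cap\dom(n)\neq\es$ (and $m=n$) is a genuine imprecision in the statement that the paper's own proof also glosses over, and it is harmless for the intended application in Proposition~\ref{p:reach}.
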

  \begin{proof}
    Right-to-left implication: the proof is similar to the one of
    Lemma~\ref{l:local}, but we need to consider three paths instead
    of a single one. First we realize the
    path from $n_\init$ to the branching node $n'$ (with result $a$) in the definition of fork, using Lemma~\ref{l:local}. Suppose that the run from $C_\init$
    to the configuration $C$ that enables $n'$, contains
    neither $m$ nor $n$ (otherwise another application of
    Lemma~\ref{l:local} suffices). Let $C \act{(n',a)} C_1$. We show
    by induction on the sum of the lengths of the two local paths of the fork
    how to construct a run containing both $m$ and $n$.  Let $C_1(p)=m_0$, $C_1(q)=n_0$ and let
\begin{equation*}
  m_0\act{p,a_0}\cdots\act{p,a_{k-1}} m_k=m
  \qquad\text{and}\qquad
  n_0\act{q,b_0}\cdots\act{q,b_{l-1}} n_l=n\,.
\end{equation*}
be the local paths of the fork. Since $\Nn$ is sound, some run leads from
$C_1$ to the final configuration. Since $\Nn$ is $p$ in $C_1$ can do
only $m_0$, and $q$ only $n_0$, the run  must
execute both $m_0$ and $n_0$ at some point. Suppose without loss of generality
 that $m_0$ is executed first. We have  $C_1 \act{*} C_2 \act{(m_0,a_0)} C_3$ for some $C_2,C_3$ such
that $C_3(p)=m_1$ (since $\Nn$ is deterministic) and $C_3(q)=n_0$. We can now apply the induction
hypothesis to the local paths $m_1 \act{*} m$, $n_0 \act{*} n$, and we
are done.

Left-to-right implication: By assumption $\Nn$ has a run from $n_\init$ of the form
$w=w_1 (m,b) w_2 (n,c)$. Choose some $p \in \dom(m), q
\in\dom(n)$. Let $(n',a)$ be the rightmost letter of $w_1$ such that
$\set{p,q} \subseteq \dom(n')$ (which exists because $\set{p,q} \subseteq \dom(n_\init)$ by definition), 
and let $m_0=\d(n',a,p)$, $n_0=\d(n',a,q)$. Then we can extract from $w_1$ a $p$-path leading from $m_0$ to
$m$, and from $w_1w_2$ a $q$-path leading from $n_0$ to $n$.  
By the choice of $n'$ these two paths are disjoint.
\end{proof}

From Proposition~\ref{p:reach} and \ref{p:localreach}  we immediately obtain:

\begin{thm}
\label{thm:races}
  For any acyclic, deterministic, sound negotiation $\Nn$ we can
  decide in linear time (resp., in logarithmic space)  whether two atomic negotiations $m,n$ of $\Nn$
  satisfy $m \para n$. The above problem is \NLOGSPACE-complete. 
\end{thm}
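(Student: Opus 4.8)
The plan is to fold Propositions~\ref{p:reach} and~\ref{p:localreach} into a single graph-theoretic characterization of the race relation, and then to observe that each ingredient of that characterization can be tested within the claimed bounds. Concretely, I would first record that for an acyclic, deterministic, sound $\Nn$ and nodes $m,n$ one has $m \para n$ if and only if three conditions hold: \textbf{(a)} $\dom(m)\cap\dom(n)=\es$; \textbf{(b)} there is a fork $(p,q,m,n)$ for some $p\in\dom(m)$, $q\in\dom(n)$; and \textbf{(c)} there is no local path from $m$ to $n$ and none from $n$ to $m$. This is exactly the ``immediate'' combination: Proposition~\ref{p:reach} says that, under (a), $m\para n$ is equivalent to the existence of an initial run containing both $m$ and $n$ together with (c), and Proposition~\ref{p:localreach} rephrases that initial run as the fork condition (b).

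The second step is to bound the cost of the three tests. Condition (a) is a direct comparison of two domains. Condition (c) is the complement of graph reachability between $m$ and $n$: a single graph search in each direction decides it in linear time, and since \NLOGSPACE\ is closed under complement (Immerman--Szelepcs\'enyi), it is also decidable in nondeterministic logarithmic space. The only delicate ingredient is (b). Unfolding Definition~\ref{def:fork}, a fork $(p,q,m,n)$ requires a node $n'$ reachable from $n_\init$ by a local path, a result $a\in\out(n')$ with $p\in\dom(n')\cap\dom(m)$ and $q\in\dom(n')\cap\dom(n)$, and two \emph{vertex-disjoint} paths: a $p$-path from $\d(n',a,p)$ to $m$ and a $q$-path from $\d(n',a,q)$ to $n$.

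The heart of the proof is thus deciding the existence of these two disjoint paths, and here acyclicity is what saves us: both paths are strictly increasing for the topological order $\leqN$. I would therefore search a product graph whose vertices are pairs $(x,y)$, with $x$ ranging over nodes that still admit a $p$-path to $m$ and $y$ over nodes that still admit a $q$-path to $n$, forbidding every diagonal vertex $(x,x)$ and always extending the coordinate whose current node is $\leqN$-smaller. For monotone paths this ``advance the smaller endpoint'' discipline makes ``never simultaneously at the same node'' coincide with global vertex-disjointness, so a disjoint witness exists iff the target $(m,n)$ is reachable in this product graph from an admissible split vertex $(\d(n',a,p),\d(n',a,q))$. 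Reachability in the product graph is an \NLOGSPACE\ predicate in the size of $\Nn$, which, together with guessing $n',a,p,q$, places (b)---and hence the whole race test---in \NLOGSPACE. For the time bound I would precompute, by backward search in each process-projected graph, the sets of nodes admitting a $p$-path to $m$ and a $q$-path to $n$, and then realize the same disjointness test by a single topological-order sweep so that it runs in linear time. I expect the main obstacle to be precisely this disjoint-path argument: one must verify carefully that the topological sweep neither overlooks a valid disjoint pair nor accepts a spurious one, and that the bookkeeping really fits simultaneously into logarithmic space and into linear time.

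Finally, for \NLOGSPACE-hardness I would reduce $s$--$t$ reachability in a directed graph to the race problem, in the spirit of the hardness reduction of Theorem~\ref{th:sound-det}: encode the input graph $G$ inside an acyclic, deterministic, sound negotiation and designate two nodes $m,n$ so that $m\para n$ holds exactly when the prescribed reachability (or, using $\NLOGSPACE=\coNLOGSPACE$, non-reachability) instance does, arranging the gadget so that (a) and (c) are automatically satisfied and only (b) depends on $G$. Combined with the membership argument, this settles \NLOGSPACE-completeness.
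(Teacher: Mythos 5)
Your high-level route is exactly the paper's: Theorem~\ref{thm:races} is obtained there by simply combining Proposition~\ref{p:reach} (under $\dom(m)\cap\dom(n)=\es$, a race is equivalent to the existence of an initial run containing both nodes plus the absence of local paths between them) with Proposition~\ref{p:localreach} (such a run exists iff a fork $(p,q,m,n)$ exists); the paper then states the complexity bounds with no further argument. Your conditions (a)--(c) are the correct conjunction, and your treatment of (a) and (c) is fine. Where you go beyond the paper is in actually justifying that fork existence is decidable within the claimed bounds, and your \NLOGSPACE{} argument is sound: the fork reduces to two vertex-disjoint monotone paths in a DAG, and the ``advance the $\leqN$-smaller pebble, forbid the diagonal'' product game is a correct characterization of disjointness there (one can check that a collision in the product game forces a shared vertex on the traced paths and conversely), so guessing $n',a,p,q$ and doing reachability in the quadratic-size product graph is indeed in \NLOGSPACE. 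Your hardness sketch is also in the right spirit and can be made to work with an acyclic, sound gadget in which only the fork condition depends on the input graph.

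The one genuine gap is the linear-time claim. The product graph you search has $\Theta(|N|^2)$ vertices (and you must moreover range over the choices of $p\in\dom(m)$, $q\in\dom(n)$ and of the splitting pair $(n',a)$), so ``a single topological-order sweep'' over it cannot run in time linear in $|\Nn|$; as described, your deterministic algorithm is at best quadratic. To get linear time you would need an additional structural argument --- e.g.\ showing that in a \emph{sound} negotiation the realizing run forces the fork's $p$-path to avoid all nodes whose domain contains $q$ (and symmetrically), which would replace the disjoint-paths test by two independent forbidden-vertex reachability queries --- but you neither state nor prove such a reduction. To be fair, the paper itself offers no justification for the linear-time bound either, so this is a gap you share with the source rather than a wrong turn; still, as written your proposal establishes \NLOGSPACE-completeness but not the linear-time statement.
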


It is not difficult to show that soundness is essential for this result. Indeed, the race problem 
is \NP-hard for acyclic and deterministic, but not necessarily sound, negotiations. We sketch a reduction 
from 3CNF-SAT. Given a 3CNF-formula $\phi$, we construct a deterministic negotiation $\Nn_\phi$ 
with distinguished atomic negotiations $n$, $m$ and $n_\mathit{true}$
satisfying two properties: (a) $\Nn_\phi$ has a run enabling
$n_\mathit{true}$ if{}f $\phi$ is satisfiable, and (b) $n$ and $m$ are
concurrently enable iff $n_\mathit{true}$ is executed.
Clearly, (a) and (b) imply that there is a race between $n$ and $m$ if{}f $\phi$ is satisfiable.

$\Nn_\phi$ has a process  $C_j$ for each clause $c_j$, and a proces $P_{i,j}$ for each clause $c_j$ and variable $x_i$ occurring in $c_j$ (positively or negatively). Besides the initial and final atomic negotiations,
the nodes of $\Nn_\phi$ are:
\begin{itemize}
\item a node ${\it SetVariable}_i$ for each variable $x_i$, with all $P_{i, j}$ as participants and two outcomes $\texttt{true}$ and $\texttt{false}$; 
\item a node ${\it GuessLiteral}_j$ for each clause $c_j$, with $C_j$ as only participant and three outcomes, one for each literal of $c_j$; 
\item a node ${\it TrueLiteral_{i, j}}$ for each clause $c_j$ and variable $x_i$ occurring in $c_j$, with $P_{i,j}$ and $C_j$ as participants and one outcome $\texttt{true}$; 
\item a node $n_\mathit{true}$, with all $C_j$ as participants, and also one outcome $\texttt{true}$; and 
\item nodes $n$ and $m$, with participants $C_1$ and $C_2$, respectively. 
\end{itemize}
The transition function is designed so that $\Nn_\phi$ behaves as
follows. The initial atomic negotiation concurrently enables all nodes
${\it SetVariable}_i$ and ${\it GuessLiteral}_j$. At
${\it SetVariable}_i$ processes $P_{i, j}$'s jointly select a truth
assignment for $x_i$, and process $P_{i,j}$ gets ready to engage in 
${\it TrueLiteral_{i, j}}$ if the chosen assignment makes the
corresponding literal true. 
At ${\it GuessLiteral}_j$ process $C_j$ guesses a literal of $c_j$,
say with variable $i$, and gets ready to engage in ${\it TrueLiteral_{i,
j}}$.
If ${\it TrueLiteral_{i, j}}$ gets enabled, action $\texttt{true}$ is
executed and $C_j$ becomes ready to engage in $n_\mathit{true}$. 
So $n_\mathit{true}$ 
becomes enabled if{}f the assignment guessed at ${\it SetVariable}$
nodes  satisfies all clauses and at ${\it GuessLiteral}_j$ true
literals are guessed correctly. 
After $n_\mathit{true}$ occurs, processes $C_1$
and $C_2$ become ready to engage in $n$ and $m$, respectively, and so
$n$ and $m$ become enabled. Therefore, if $\phi$ is satisfiable, then
$n$ and $m$ can be concurrently enabled by selecting a satisfying
assignment and correctly guessing true literals. If $\phi$ is
unsatisfiable, then $n_\mathit{true}$ can never occur, and so $n$ and
$m$ cannot occur either. It is easy to see that the transition
function is deterministic.

\subsection{Negotiations with data}
\label{subsub:data}

A negotiation with data is a negotiation over a given
set $X$ of variables  (over finite domains).
Each result $(n,a) \in N \times R$
comes with a set $\S$ of operations on the shared variables. In our examples this set $\S$ is composed of
$\all(x)$, $\rd(x)$, $\wrt(x)$, and $\deall(x)$. 

Formally, a~\emph{negotiation with data} is a negotiation with one
additional component: 
$\Nn=\struct{\Proc,N,\dom,R,\d,\ell}$ where $\ell \colon (N \times R) \to \Pp({\S\times X})$ maps
every result to a (possibly empty) set of data operations on variables from $X$. We assume that 
for each $(n,a) \in N \times R$
and for each variable $x \in X$ the label $\ell(n,a)$ contains at most one operation 
on $x$, that is, at most one element of $\S \times \{x\}$.

As an example, we enrich the deterministic negotiation on the left of Figure~\ref{fig:example1} with data
(this example is adapted from ~\cite{DBLP:conf/caise/TrckaAS09}).
The negotiation is shown again in Figure~\ref{fig:example2}. Its results
perform operations on two variables $x_1$ and $x_2$. 
The table on the right of the figure gives for each result and for each operation
the (indices of the) variables to which the result applies this operation. Observe that
the atomic negotiation $n_5$ has one result that is not represented graphically.

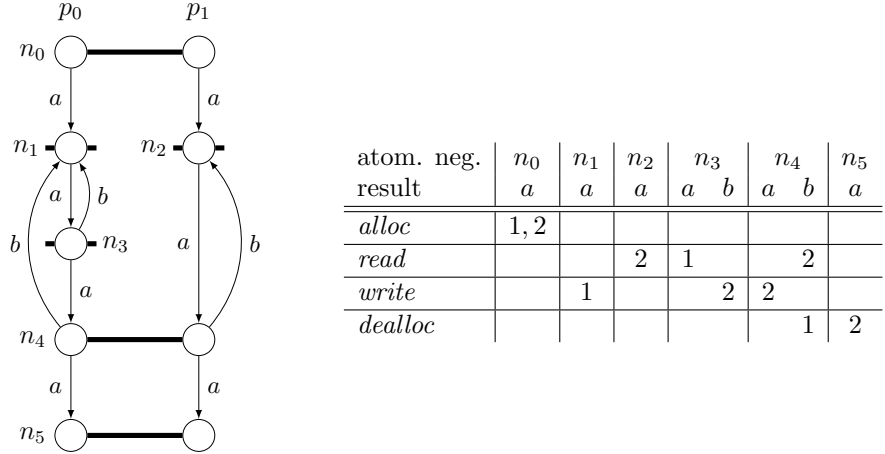
\begin{figure}[h]
\raisebox{-2.7cm}{\scalebox{0.85}{\begin{tikzpicture}
\nego[ports=2,id=n0,spacing=2]{0,0}
\node[left = 0cm of n0_P0, font=\large] {$n_0$};
\nego[ports=1,id=n1, spacing=2]{0,-1.5}
\node[left = 0.1cm of n1_P0, font=\large] {$n_1$};
\nego[ports=1,id=n2]{2,-1.5}
\node[left = 0.1cm of n2_P0, font=\large] {$n_2$};
\nego[ports=1,id=n3]{0,-3}
\node[right = 0.1cm of n3_P0, font=\large] {$n_3$};
\nego[ports=2,id=n4, spacing=2]{0,-4.5}
\node[left = 0cm of n4_P0, font=\large] {$n_4$};
\nego[ports=2,id=n5, spacing=2]{0,-6}
\node[left = 0cm of n5_P0, font=\large] {$n_5$};

\node[above = 0.1cm of n0_P0, font=\large] {$p_0$};
\node[above = 0.1cm of n0_P1, font=\large] {$p_1$};

\pgfsetarrowsend{latex}
\draw (n0_P0) edge node[left=0] {$a$} (n1_P0);
\draw (n0_P1) edge node[right=0] {$a$} (n2_P0);
\draw (n1_P0) edge node[left=0] {$a$} (n3_P0);
\draw (n2_P0) edge node[left=0] {$a$} (n4_P1);
\draw (n3_P0) edge node[right=0] {$a$} (n4_P0);
\draw (n3_P0) edge[bend right=30] node[right=0] {$b$} (n1_P0);
\draw (n4_P0) edge[bend left=40] node[left=0] {$b$} (n1_P0);
\draw (n4_P1) edge[bend right=40] node[right=0] {$b$} (n2_P0);
\draw (n4_P0) edge node[left=0] {$a$} (n5_P0);
\draw (n4_P1) edge node[right=0] {$a$} (n5_P1);

\end{tikzpicture}}} 
\qquad
{\small
\begin{tabular}{l|c|c|c|cc|cc|cccc}
atom. neg.    & $n_0$   & $n_1$ & $n_2$ & \multicolumn{2}{c|}{$n_3$} & \multicolumn{2}{c|}{$n_4$}  &  $n_5$ \\ 
result       & $a$     & $a$   & $a$   & $a$ & $b$                 & $a$ & $b$                  &  $a$   \\ \hline \hline
{\it alloc}   & $1,2 $  &       &       &     &                     &                            &        \\  \hline
{\it read}    &         &       & $2$   & $1$ &                     &     & $2$                  &        \\  \hline
{\it write}   &         & $1$   &       &     & $2$                 & $2$ &                               \\  \hline
{\it dealloc} &         &       &       &     &                     &     & $1$                  &  $2$   
\end{tabular}}
\caption{A negotiation operating on data.}
\label{fig:example2}
\end{figure}

In \cite{DBLP:conf/caise/TrckaAS09} some examples of data
specifications for workflows are considered. 

\begin{enumerate}
\item[(1)] {\it Inconsistent data}: an atomic negotiation reads or writes a variable $x$
while another atomic negotiation is writing, allocating, or
deallocating it in parallel. In our example there is a run
in which $(n_2, a)$ and $(n_3,b)$ read and write to $x_2$ in parallel. 
\item[(2)] {\it Weakly redundant data}: there is a run in which a variable is written and never read  
before it is deallocated or the run ends. In the example, there is a run in which $x_2$ is written by
$(n_3, b)$, and never read again before it is deallocated by $(n_5, a)$. 
\item[(3)] {\it Never destroyed}: there is an execution in which a variable is allocated and then never 
deallocated before the execution ends. The example has a run which never takes $(n_4,b)$. In this run the variable
$x_1$ is never deallocated.
\end{enumerate}

It is easy to give algorithms for these properties that are polynomial {\em in the size of the 
reachability graph}. We give the first algorithms that check these properties in 
polynomial time {\em in the size of the negotiation}, which can be exponentially smaller than 
its reachability graph. 

For the first property
we can directly use the algorithm for the race problem: It suffices to check 
if the negotiation has two
results $(m,a), (n,b)$ such that $m \para n$  and 
there is a variable $x$ such that $\ell(a) \cap \{\rd(x),\wrt(x) \} \neq \emptyset$ and
$\ell(b) \cap \{\wrt(x), \all(x), \deall(x) \} \neq \emptyset$.

In the rest of the section we present a polynomial algorithm for the
following abstract problem, which has 
the problems (2) and (3) above as special instances.

Given sets $\Oo_1, \Oo_2, \Oo \subseteq N \times R$ 
we say
that the negotiation $\Nn$ \emph{violates the specification $(\Oo_1,
  \Oo_2,\Oo)$} if there is a successful run $w = (n_1, a_1) \cdots (n_k, a_k)$  
of $\Nn$, and indices 
$0 \leq i < j \leq k$ such that $(n_i, a_i) \in \Oo_1$, $(n_j, a_j) \in
\Oo_2$, and $(n_l,a_l) \notin \Oo$ for all $i<l<j$. In this case we
also say that $(\Oo_1,
  \Oo_2,\Oo)$ is violated at $(n_i,a_i),(n_j,a_j)$. Otherwise
$\Nn$ \emph{complies with} $(\Oo_1, \Oo_2,\Oo)$.

\begin{exa}
In order to simplify notations below, we assume that $n_\fin$ has some
result - if not, we can add a self-loop to $n_\fin$. 

Observe that a variable $x$ is weakly redundant (specification of type
  (2)) if{}f $\Nn$ violates $(\Oo_1, \Oo_2,\Oo)$, 
where $\Oo_1= \set{(n,a) \in N \times R : \wrt(x) \in \ell(n,a)}$, 
$\Oo_2= \set{(n,b) \in N \times R : n = n_{{\it fin}} \vee \deall(x)
  \in \ell(n,b)}$ and $\Oo=\set{(n,c) : \ell(n,c) \cap
  (\S\times\set{x}) \not=\es}$.

A variable $x$ is never destroyed (specification of type (3)) iff $\Nn$ violates $(\Oo_1,
\Oo_2,\Oo)$, where $\Oo_1= \set{(n,a) \in N \times R : \all(x) \in
  \ell(n,a)}$, $\Oo_2= \set{n_{{\it fin}}}$, $\Oo=\set{(n,c) : 
   n=n_{{\it fin}} \vee \ell(n,c) \cap \set{\all(x),\deall(x)} \not=\es }$.
\end{exa}

For the next proposition it is convenient to use the notation $(m,a)
\act{+} (n,b)$, for $m,n \in N$, $a \in \out(m)$, $b \in\out(n)$ whenever there is a
(non-empty) local path in $\Nn$ from $\d(m,a,p)$ to $n$, for some $p
\in\dom(m)$. 

\begin{prop}\label{p:counterex}
  Let $\Nn$ be an acyclic, deterministic, sound negotiation with data,
  and $(\Oo_1, \Oo_2,\Oo)$ a specification. Let $(m,a) \in \Oo_1$,
  $(n,b) \in\Oo_2$.  Then $\Nn$ violates $(\Oo_1,\Oo_2,\Oo)$ at
  $(m,a),(n,b)$ iff $(n,b) \not\act{+} (m,a)$ and  
 $\Nn$
  has a successful run containing $P=\set{(m,a),(n,b)}$, and
  omitting the set $B=\set{(m',c) \in \Oo : (m,a) \act{+} (m',c)
    \act{+} (n,b)}$.
\end{prop}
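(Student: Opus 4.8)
The plan is to reduce the whole statement to two facts relating the causal (Mazurkiewicz) order $\tr(w)$ of a successful run $w$ to the graph relation $\act{+}$, and then read off both implications. Throughout, for a given run $w$ I write $s,t$ for the positions carrying $(m,a)$ and $(n,b)$, and order positions of $w$ by $<_{\tr(w)}$, the transitive closure of ``occurs earlier in $w$ and shares a process''. I would first isolate two lemmas.

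\emph{(Trace-to-path.)} If a position $x$ of $w$ carries $(u,d)$ (node $u$ fired with result $d$) and $x<_{\tr(w)}y$ with $y$ carrying node $v$, then $(u,d)\act{+}v$. This is immediate: a witnessing $\tr(w)$-chain from $x$ to $y$ consists of consecutive positions sharing a process $q_l$, and since a deterministic process traverses nodes along a local $q_l$-path in increasing $\leqN$-order, the chain assembles into a local path from $u$ to $v$; as $u$ fired with $d$, its first edge is $u\act{q_0,d}\d(u,d,q_0)$, so the path in fact starts at $\d(u,d,q_0)$, giving $(u,d)\act{+}v$.

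\emph{(Causality --- the main obstacle.)} If $(x,a)\act{+}y$ and both $(x,a)$ and $y$ occur in a successful run $w$, then $(x,a)$ occurs before $y$. This is where soundness is essential, and it is the delicate step, because one may \emph{not} simply replay the witnessing path $\pi'$ inside $w$ (the results chosen in $w$ need not follow $\pi'$). Suppose $y$ occurred first. Let $u_1=\d(x,a,p_0)$ be the first node of $\pi'$; then $u_1\neq n_\fin$ since $n_\fin$ has no outgoing local path. After $(x,a)$ fires, $p_0$ sits at $u_1$, and since $w$ is successful $p_0$ must leave $u_1$, so $u_1$ fires later in $w$; just before it fires we pass through a reachable configuration $C'$ enabling $u_1$. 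By Lemma~\ref{l:local}, $\pi'$ is realizable from $C'$, so some run from $C'$ drives the last process $p_{k-1}\in\dom(y)$ of $\pi'$ into $y$. But $y$ fired before $C'$, so at $C'$ every process of $\dom(y)$ is already past $y$, and by acyclicity none can ever return to $y$ --- a contradiction. The whole point is to invoke Lemma~\ref{l:local} at the enabling configuration $C'$ rather than to trace $\pi'$ through $w$.

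With these two lemmas the implications are short. For ``$\Leftarrow$'': given a successful run $w$ containing $P$ and omitting $B$ with $(n,b)\not\act{+}(m,a)$, Trace-to-path gives $t\not<_{\tr(w)}s$ (else $(n,b)\act{+}(m,a)$), and it also rules out any $\Oo$-position $z$ with $s<_{\tr(w)}z<_{\tr(w)}t$, for such a $z=(m',c)$ would satisfy $(m,a)\act{+}(m',c)\act{+}(n,b)$, hence $(m',c)\in B$, contradicting that $w$ omits $B$. As no $\Oo$-position is $\tr(w)$-forced strictly between $s$ and $t$, a routine reordering of $\tr(w)$ --- one that keeps exactly the positions strictly $\tr(w)$-between $s$ and $t$ between $(m,a)$ and $(n,b)$ and pushes every other position before $(m,a)$ or after $(n,b)$, which exists precisely because $t\not<_{\tr(w)}s$ --- produces an equivalent, hence successful, run that witnesses a violation at $(m,a),(n,b)$. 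For ``$\Rightarrow$'': take a successful run $w$ witnessing the violation, so $(m,a)$ precedes $(n,b)$ with no $\Oo$-pair between them. If $(n,b)\act{+}(m,a)$ held, Causality would force $n$ before $m$, contradicting the order in $w$, so $(n,b)\not\act{+}(m,a)$; and $w$ itself contains $P$ and omits $B$, since any $(m',c)\in B$ occurring in $w$ satisfies $(m,a)\act{+}(m',c)$ and $(m',c)\act{+}(n,b)$, whence by Causality it lies strictly between $(m,a)$ and $(n,b)$ in $w$, and being in $\Oo$ this contradicts the violation. This settles both directions.
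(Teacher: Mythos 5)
Your proof is correct and follows essentially the same route as the paper's: both directions rest on reordering a run within its Mazurkiewicz trace and on the correspondence between the trace order and the $\act{+}$ relation. The only difference is one of completeness rather than approach --- your explicit ``Causality'' lemma (proved via Lemma~\ref{l:local} at the configuration enabling $\d(x,a,p_0)$, plus acyclicity) supplies the justification for the set inclusion that the paper's left-to-right direction merely asserts, and it also discharges the requirement $(n,b)\not\act{+}(m,a)$, which the paper's proof of that direction passes over in silence.
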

\SKIP{
\begin{proof}
  For the right-to-left direction: assume that $\Nn$ has a run $w$ as
  claimed. Since $(n,b) \not\act{+} (m,a)$,  $w$ can be assumed to be of the form $w=w_1
  (m,a) w_2 (n,b) w_3$. By reordering $w$ we may suppose that
  for every $(m',c)$ in $w_2$, we have $(m,a) \act{+} (m',c) \act{+} (n,b)$. Thus,
  since $w$ omits $B$ this means that $(m',c) \notin \Oo$, so the
  claim follows.

For the left-to-right direction: if $\Nn$ violates $(\Oo_1,\Oo_2,\Oo)$ at
  $(m,a),(n,b)$ then there is a run $w = (n_1, a_1) \cdots (n_k,
  a_k)$ with $(n_i,a_i)=(m,a)$, $(n_j,a_j)=(n,b)$ and such that
  $(n_l,a_l) \notin \Oo$ for all $i<l<j$. Since $(\set{(m',c) : (m,a)
    \act{+} (m',c) \act{+} (n,b)} \cap \set{n_i : 1\leq i\leq k}) \subseteq
  \set{n_l : i<l<j}$, the run $w$ contains $(m,a),(n,b)$ and omits
  $B$. 
\end{proof}
}

\begin{rem}
  Note that Proposition~\ref{p:counterex} refers to omitting pairs of
atomic negotiation/result, whereas the original omitting problem
refers to omitting atomic negotiations. However, it is straightforward to adapt
the omitting problem and the corresponding results as to handle pairs.
\end{rem}

Putting together Proposition~\ref{p:counterex} and Theorem~\ref{thm:omitting}
we obtain:

\begin{cor}
  Given an acyclic, deterministic, sound negotiation with data  $\Nn$,
  and  a specification $(\Oo_1, \Oo_2,\Oo)$, it can be checked in
  polynomial time whether $\Nn$ complies with $(\Oo_1, \Oo_2,\Oo)$.
\end{cor}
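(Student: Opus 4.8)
The plan is to reduce the compliance check to polynomially many instances of the omitting problem, each solvable by Theorem~\ref{thm:omitting}. By definition, $\Nn$ complies with $(\Oo_1,\Oo_2,\Oo)$ if and only if there is no pair $(m,a)\in\Oo_1$, $(n,b)\in\Oo_2$ at which the specification is violated. Since there are at most $|N\times R|^2$ such pairs, a polynomial number, it suffices to decide in polynomial time, for each fixed pair, whether $(\Oo_1,\Oo_2,\Oo)$ is violated at $(m,a),(n,b)$; the algorithm then reports a violation as soon as some pair passes the test, and otherwise declares compliance.

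First I would, for a fixed pair, invoke Proposition~\ref{p:counterex}, which says that violation at $(m,a),(n,b)$ is equivalent to the conjunction of the graph condition $(n,b)\not\act{+}(m,a)$ and the existence of a successful run of $\Nn$ that contains $P=\set{(m,a),(n,b)}$ and omits $B=\set{(m',c)\in\Oo : (m,a)\act{+}(m',c)\act{+}(n,b)}$. The relation $\act{+}$ on node/result pairs is a reachability relation in the graph of $\Nn$: by its definition, $(m,a)\act{+}(n,b)$ holds exactly when there is a non-empty local path from $\d(m,a,p)$ to $n$ for some $p\in\dom(m)$. Consequently the whole relation, the test $(n,b)\not\act{+}(m,a)$, and the set $B$ can all be computed in polynomial time by standard graph reachability.

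It then remains to decide the second conjunct, namely the existence of a successful run containing $P$ and omitting $B$. This is precisely an instance of the omitting problem with $|P|=2$, so taking the constant $K=2$ and applying Theorem~\ref{thm:omitting} to the sound, acyclic, and deterministic negotiation $\Nn$ yields a polynomial-time decision procedure for it. The only point requiring care is that Theorem~\ref{thm:omitting} is phrased for omitting \emph{atomic negotiations}, whereas here $B$ is a set of node/result \emph{pairs}; this is exactly the adaptation recorded in the Remark following Proposition~\ref{p:counterex}, and I would rely on it to run the omitting algorithm on pairs.

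Combining the three steps gives the algorithm: loop over the polynomially many candidate pairs, and for each compute $B$ and the reachability test in polynomial time and then call the polynomial-time $2$-omitting procedure. The total running time is polynomial. The main (and essentially only) obstacle is the bookkeeping gap between omitting nodes and omitting pairs, which is dispatched by the Remark; everything else is either a direct application of Proposition~\ref{p:counterex} or elementary graph reachability.
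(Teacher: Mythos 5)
Your proposal is correct and follows exactly the route the paper intends: the paper simply states that the corollary follows by ``putting together'' Proposition~\ref{p:counterex} and Theorem~\ref{thm:omitting}, and your write-up is a faithful elaboration of that combination (enumerating the polynomially many pairs, computing $\act{+}$ and $B$ by graph reachability, and invoking the $2$-omitting procedure adapted to pairs as in the remark).
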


\section{Conclusions}\label{sec:conc}

Verification questions for finite-state of concurrent systems are very often \PSPACE-hard because
of the state explosion problem. One approach to this challenge is to search for restrictions in the
communication primitives that permit non-trivial interactions, and yet are 
algorithmically easier to analyze. We have shown that negotiations 
are a suitable model for this line of attack. On the one hand, non-deterministic processes can simulate 
any other communication primitive (up to reasonable equivalences); on the other hand,
the definition of non-determinism immediately suggests to investigate the deterministic and weakly 
non-deterministic classes. Even the deterministic negotiations have enough expressive power for interesting
applications in the workflow modeling domain. Indeed, they are equivalent to workflow free-choice nets 
\cite{DBLP:journals/topnoc/DeselE16}, and acyclicity and 
free-choiceness are quite common: about 70\% of the industrial workflow nets 
of ~\cite{van2007verification,fahland2009instantaneous,DBLP:conf/fase/EsparzaH16} 
are free-choice, and about 60\% are both acyclic and free-choice (see e.g. the table at 
the end of \cite{DBLP:conf/fase/EsparzaH16}). We have shown that a number of verification 
problems for sound deterministic negotiations can be solved in \PTIME\ or even in
\NLOGSPACE. 

\subsection*{Connection to workflow Petri nets.} The connection between negotiations and Petri nets is
studied in detail in \cite{DBLP:journals/topnoc/DeselE16}. We explain the consequences of our results for 
workflow nets. The following discussion assumes that the reader is familiar with workflow Petri nets and the 
concept of an S-component of a net.

Let us explain how using the results of this paper
we can give a \NLOGSPACE-algorithm to check soundness of free-choice workflow nets, assuming that the input
is not only the workflow net itself, but also a certain decomposition of the net into S-components.

Before proceeding, we need to examine the definitions of soundness used in this paper and 
in \cite{DBLP:journals/topnoc/DeselE16}. There exist many notions of soundness for workflow Petri nets 
\cite{DBLP:journals/fac/AalstHHSVVW11}. Soundness as defined in this paper corresponds to weak soundness 
in \cite{DBLP:journals/fac/AalstHHSVVW11}. Fortunately, it has been shown in \cite{DBLP:journals/tsc/Liu14a} that 
soundness and weak soundness coincide for free-choice workflow Petri nets, and so we do not need to worry about the
differences between the two.

The following discussion refers to results
in~\cite{DBLP:journals/topnoc/DeselE16}. 
There, Desel and Esparza describe a mapping that assigns to every 
negotiation a so-called in/out net (Proposition 8) satisfying the following properties:
\begin{enumerate}
\item the in/out net is \emph{decomposable}: it admits a covering by 
S-components such that any two distinct S-components
share the initial and final places, and no other place. 
\item the in/out net is only linearly larger than the deterministic negotiation,
and can be constructed in linear time. 
\end{enumerate}
\noindent Further, \cite{DBLP:journals/topnoc/DeselE16} proves the following two results:
\begin{enumerate}[label=(\alph*)]
\item The in/out net of a sound deterministic negotiation is a sound free-choice workflow net
(Corollary 4 and Proposition 7).
\item Every sound decomposable free-choice workflow net is the in/out net of a sound deterministic negotiation
(Proposition 9). 
\end{enumerate}
\noindent Finally, it follows easily from the proof of Proposition 9 that the sound deterministic negotiation
of (b) can be constructed in linear time and logarithmic space from the net \emph{and} its covering by S-components.
This shows that the following problem is in \NLOGSPACE: Given a decomposable free-choice workflow net $W$, and a decomposition of $W$ into S-components, is $W$ sound?

Our results on acyclic negotiations also have consequences for workflow nets.
It is known that soundness and weak soundness of 1-safe acyclic workflow Petri nets 
are \coNP-complete, and so most likely not solvable in polynomial time \cite{DBLP:journals/tsmc/TipleaBC15}. 
Since the in/out nets of acyclic negotiations are also acyclic, 
Theorem \ref{th:weak-sound} of the present paper identifies a proper superclass of acyclic free-choice workflow nets 
for which weak soundness remains polynomial: the workflow nets derived from acyclic, weakly 
nondeterministic negotiations. This class does not coincide with any of the known Petri net classes beyond 
free-choice, like asymmetric choice nets. We see this as a point in favor of the negotiation model:
it can be used to investigate classes of nets beyond the free-choice class for which some important 
properties can be efficiently decided.

\subsection*{Future work.}  There are several open problems we are currently working on, or intend to address.
We do not know if the soundness problem for acyclic weakly 
non-deterministic negotiations is \PTIME-complete.  Also, we conjecture that the polynomial algorithms
for acyclic deterministic negotiations of Section \ref{sec:beyond-sound} can be extended to the
cyclic case. More generally, we would like to have a better understanding of which 
verification problems for sound deterministic negotiations can be solved in \PTIME.  
Since these negotiations are not closed under products with automata,
we should not expect to be able to polynomially decide arbitrary safety properties. 
Finally, the analogous question for sound weakly non-deterministic negotiations and the class
\NP\ is also increasingly interesting, due to the important advances in SMT tools.

\subsection*{Acknowledgements} We thank the referees for careful
reading of the paper. 
\igwin{added Acknowledgements}



\bibliographystyle{plainurl}
\bibliography{biblio}

\end{document}